\let\csname equation*\endcsname\relax
\let\csname endequation*\endcsname\relax
\newtheorem{proposition}{Proposition}
\newtheorem{example}{Example}
\newtheorem{corollary}{Corollary}
\newcommand{\bra}[1]{\langle #1|}
\newcommand{\ket}[1]{| #1 \rangle }
\newcommand{\ip}[2]{{\langle #1|}{ #2 \rangle }}
\begin{document}

    \title{Absolutely separating quantum maps and channels}

    \author{S N Filippov$^{1,2}$, K Yu Magadov$^1$ and M A Jivulescu$^3$ }

    \address{$^1$ Moscow Institute of Physics and Technology,
        Institutskii Per. 9, Dolgoprudny, Moscow Region 141700, Russia}

    \address{$^2$ Institute of Physics and Technology of the Russian
        Academy of Sciences, Nakhimovskii Pr. 34, Moscow 117218, Russia}

    \address{$^3$ Department of Mathematics, Politehnica University of
        Timi\c{s}oara, Victoriei Square 2, 300006 Timi\c{s}oara, Romania}

    \ead{sergey.filippov@phystech.edu}

\begin{abstract}
Absolutely separable states $\varrho$ remain separable under
arbitrary unitary transformations $U \varrho U^{\dag}$. By example
of a three qubit system we show that in multipartite scenario
neither full separability implies bipartite absolute separability
nor the reverse statement holds. The main goal of the paper is to
analyze quantum maps resulting in absolutely separable output
states. Such absolutely separating maps affect the states in a
way, when no Hamiltonian dynamics can make them entangled
afterwards. We study general properties of absolutely separating
maps and channels with respect to bipartitions and multipartitions
and show that absolutely separating maps are not necessarily
entanglement breaking. We examine stability of absolutely
separating maps under tensor product and show that $\Phi^{\otimes
N}$ is absolutely separating for any $N$ if and only if $\Phi$ is
the tracing map. Particular results are obtained for families of
local unital multiqubit channels, global generalized Pauli
channels, and combination of identity, transposition, and tracing
maps acting on states of arbitrary dimension. We also study the
interplay between local and global noise components in absolutely
separating bipartite depolarizing maps and discuss the input
states with high resistance to absolute separability.
\end{abstract}

\section{Introduction}

The phenomenon of quantum entanglement is used in a variety of
quantum information
applications~\cite{horodecki-2009,nielsen-2000}. The distinction
between entangled and separable states has an operational meaning
in terms of local operations and classical communication, which
cannot create entanglement from a separable quantum
state~\cite{heinosaari-ziman-2012}. Natural methods of
entanglement creation include interaction between subsystems,
measurement in the basis of entangled states, entanglement
swapping~\cite{yurke-1992,zukowski-1998,pan-1998}, and dissipative
dynamics towards an entangled ground
state~\cite{reiter-2012,lin-2013}. On the other hand, dynamics of
any quantum system is open due to inevitable interaction between
the system and its environment. The general transformation of the
system density operator for time $t$ is given by a dynamical map
$\Phi_t$, which is completely positive and trace preserving (CPT)
provided the initial state of the system and environment is
factorized~\cite{breuer-2002}. CPT maps are called quantum
channels~\cite{holevo-2012}. Dissipative and decoherent quantum
channels describe noises acting on a system state. Properties of
quantum channels with respect to their action on entanglement are
reviewed in the
papers~\cite{aolita-2015,filippov-2014,filippov-ziman-2013}.

Suppose a quantum channel $\Phi$ such that its output
$\varrho_{\rm out} = \Phi[\varrho]$ is separable for some initial
system state $\varrho$. It may happen either due to entanglement
annihilation of the initially entangled state
$\varrho$~\cite{moravcikova-ziman-2010,filippov-rybar-ziman-2012},
or due to the fact that the initial state $\varrho$ was separable
and $\Phi$ preserved its separability. Though the state
$\varrho_{\rm out}$ is inapplicable for entanglement-based quantum
protocols, there is often a possibility to make it entangled by
applying appropriate control operations, e.g. by activating the
interaction Hamiltonian $H$ among constituting parts of the system
for a period $\tau$. It results in a unitary transformation
$\varrho_{\rm out} \rightarrow U \varrho_{\rm out} U^{\dag}$,
where $U=\exp(-\rmi H\tau/\hbar)$, $\hbar$ is the Planck constant.
Thus, if a quantum system in question is controlled artificially,
one can construct an interaction such that the state $U
\varrho_{\rm out} U^{\dag}$ may become entangled. It always takes
place for pure output states $\varrho_{\rm out} = \ket{\psi_{\rm
out}}\bra{\psi_{\rm out}}$, however, such an approach may fail for
mixed states. These are \emph{absolutely separable} states that
remain separable under action of \emph{any} unitary operator
$U$~\cite{Kus,Verstraete}. Properties of absolutely separable
states are reviewed in the
papers~\cite{Hildebrand,Johnston,Jivulescu,arunachalam-2015}. Even
if the dynamical map $\Phi$ is such that $\Phi[\varrho]$ is
absolutely separable for a given initial state $\varrho$, one may
try and possibly find a different input state $\varrho'$ such that
$\Phi[\varrho']$ is not absolutely separable, and the system
entanglement could be recovered by a proper unitary
transformation. It may happen, however, that whatever initial
state $\varrho$ is used, the output $\Phi[\varrho]$ is always
absolutely separable. Thus, a dynamical map $\Phi$ may exhibit an
\emph{absolutely separating} property, which means that its output
is always absolutely separable and cannot be transformed into an
entangled state by any Hamiltonian dynamics. The only
deterministic way to create entanglement in a system acted upon by
the absolutely separating channel $\Phi$ is to use a nonunitary
CPT dynamics afterwards, e.g. a Markovian dissipative process
$\widetilde{\Phi}_t =\rme^{t{\cal L}}$ with the only fixed point
$\varrho_{\infty}$, which is entangled. From experimental
viewpoint it means that absolutely separating noises should be
treated in a completely different way in order to maintain
entanglement.

The goal of this paper is to characterize absolutely separating
maps $\Phi$, explore their general properties, and illustrate
particular properties for specific families of quantum channels.

The paper is organized as follows.

In section~\ref{section-as-states}, we review properties of
absolutely separable states and known criteria for their
characterization. We establish an upper bound on purity of
absolutely separable states. Also, we pay attention to the
difference between absolute separability with respect to a
bipartition and that with respect to a multipartition. We show the
relation between various types of absolute separability and
conventional separability in tripartite systems. In
section~\ref{section-as-channels}, we review general properties of
absolutely separating maps and provide sufficient and (separately)
necessary conditions of absolutely separating property.
Section~\ref{section-n-tensor-stable-as} is devoted to the
analysis of $N$-tensor-stable absolutely separating maps, i.e.
maps $\Phi$ such that $\Phi^{\otimes N}$ is absolutely separating
with respect to any valid bipartition. In
section~\ref{section-specific-as-channels}, we consider specific
families of quantum maps, namely, local depolarization of qubits
(section~\ref{subsection-depolarizing}), local unital maps on
qubits (section~\ref{subsection-Pauli}), generalized Pauli
diagonal channels constant on axes~\cite{NatRus}
(section~\ref{subsection-generalized-Pauli}). In
section~\ref{subsection-ctit}, we consider a combination of
tracing map, transposition, and identity transformation acting on
a system of arbitrary dimension. Such maps represent a
two-parametric family comprising a global depolarization channel
and the Werner-Holevo channel~\cite{werner-holevo-2002} as partial
cases. In section~\ref{subsection-bipartite-depolarizing}, we deal
with the recently introduced three-parametric family of bipartite
depolarizing maps~\cite{lami-huber-2016}, which describe a
combined physical action of local and global depolarizing noises
on a system of arbitrary dimension. In
section~\ref{section-discussion}, we discuss the obtained results
and focus attention on initial states $\varrho$ such that
$\Phi_t[\varrho]$ remains not absolutely separable for the maximal
time $t$. In section~\ref{section-conclusions}, brief conclusions
are given.

\section{Absolutely separable states}
\label{section-as-states}

Associating a quantum system with the Hilbert space $\mathcal{H}$,
a quantum state is identified with the density operator $\varrho$
acting on $\mathcal{H}$ (Hermitian positive semidefinite operator
with unit trace). By $\mathcal{S}(\mathcal{H})$ denote the set of
quantum states. We will consider finite dimensional spaces
$\mathcal{H}_d$, where the subscript $d$ denotes ${\rm
dim}\mathcal{H}$.

\subsection{Bipartite states}

A quantum state $\varrho \in \mathcal{S}(\mathcal{H}_{m n})$, $m,n
\geqslant 2$ is called separable with respect to the particular
partition $\mathcal{H}_{mn} = \mathcal{H}_{m}^{A} \otimes
\mathcal{H}_{n}^{B}$ on subsystems $A$ and $B$ if $\varrho$ adopts
the convex sum resolution $\varrho = \sum_k p_k \varrho_k^A
\otimes \varrho_k^B$,  $p_k \geqslant 0$, $\sum_k p_k \geqslant
0$~\cite{werner-1989}. We will use a concise notation
$\mathcal{S}(\mathcal{H}_{m}^{A} | \mathcal{H}_{n}^{B})$ for the
set of such separable states. Usually, subsystems $A$ and $B$
denote different particles or modes~\cite{amosov-filippov-2017},
depending on experimentally accessible degrees of freedom. If the
state $\varrho \notin \mathcal{S}(\mathcal{H}_{m}^{A} |
\mathcal{H}_{n}^{B})$, then $\varrho$ is called entangled with
respect to $\mathcal{H}_{m}^{A} | \mathcal{H}_{n}^{B}$.

In contrast, a quantum state $\varrho \in
\mathcal{S}(\mathcal{H}_{m n})$ is called \emph{absolutely
separable} with respect to partition $m|n$ if $\varrho$ remains
separable with respect to \emph{any} partition $\mathcal{H}_{mn} =
\mathcal{H}_{m}^{A} \otimes \mathcal{H}_{n}^{B}$ without regard to
the choice of $A$ and $B$~\cite{Hildebrand,Johnston,Jivulescu}.
Denoting by $\mathcal{A}(m|n)$ the set of absolutely separable
states, we conclude that $\mathcal{A}(m|n) = \cap_{A,B}
\mathcal{S}(\mathcal{H}_{m}^{A} | \mathcal{H}_{n}^{B})$. The
physical meaning of absolutely separable states is that they
remain separable without respect to the particular choice of
relevant degrees of freedom. In terms of the fixed partition
$\mathcal{H}_{m}^{A} | \mathcal{H}_{n}^{B}$, the state $\varrho
\in \mathcal{S}(\mathcal{H}_{m n})$ is absolutely separable with
respect to $m|n$ if and only if $U \varrho U^{\dag} \in
\mathcal{S}(\mathcal{H}_{m}^{A} | \mathcal{H}_{n}^{B})$ for any
unitary operator $U$.

Let us notice that specification of partition is important. For
instance, 12-dimensional Hilbert space allows different partitions
$2|6$ and $3|4$. Moreover, one can always imbed the density
operator $\varrho \in \mathcal{S}(\mathcal{H}_{d_1})$ into a
higher-dimensional space $\mathcal{S}(\mathcal{H}_{d_2})$, $d_2 >
d_1$ and consider separability with respect to other partitions.

Absolutely separable states cannot be transformed into entangled
ones by unitary operations. In quantum computation circuits, the
application of unitary entangling gates (like controlled-NOT or
$\sqrt{\text{SWAP}}$) to absolutely separable states is useless
from the viewpoint of creating entanglement. Thus, a quantum
dynamics transforming absolutely separable states into entangled
ones must be non-unitary. Example of a dynamical map $\Phi$, which
always results in an entangled output, is a Markovian process
$\Phi_t=\rme^{t{\cal L}}$ with the only fixed point
$\varrho_{\infty} \not\in
\mathcal{S}(\mathcal{H}_m|\mathcal{H}_n)$; the output state
$\Phi_t[\varrho]$ is always entangled for some time $t > t_{\ast}$
even if the input state $\varrho$ is absolutely separable.

In analogy with the absolutely separable states, absolutely
classical spin states were introduced
recently~\cite{bohnet-waldraff-2017}.
The paper~\cite{bohnet-waldraff-2017} partially answers the question:
what are the states of a spin-$j$ particle that remain classical
no matter what unitary evolution is applied to them? These states
are characterized in terms of a maximum distance from the
maximally mixed spin-$j$ state such that any state closer to the
fully mixed state is guaranteed to be classical.

\subsection{Criteria of absolute separability with respect to bipartition}

Note that two states $\varrho$ and $V \varrho V^{\dag}$, where $V$
is unitary, are both either absolutely separable or not. In other
words, they exhibit the same properties with respect to absolute
separability. Let $V$ diagonalize $V \varrho V^{\dag}$, i.e. $V
\varrho V^{\dag} = {\rm diag} (\lambda_1, \ldots, \lambda_{mn})$,
where $\lambda_1, \ldots, \lambda_{mn}$ are eigenvalues of
$\varrho$. It means that the property of absolute separability is
defined by the state spectrum only.

A necessary condition of separability is positivity under partial
transpose (PPT)~\cite{peres-1996,horodecki-1996}: $\varrho \in
\mathcal{S}(\mathcal{H}_{m}^{A} | \mathcal{H}_{n}^{B})
\Longrightarrow \varrho^{\Gamma_B} = \sum_{i,j=1}^n I^A \otimes
\ket{j}^B \bra{i} \, \varrho \, I^A \otimes \ket{j}^B \bra{i}
\geqslant 0$, where $I$ is the indentity operator,
$\{\ket{i}\}_{i=1}^n$ is an orthonormal basis in
$\mathcal{H}_n^B$, and positivity of Hermitian operator $O$ means
$\bra{\varphi} O \ket{\varphi} \geqslant 0$ for all
$\ket{\varphi}$. In analogy with absolutely separable states one
can introduce \emph{absolutely PPT} states with respect to
partitioning $m|n$, namely, $\varrho \in
\mathcal{S}(\mathcal{H}_{m n})$ is absolutely PPT with respect to
$m|n$ if $\varrho^{\Gamma_B} \geqslant 0$ for all decompositions
$\mathcal{H} = \mathcal{H}_m^A \otimes
\mathcal{H}_n^B$~\cite{zyczkowski-1998,Hildebrand}. Equivalently,
$\varrho \in \mathcal{S}(\mathcal{H}_{m n})$ is absolutely PPT
with respect to $m|n$ if $(U\varrho U^{\dag})^{\Gamma_B} \geqslant
0$ for all unitary operators $U$. The set of absolutely PPT states
with respect to $m|n$ denote $\mathcal{A}_{\rm PPT}(m|n)$. It is
clear that $\mathcal{A}(m|n) \subset \mathcal{A}_{\rm PPT}(m|n)$
for all $m,n$.

The set $\mathcal{A}_{\rm PPT}(m|n)$ is fully characterized in
\cite{Hildebrand}, where necessary and sufficient conditions on
the spectrum of $\varrho$ are found under which $\varrho$ is
absolutely PPT. These conditions become particularly simple in the
case $m=2$: the state $\varrho \in \mathcal{S}(\mathcal{H}_{2n})$
is absolutely PPT if and only if its eigenvalues $\lambda_1,
\ldots, \lambda_{2n}$ (in decreasing order $\lambda_1 \geqslant
\ldots \geqslant \lambda_{2n}$) satisfy the following inequality:
\begin{equation}
\label{2-n} \lambda_1 \leqslant \lambda_{2n-1} + 2
\sqrt{\lambda_{2n} \lambda_{2n-2}}.
\end{equation}

Due to the fact that separability is equivalent to PPT for
partitions $2|2$ and $2|3$~\cite{horodecki-1996},
$\mathcal{A}(2|2) = \mathcal{A}_{\rm PPT}(2|2)$ and
$\mathcal{A}(2|3) = \mathcal{A}_{\rm PPT}(2|3)$. Moreover, the
recent study~\cite{Johnston} shows that $\mathcal{A}(2|n) =
\mathcal{A}_{\rm PPT}(2|n)$ for all $n=2,3,4,\ldots$. Thus,
equation~\eqref{2-n} is a necessary and sufficient criterion for
absolute separability of the state $\varrho \in
\mathcal{S}(\mathcal{H}_{2n})$ with respect to partition $2|n$.

For general $m,n$ there exists a sufficient condition of absolute
separability based on the fact that the states $\varrho$ with
sufficiently low purity ${\rm tr}[\varrho^2]$ are
separable~\cite{zyczkowski-1998,Gurvits,Gurvits2,Hildebrand-2007}.
Suppose the state $\varrho \in \mathcal{S}(\mathcal{H}_{mn})$
satisfies the requirement
\begin{equation}
\label{ball} {\rm tr}[\varrho^2] = \sum_{k=1}^{mn} \lambda_k^2
\leqslant \frac{1}{mn-1},
\end{equation}

\noindent then $\varrho \in \mathcal{S}(\mathcal{H}_{m} |
\mathcal{H}_{n})$. Since unitary rotations $\varrho \rightarrow
U\varrho U^{\dag}$ do not change the Frobenius norm, the states
inside the separable ball \eqref{ball} are all absolutely
separable, i.e. \eqref{ball}$\Longrightarrow \varrho \in
\mathcal{A}(m|n)$.

Suppose $\varrho\in\mathcal{A}_{\rm PPT}(m|n)$, then decreasingly ordered eigenvalues
$\lambda_1,\ldots,\lambda_{mn}$ of $\varrho$
satisfy~(\cite{Jivulescu}, theorem 8.1)
\begin{equation}
\label{necessary} \lambda_1 \leqslant \lambda_{mn-2} +
\lambda_{mn-1} + \lambda_{mn}.
\end{equation}

\noindent Since $\mathcal{A}(m|n) \subset
\mathcal{A}_{\rm PPT}(m |n)$,
equation~\eqref{necessary} represents a readily computable necessary
condition of absolute separability with respect to bipartition
$m|n$. The physical meaning of equation~\eqref{necessary} is that the
absolutely separable state cannot be close enough to any pure
state, because for pure states $\lambda_1^{\downarrow} = 1$ and
$\lambda_2^{\downarrow} = \ldots = \lambda_{mn}^{\downarrow} = 0$,
which violates the requirement \eqref{necessary}.

Moreover, a factorized state $\varrho_1 \otimes \varrho_2$ with
$\varrho_1 \in \mathcal{S}(\mathcal{H}_m)$ and $\varrho_2 \in
\mathcal{S}(\mathcal{H}_n)$, $m,n \geqslant 2$, cannot be
absolutely separable with respect to partition $m|n$ if either
$\varrho_1$ or $\varrho_2$ belongs to a boundary of the state
space. In fact, a boundary density operator $\varrho_1 \in
\partial\mathcal{S}(\mathcal{H}_m)$ has at least one zero
eigenvalue, which implies at least $n \geqslant 2$ zero
eigenvalues of the operator $\varrho_1 \otimes \varrho_2$.
Consequently, $\lambda_{(m-1)n} = \ldots = \lambda_{mn} = 0$ and
equation~\eqref{necessary} cannot be satisfied.

Condition \eqref{necessary} imposes a limitation on the purity of
absolutely separable states. Next proposition provides a
quantitative description of the maximal ball, where all absolutely
separable states are located.

\begin{proposition}
\label{proposition-not-AS-purity} An absolutely separable state
$\varrho \in \mathcal{A}(m|n)$ necessarily satisfies the
inequality
\begin{equation}
\label{max-ball-exact} 1 + \sqrt{\frac{k \, {\rm tr}[\varrho^2] -
1}{k-1}} \leqslant 3 k \sqrt{\frac{{\rm tr}[\varrho^2]}{mn+8}}
\text{~~~if~~~} \frac{1}{k} \leqslant {\rm tr}[\varrho^2]
\leqslant \frac{1}{k-1}, \ k=2,3,\ldots,mn
\end{equation}

\noindent and its simpler implication
\begin{equation}
\label{max-ball-approximate} {\rm tr}[\varrho^2] \leqslant
\frac{9}{mn+8}.
\end{equation}
\end{proposition}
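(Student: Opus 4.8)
The plan is to squeeze the largest eigenvalue $\lambda_1$ of $\varrho$ between an upper and a lower bound expressed through ${\rm tr}[\varrho^2]$, using only the spectral necessary condition~\eqref{necessary}, and then eliminate $\lambda_1$. Throughout, write $N=mn$ and order the eigenvalues decreasingly.

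For the upper bound, square~\eqref{necessary} and apply the Cauchy--Schwarz inequality to get $\lambda_1^2\le 3(\lambda_{N-2}^2+\lambda_{N-1}^2+\lambda_N^2)$. The three terms on the right are the three smallest among the $N-1$ numbers $\lambda_2^2\ge\cdots\ge\lambda_N^2$, so (their average being at most the overall average) their sum is at most $\frac{3}{N-1}$ of the total, i.e. $\lambda_{N-2}^2+\lambda_{N-1}^2+\lambda_N^2\le\frac{3}{N-1}\big({\rm tr}[\varrho^2]-\lambda_1^2\big)$. Hence $(N-1)\lambda_1^2\le 9\big({\rm tr}[\varrho^2]-\lambda_1^2\big)$, that is $\lambda_1\le 3\sqrt{{\rm tr}[\varrho^2]/(mn+8)}$. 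Combining this with the elementary bound ${\rm tr}[\varrho^2]=\sum_i\lambda_i^2\le\lambda_1\sum_i\lambda_i=\lambda_1$ one gets $(mn+8)\lambda_1^2\le 9\lambda_1$, so $\lambda_1\le 9/(mn+8)$ and thus ${\rm tr}[\varrho^2]\le\lambda_1\le 9/(mn+8)$; this is already~\eqref{max-ball-approximate}.

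For the lower bound I would use that $\lambda_2\ge\cdots\ge\lambda_N\ge 0$ are $N-1$ numbers summing to $1-\lambda_1$, each $\le\lambda_1$. By convexity of $t\mapsto t^2$ the sum $\sum_{i\ge 2}\lambda_i^2$ is largest when as much mass as possible sits at the value $\lambda_1$; since there are $N-1\ge k-1$ slots and the leftover $1-(k-1)\lambda_1$ is non-negative when $\lambda_1\le 1/(k-1)$, this yields ${\rm tr}[\varrho^2]\le (k-1)\lambda_1^2+\big(1-(k-1)\lambda_1\big)^2$. Using the hypothesis ${\rm tr}[\varrho^2]\ge 1/k$, hence $\lambda_1\ge{\rm tr}[\varrho^2]\ge 1/k$ and $k\lambda_1-1\ge 0$, this rearranges into $k\,{\rm tr}[\varrho^2]-1\le (k-1)(k\lambda_1-1)^2$, i.e. $1+\sqrt{(k\,{\rm tr}[\varrho^2]-1)/(k-1)}\le k\lambda_1$. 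Inserting the upper bound $\lambda_1\le 3\sqrt{{\rm tr}[\varrho^2]/(mn+8)}$ gives exactly~\eqref{max-ball-exact}. In the remaining regime $\lambda_1>1/(k-1)$ one argues directly: from ${\rm tr}[\varrho^2]\le 1/(k-1)$ one checks $\sqrt{(k\,{\rm tr}[\varrho^2]-1)/(k-1)}\le 1/(k-1)$, so $1+\sqrt{(k\,{\rm tr}[\varrho^2]-1)/(k-1)}\le k/(k-1)<k\lambda_1$, and the same upper bound on $\lambda_1$ closes the argument.

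The routine parts are the Cauchy--Schwarz step and the final algebra; the part that needs care is the lower bound on $\lambda_1$, where one must verify that the maximising ``pile-up'' spectrum is admissible (enough slots $k-1\le N-1$, non-negative remainder $1-(k-1)\lambda_1$), which is what forces the case split $\lambda_1\le 1/(k-1)$ versus $\lambda_1>1/(k-1)$, and one must pick the correct (larger) root of the quadratic in $\lambda_1$ --- this is precisely where the hypothesis ${\rm tr}[\varrho^2]\ge 1/k$ enters.
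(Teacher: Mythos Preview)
Your proof is correct and follows essentially the same route as the paper: both combine the Cauchy--Schwarz/averaging upper bound $\lambda_1\le 3\sqrt{{\rm tr}[\varrho^2]/(mn+8)}$ coming from~\eqref{necessary} with the lower bound on $\lambda_1$ obtained from the extremal spectrum $(\lambda_1,\ldots,\lambda_1,1-(k-1)\lambda_1,0,\ldots,0)$ (the paper phrases this as minimising $\lambda_1$ via Lagrange multipliers, you phrase it as maximising ${\rm tr}[\varrho^2]$ via convexity, which is the dual formulation). Your derivation of~\eqref{max-ball-approximate} via ${\rm tr}[\varrho^2]\le\lambda_1$ is more direct than the paper's (which extracts it from~\eqref{max-ball-exact} by looking at the breaking points in $k$), and your explicit case split $\lambda_1\lessgtr 1/(k-1)$ is extra care the paper leaves implicit.
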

\begin{proof}
Let ${\rm tr}[\varrho^2] = \mu$. It is not hard to see that in
general $(\lambda_{mn-2} + \lambda_{mn-1} + \lambda_{mn})^2
\leqslant 3(\lambda_{mn-2}^2 + \lambda_{mn-1}^2 + \lambda_{mn}^2)$
and $\lambda_{mn-2}^2 + \lambda_{mn-1}^2 + \lambda_{mn}^2
\leqslant \frac{3}{mn-1} \sum_{i=2}^{mn} \lambda_{i}^2 = 3
\frac{\mu - \lambda_1^2}{mn-1}$. Consequently, if
\begin{equation}
\label{necessary-not-fulfilled} \lambda_1^2 > 9 \, \frac{\mu -
\lambda_1^2}{mn-1},
\end{equation}

\noindent then $\lambda_1 > \lambda_{mn-2} + \lambda_{mn-1} +
\lambda_{mn}$ and the necessary condition for absolute
separability \eqref{necessary} is not fulfilled.

Suppose the purity $\mu$ is known, then the maximal eigenvalue
$\lambda_1$ has a lower bound, which can be found by the method of
Lagrange multipliers with constraints $\sum_{i=1}^{mn} \lambda_i =
1$ and $\lambda_1 \geqslant \lambda_2 \geqslant \ldots \geqslant
\lambda_{mn} \geqslant 0$. The eigenvalue $\lambda_1$ is minimal
if $\lambda_1 = \ldots = \lambda_{k-1}$ and $\lambda_{k+1} =
\lambda_{k+2} = \ldots = 0$ for some $1 < k \leqslant mn$. Then
$\lambda_k = 1 - (k-1)\lambda_1$ and $\mu = (k-1)\lambda_1^2 + [1
- (k-1)\lambda_1]^2$. If $\frac{1}{k} \leqslant \mu \leqslant
\frac{1}{k-1}$, then the minimal largest eigenvalue reads
\begin{equation}
\min \lambda_1 = \frac{1}{k}\left( 1 + \sqrt{\frac{k \mu -
1}{k-1}} \right).
\end{equation}

\noindent Substituting $\min \lambda_1$ for $\lambda_1$ in
equation~\eqref{necessary-not-fulfilled}, we obtain a converse to
inequality \eqref{max-ball-exact}. This converse relation
specifies the region of purities $\mu \in (\mu_0,1]$, where the
state $\varrho \notin \mathcal{A}(m|n)$. Thus,
equation~\eqref{max-ball-exact} is necessary for absolute
separability. Formula \eqref{max-ball-approximate} follows from
equation~\eqref{max-ball-exact} and describes a hyperbola, which
passes through all breaking points of $\mu_0$ as a function of
$mn$, see figure~\ref{figure1}.
\end{proof}

\begin{figure}
\centering
\includegraphics[width=8cm]{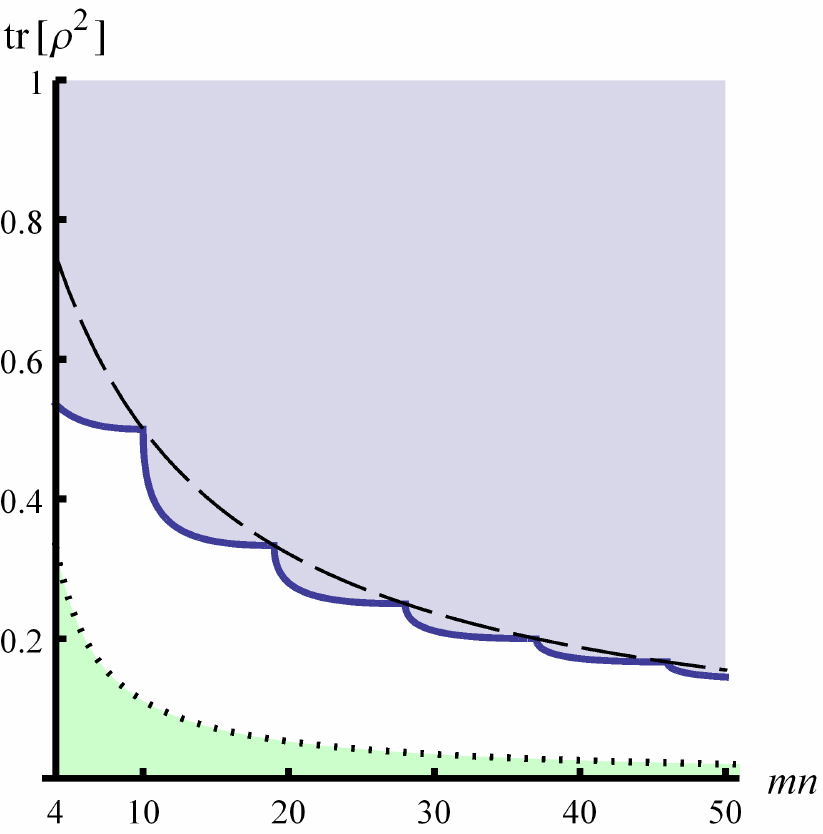}
\caption{\label{figure1} Purity ${\rm tr}[\varrho^2]$ of
states $\varrho \in \mathcal{S}(\mathcal{H}_{mn})$ vs. dimension
$mn$: below dotted line $\varrho \in \mathcal{A}(m|n)$ due to
equation~\eqref{ball}; above solid line $\varrho \notin
\mathcal{A}(m|n)$ due to equation~\eqref{max-ball-exact}. Dashed
line corresponds to the boundary established in
equation~\eqref{max-ball-approximate}.}
\end{figure}

Proposition~\ref{proposition-not-AS-purity} shows, in particular,
that two qubit states with ${\rm tr}[\varrho^2] > (\sqrt{3}-1)^2
\approx 0.536$ cannot be absolutely separable states with respect
to partition $2|2$. A state $\varrho\in\mathcal{S}(\mathcal{H}_d)$
is not absolutely separable with respect to any partition $m|n$
($d=mn \geqslant 4$, $m,n \geqslant 2$) if ${\rm tr}[\varrho^2] >
\frac{9}{d+8}$.

\subsection{Absolute separability with respect to multipartition}

An $N$-partite quantum state $\varrho \in
\mathcal{S}(\mathcal{H}_{n_1 \ldots n_N})$, $n_k \geqslant 2$ is
called fully separable with respect to the partition
$\mathcal{H}_{n_1 \ldots n_N} = \mathcal{H}_{n_1}^{A_1} \otimes
\ldots \otimes \mathcal{H}_{n_N}^{A_N}$ on subsystems
$A_1,\ldots,A_N$ if $\varrho$ adopts the convex sum resolution
$\varrho = \sum_k p_k \varrho_k^{A_1} \otimes \ldots \otimes
\varrho_k^{A_N}$, $p_k \geqslant 0$, $\sum_k p_k = 1$. The
set of fully separable states is denoted by
$\mathcal{S}(\mathcal{H}_{n_1}^{A_1} | \ldots |
\mathcal{H}_{n_N}^{A_N})$. The criterion of full separability is
known, for instance, for 3-qubit Greenberger-Horne-Zeilinger (GHZ)
diagonal states~\cite{chen-2015,han-2016}.

We will call a state $\varrho \in \mathcal{S}(\mathcal{H}_{n_1
\ldots n_N})$ \emph{absolutely separable} with respect to
\emph{multipartition} $n_1| \ldots | n_N$ if $\varrho$ remains
separable with respect to \emph{any} multipartition
$\mathcal{H}_{n_1| \ldots | n_N} = \mathcal{H}_{n_1}^{A_1} \otimes
\ldots \otimes \mathcal{H}_{n_N}^{A_N}$ or, equivalently, $U
\varrho U^{\dag} \in \mathcal{S}(\mathcal{H}_{n_1}^{A_1} | \ldots
| \mathcal{H}_{n_N}^{A_N})$ for any unitary operator $U$ and fixed
multipartition $A_1 | \ldots | A_N$. We will use notation
$\mathcal{A}(n_1 | \ldots |n_N)$ for the set of states, which are
absolutely separable with respect to \emph{multipartition} $n_1|
\ldots | n_N$.

A sufficient condition of absolute separability with respect to
multipartition follows from consideration of separability
balls~\cite{Hildebrand-2007}. Consider an $N$-qubit state $\varrho
\in \mathcal{S}(\mathcal{H}_{2^N})$ such that
\begin{equation}
\label{ball-N-qubits} {\rm tr}[\varrho^2] \leqslant \frac{1}{2^N}
\left( 1 + \frac{54}{17} \, 3^{-N} \right) \, ,
\end{equation}

\noindent then $\varrho \in \mathcal{A}(\underbrace{2 | \ldots
|2}_{N~\text{times}})$~\cite{Hildebrand-2007}.

To illustrate the relation between different types of separability
under bipartitions and multipartitions let us consider a
three-qubit case.

\begin{example}
The inclusion $\mathcal{A}(2|2|2) \subset
\mathcal{S}(\mathcal{H}_{2} | \mathcal{H}_{2} | \mathcal{H}_{2})
\subset \mathcal{S}(\mathcal{H}_{2} | \mathcal{H}_{4}) \subset
\text{PPT}(\mathcal{H}_{2} | \mathcal{H}_{4}) \subset
\mathcal{S}(\mathcal{H}_{8})$ is trivial. Also,
$\mathcal{A}(2|2|2) \subset \mathcal{A}(2|4) = \mathcal{A}_{\rm
PPT}(2|4) \subset \mathcal{S}(\mathcal{H}_{2} | \mathcal{H}_{4})$.
The relation to be clarified is that between $\mathcal{A}(2|4)$
and $\mathcal{S}(\mathcal{H}_{2} | \mathcal{H}_{2} |
\mathcal{H}_{2})$.

Firstly, we notice that the pure state $\ket{\psi_1}\bra{\psi_1}
\otimes \ket{\psi_2}\bra{\psi_2} \otimes \ket{\psi_3}\bra{\psi_3}$
is fully separable but not absolutely separable with respect to
partition $2|4$ as there exists a unitary transformation $U$,
which transforms it into a maximally entangled state $\ket{\rm
GHZ}\bra{\rm GHZ}$, where $\ket{\rm GHZ} = \frac{1}{\sqrt{2}}
(\ket{000}+\ket{111})$. Thus, $\mathcal{S}(\mathcal{H}_{2} |
\mathcal{H}_{2} | \mathcal{H}_{2}) \not\subset \mathcal{A}(2|4)$.

Secondly, consider a state $\varrho \in \mathcal{A}(2|4) =
\mathcal{A}_{\rm PPT}(2|4)$, then its spectrum
$\lambda_1,\ldots,\lambda_8$ in decreasing order satisfies
equation~\eqref{2-n} for $n=4$. Maximizing the state purity
$\sum_{k=1}^8 \lambda_k^2$ under conditions $\lambda_1 \geqslant
\ldots \geqslant \lambda_8 \geqslant 0$, $\sum_{k=1}^8 \lambda_k =
1$, and inequality~\eqref{2-n}, we get $\lambda_1 = \lambda_2 =
\frac{11}{48}$, $\lambda_3 = \frac{23}{144}$, $\lambda_4 =
\lambda_5 = \lambda_6 = \lambda_7 = \lambda_8 = \frac{11}{144}$.
Any 3-qubit state $\varrho$ with such a spectrum is absolutely
separable with respect to partition $2|4$. Consider a particular
state
\begin{equation}
\label{rhoASnotS} \varrho=\sum_{k=1}^8 \lambda_k |{\rm GHZ}_k
\rangle \langle {\rm GHZ}_k|,
\end{equation}

\noindent where the binary representation of $k - 1 = 4 k_1 + 2
k_2 + k_3$, $k_i = 0,1$, defines GHZ-like states
\begin{equation}
| {\rm GHZ}_k \rangle = \frac{1}{\sqrt{2}} ((-1)^{k-1}
|k_1\rangle\otimes|k_2\rangle\otimes|k_3\rangle +
|\bar{k_1}\rangle\otimes|\bar{k_2}\rangle\otimes|\bar{k_3}\rangle
\end{equation}

\noindent with $\bar{0} = 1$ and $\bar{1} = 0$. The state
\eqref{rhoASnotS} is GHZ diagonal, so we apply to it the necessary
and sufficient condition of full separability~(\cite{han-2016},
theorem 5.2), which shows that \eqref{rhoASnotS} is not fully
separable. Thus, $\mathcal{A}(2|4) \not\subset
\mathcal{S}(\mathcal{H}_{2} | \mathcal{H}_{2} | \mathcal{H}_{2})$.

Finally, to summarize the results of this example, we depict the
Venn diagram of separable and absolutely separable 3 qubit states
in figure~\ref{figure2}.
\end{example}

\begin{figure}
\centering
\includegraphics[width=8cm]{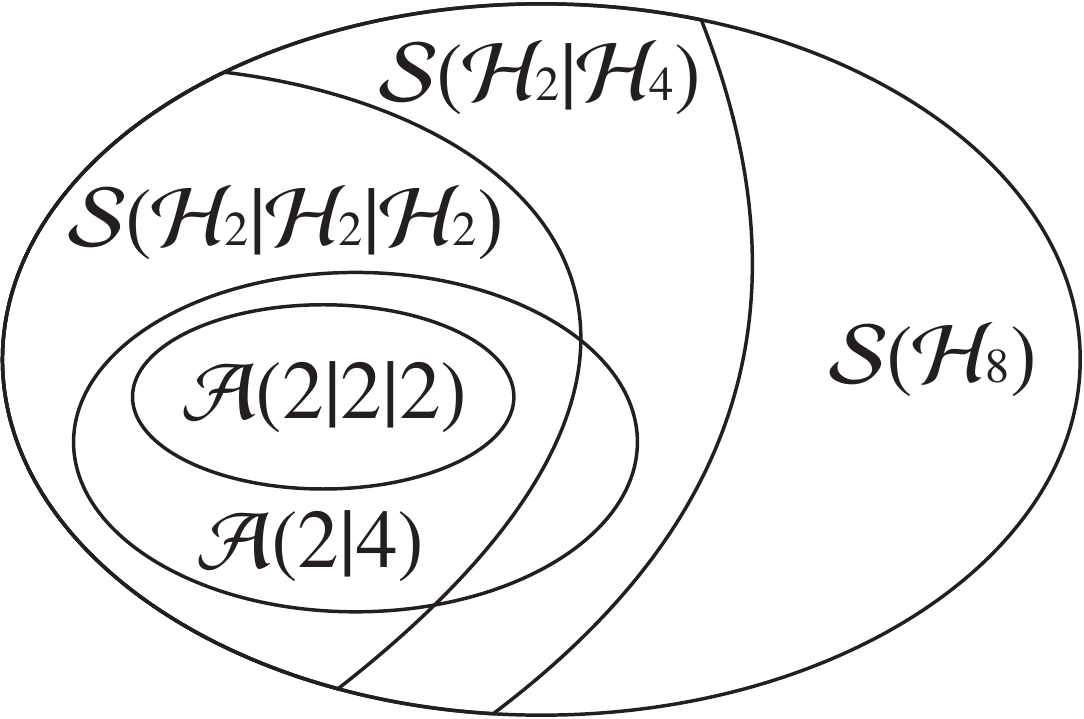}
\caption{\label{figure2} The relation between separability classes
of three qubit states: $\mathcal{S}(\mathcal{H}_8)$ is the set of
density operators, $\mathcal{S}(\mathcal{H}_2|\mathcal{H}_4)$ is
the set of states separable with respect to a fixed bipartition
$\mathcal{H}_2|\mathcal{H}_4$,
$\mathcal{S}(\mathcal{H}_2|\mathcal{H}_2|\mathcal{H}_2)$ is the
set of fully separable states with respect to a multipartition
$\mathcal{H}_2|\mathcal{H}_2|\mathcal{H}_2$, $\mathcal{A}(2|4)$ is
a set of absolutely separable states with respect to partition
$2|4$, and $\mathcal{A}(2|2|2)$ is a set of absolutely separable
states with respect to partition $2|2|2$. Convex figures
correspond to convex sets.}
\end{figure}

Note that the state $\varrho$ in \eqref{rhoASnotS} is separable
for \emph{any} bipartition $2|4$ and entangled with respect to
multipartition $\mathcal{H}_{2} | \mathcal{H}_{2} |
\mathcal{H}_{2}$. In particular, $\varrho$ is separable with
respect to bipartitions $\mathcal{H}_{2}^{A} |
\mathcal{H}_{4}^{BC}$, $\mathcal{H}_{2}^{B} |
\mathcal{H}_{4}^{AC}$, and $\mathcal{H}_{2}^{C} |
\mathcal{H}_{4}^{AB}$, but entangled with respect to tripartition
$\mathcal{H}_{2}^{A} | \mathcal{H}_{2}^{B} | \mathcal{H}_{2}^{C}$.
The states with such a property were previously constructed via
unextendable product bases~\cite{bennett-1999,divincenzo-2003}.
Note, however, that even if a 3 qubit state $\xi$ is separable
with respect to the specific partitions $A|BC$, $B|AC$, and
$C|AB$, it does not imply that $\xi$ is absolutely separable with
respect to partition $2|4$, because $U \xi U^\dag$ is separable
with respect to $A|BC$ only for permutation matrices $U(A
\leftrightarrow B)$, $U(B \leftrightarrow C)$, and
$U(A\leftrightarrow C)$, but not general unitary operators $U$.

\section{Absolutely separating maps and channels}
\label{section-as-channels}

In quantum information theory, positive linear maps
$\Phi:\mathcal{S}(\mathcal{H}) \mapsto \mathcal{S}(\mathcal{H})$
represent a useful mathematical tool in characterization of
bipartite entanglement~\cite{horodecki-1996}, multipartite
entanglement~\cite{filippov-melnikov-ziman-2013,F-M},
characterization of Markovianity in open system
dynamics~\cite{rivas-2010,benatti-2017}, etc. A quantum channel is
given by a CPT map $\Phi$ such that $\Phi \otimes {\rm Id}_k$ is a
positive map for all identity transformations ${\rm Id}_k:
\mathcal{S}(\mathcal{H}_k) \mapsto \mathcal{S}(\mathcal{H}_k)$.
Thus, entanglement-related properties are easier to explore for
positive maps~\cite{filippov-ziman-2013} but deterministic
physical evolutions are given by quantum channels. It means that
the set of absolutely separating channels is the intersection of
CPT maps with the set of positive absolutely separating maps
introduced below.

We recall that a linear map $\Phi:
\mathcal{S}(\mathcal{H}_{mn}) \mapsto
\mathcal{S}(\mathcal{H}_{m}|\mathcal{H}_{n})$ is called positive
entanglement annihilating with respect to partition
$\mathcal{H}_{m}|\mathcal{H}_{n}$, concisely,
PEA($\mathcal{H}_{m}|\mathcal{H}_{n}$). For multipartite composite
systems, $\Phi: \mathcal{S}(\mathcal{H}_{n_1 \ldots n_N}) \mapsto
\mathcal{S}(\mathcal{H}_{n_1}| \ldots | \mathcal{H}_{n_N})$ is
positive entanglement annihilating, PEA($\mathcal{H}_{n_1}| \ldots
|\mathcal{H}_{n_N}$). The map $\Phi: \mathcal{S}(\mathcal{H}_{m})
\mapsto \mathcal{S}(\mathcal{H}_{m})$ is called entanglement
breaking (EB) if $\Phi \otimes {\rm Id}_n$ is positive
entanglement annihilating for all
$n$~\cite{holevo-1998,king-2002,shor-2002,ruskai-2003,horodecki-2003}.
Note that an EB map is automatically completely positive, which
means that any EB map is a quantum channel (CPT map).

In this paper, we focus on positive absolutely separating maps
$\Phi: \mathcal{S}(\mathcal{H}_{mn}) \mapsto \mathcal{A}(m|n)$,
whose output is always absolutely separable for valid input
quantum states. We will denote such maps by PAS($m|n$). Clearly,
$\text{PAS}(m|n) \subset
\text{PEA}(\mathcal{H}_{m}|\mathcal{H}_{n})$. Absolutely
separating channels with respect to partition $m|n$ are the maps
$\Phi \in \text{CPT} \cap \text{PAS}(m|n)$. Note that the concept
of absolutely separating map can be applied not only to linear
positive maps but also to non-linear physical maps originating in
measurement procedures, see e.g.~\cite{luchnikov-filippov-2017}.
In this paper, however, we restrict to linear maps only.

Let us notice that the application of any positive map $\Phi:
\mathcal{S}(\mathcal{H}_{n}) \mapsto \mathcal{S}(\mathcal{H}_{n})$
to a part of composite system cannot result in an absolutely
separating map.

\begin{proposition}
\label{proposition-Phi-Id} The map $\Phi \otimes \text{Id}_n$ is
not absolutely separating with respect to partition $m|n$ for any
positive map $\Phi: \mathcal{S}(\mathcal{H}_m) \mapsto
\mathcal{S}(\mathcal{H}_m)$, $n \geqslant 2$.
\end{proposition}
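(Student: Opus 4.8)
The plan is to exhibit, for an arbitrary positive map $\Phi$, a single input state whose image under $\Phi\otimes\text{Id}_n$ violates one of the necessary conditions for absolute separability recorded above. Since $\text{Id}_n$ acts trivially on the second tensor factor, the natural test state is a product $\varrho=\sigma\otimes\ket{\psi}\bra{\psi}$ with $\sigma\in\mathcal{S}(\mathcal{H}_m)$ arbitrary and $\ket{\psi}\in\mathcal{H}_n$ any unit vector. Then $(\Phi\otimes\text{Id}_n)[\varrho]=\Phi[\sigma]\otimes\ket{\psi}\bra{\psi}$, and because $\Phi$ maps $\mathcal{S}(\mathcal{H}_m)$ into $\mathcal{S}(\mathcal{H}_m)$, the operator $\Phi[\sigma]$ is a genuine density operator, so this is a valid output state; it remains only to show that it does not lie in $\mathcal{A}(m|n)$.

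The first step is to read off the spectrum. Writing $\mu_1\geqslant\cdots\geqslant\mu_m\geqslant 0$ for the eigenvalues of $\Phi[\sigma]$ (so $\sum_i\mu_i=1$ and hence $\mu_1\geqslant 1/m>0$), the eigenvalues of $\Phi[\sigma]\otimes\ket{\psi}\bra{\psi}$, in decreasing order, are $\mu_1,\dots,\mu_m$ followed by $m(n-1)$ zeros. Now $m(n-1)\geqslant m\geqslant 2$, with $m(n-1)=2$ only when $m=n=2$; in every other case $m(n-1)\geqslant 3$, so the three smallest eigenvalues vanish and the necessary condition \eqref{necessary}, $\lambda_1\leqslant\lambda_{mn-2}+\lambda_{mn-1}+\lambda_{mn}$, would force $\mu_1\leqslant 0$, a contradiction. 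Hence $\Phi[\sigma]\otimes\ket{\psi}\bra{\psi}\notin\mathcal{A}_{\rm PPT}(m|n)$, and a fortiori not in $\mathcal{A}(m|n)$, whenever $(m,n)\neq(2,2)$.

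The only delicate point is the corner case $m=n=2$, where the coarse bound \eqref{necessary} is in fact satisfied: the spectrum $\mu_1,\mu_2,0,0$ gives $\mu_1\leqslant\mu_2$, which holds precisely when $\Phi[\sigma]$ is maximally mixed. Here I would instead invoke the sharp criterion \eqref{2-n} for absolute separability of a two-qubit state (legitimate since $\mathcal{A}(2|2)=\mathcal{A}_{\rm PPT}(2|2)$): it requires $\lambda_1\leqslant\lambda_3+2\sqrt{\lambda_4\lambda_2}$, i.e.\ $\mu_1\leqslant 0+2\sqrt{0\cdot\mu_2}=0$, again impossible since $\mu_1\geqslant 1/2$. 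This closes every case, so $(\Phi\otimes\text{Id}_n)[\sigma\otimes\ket{\psi}\bra{\psi}]\notin\mathcal{A}(m|n)$ and $\Phi\otimes\text{Id}_n$ is not $\text{PAS}(m|n)$. Equivalently, the argument compresses to one line by citing the fact established above that a factorized state with one factor on the boundary of its state space is never absolutely separable, the pure factor $\ket{\psi}\bra{\psi}$ being exactly such a boundary point.
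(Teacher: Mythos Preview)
Your proof is correct and follows the same strategy as the paper: feed in a product state $\sigma\otimes\ket{\psi}\bra{\psi}$ and observe that the output, having a pure tensor factor, cannot be absolutely separable. In fact you are more careful than the paper, which simply asserts that the spectrum of $\Phi[\varrho_1]\otimes\ket{\psi}\bra{\psi}$ violates condition~\eqref{necessary}; as you noticed, for $m=n=2$ the output spectrum is $(\mu_1,\mu_2,0,0)$ and \eqref{necessary} reads $\mu_1\leqslant\mu_2$, which is \emph{not} violated when $\Phi[\sigma]=\tfrac{1}{2}I$. Your explicit appeal to the sharp two-qubit criterion~\eqref{2-n} in that corner case closes a small gap that the paper's terse proof glosses over.
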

\begin{proof}
Consider the input state $\varrho_{\rm in} = \varrho_1 \otimes
|\psi\rangle\langle\psi|$, then the output state is $\varrho_{\rm
out} = \Phi[\varrho_1] \otimes \ket{\psi}\bra{\psi}$. Spectrum of
$\varrho_{\rm out}$ does not satisfy the necessary condition of
absolute separability, equation \eqref{necessary}, so $\Phi
\otimes \text{Id}_n$ is not absolutely separating.
\end{proof}

The physical meaning of proposition~\ref{proposition-Phi-Id} is
that there exists no local action on a part of quantum system,
which would make all outcome quantum states absolutely separable.
This is in contrast with separability property since entanglement
breaking channels disentangle the part they act on from other
subsystems. Proposition~\ref{proposition-Phi-Id} means that
one-sided quantum noises $\Phi \otimes {\rm Id}$ can always be
compensated by a proper choice of input state $\varrho$ and
unitary operations $U$ in such a way that the outcome state $U (
\Phi\otimes{\rm Id}[\varrho] ) U^{\dag}$ becomes entangled.

It was emphasized already that the absolutely separable state can
be transformed into an entangled one only by non-unitary maps.
However, not every non-unitary map is adequate for entanglement
restoration. For instance, unital quantum channels cannot result
in entangled output for absolutely separable input.

\begin{proposition}
\label{proposition-concatenation} Suppose $\Phi_1$ is absolutely
separating channel with respect to some (multi)partition and $\Phi_2$ is a
unital channel, i.e. $\Phi_2[I] = I$. Then the concatenation $\Phi_2 \circ \Phi_1$ is
also absolutely separating with respect to the same partition.
\end{proposition}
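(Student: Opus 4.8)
The plan is to reduce the statement to two structural facts already available: that a unital channel can only make the spectrum of a state \emph{more mixed} in the majorization sense, and that the set of absolutely separable states recalled in section~\ref{section-as-states} is (i) determined by the spectrum alone and (ii) convex. Fix an arbitrary input $\varrho$ and put $\sigma:=\Phi_1[\varrho]$. Since $\Phi_1$ is absolutely separating with respect to the given bipartition $m|n$ (or multipartition $n_1|\ldots|n_N$), we have $\sigma\in\mathcal{A}$, where $\mathcal{A}$ denotes the corresponding set of absolutely separable states. It then suffices to show $\Phi_2[\sigma]\in\mathcal{A}$, after which the conclusion follows because $\varrho$ was arbitrary.

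The first ingredient is the classical majorization property of doubly stochastic maps: for any unital, trace-preserving, positive linear map $\Phi_2$ and any state $\sigma$, the output is majorized by the input, i.e. the decreasingly ordered eigenvalue vector of $\Phi_2[\sigma]$ is majorized by that of $\sigma$. Since an absolutely separating channel and a unital channel are both CPT, this applies here. The second ingredient is Uhlmann's theorem, which states that $\tau$ is majorized by $\sigma$ (as density operators) if and only if $\tau$ lies in the convex hull of the unitary orbit of $\sigma$; hence there exist unitaries $U_i$ and probabilities $p_i$ with $\Phi_2[\sigma]=\sum_i p_i\,U_i\sigma U_i^{\dag}$.

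Now I combine these with the properties of $\mathcal{A}$. Each operator $U_i\sigma U_i^{\dag}$ is isospectral with $\sigma$, and absolute separability (with respect to a bipartition or a multipartition) is a property of the spectrum only, so $U_i\sigma U_i^{\dag}\in\mathcal{A}$ for every $i$. Moreover $\mathcal{A}$ is an intersection of the convex sets $\mathcal{S}(\mathcal{H}_m^A|\mathcal{H}_n^B)$ (respectively $\mathcal{S}(\mathcal{H}_{n_1}^{A_1}|\ldots|\mathcal{H}_{n_N}^{A_N})$) over all choices of subsystems, hence convex. Therefore $\Phi_2[\sigma]=\sum_i p_i\,U_i\sigma U_i^{\dag}$ is a convex combination of elements of $\mathcal{A}$ and lies in $\mathcal{A}$. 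Since $\varrho$ was arbitrary, $\Phi_2\circ\Phi_1$ maps every input into $\mathcal{A}$, i.e. it is absolutely separating with respect to the same (multi)partition. The multipartite case requires no change of argument, since the corresponding $\mathcal{A}(n_1|\ldots|n_N)$ is likewise a spectral, convex set.

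The only genuine content beyond bookkeeping is the appeal to the unital-channel majorization theorem together with Uhlmann's unitary-mixture characterization of majorization; both are standard, so I expect no real obstacle. An equivalent route avoiding an explicit invocation of Uhlmann's theorem is to observe directly that the region of the eigenvalue simplex cut out by the exact conditions for $\mathcal{A}$ (in particular~\eqref{necessary} and the criteria of~\cite{Hildebrand,Johnston}) is closed under doubly stochastic transformations of the eigenvalue vector; but this closedness is itself just a restatement of the convexity-plus-spectral-invariance argument, so the route presented above is the cleanest and handles all partitions uniformly.
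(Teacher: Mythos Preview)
Your proof is correct and follows essentially the same route as the paper: both argue that unitality of $\Phi_2$ gives $\Phi_2[\sigma]\prec\sigma$ (the paper cites Uhlmann~\cite{Uhlmann}), and then conclude that majorization by an absolutely separable state forces absolute separability. The only difference is cosmetic: where the paper invokes Lemma~2.2 of~\cite{Jivulescu} for this last step, you supply its proof directly via Uhlmann's convex decomposition together with the spectral invariance and convexity of $\mathcal{A}$.
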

\begin{proof}
From absolute separability of $\Phi_1$ it follows that $\varrho =
\Phi_1[\varrho_{\text{in}}]$ is absolutely separable for any input
$\varrho_{\rm in}$. Since the channel $\Phi_2$ is unital,
$\Phi_2[\varrho] \prec \varrho$ for any density operator
$\varrho$~\cite{Uhlmann}, i.e. the ordered spectrum of
$\Phi_2[\varrho]$ is majorized by the ordered spectrum of
$\varrho$, with $\varrho$ being absolutely separable in our case.
Thus, the spectrum of the state
$\Phi_2\circ\Phi_1[\varrho_{\text{in}}]$ is majorized by the
spectrum of the absolutely separable state and according to Lemma
2.2 in $\cite{Jivulescu}$ this implies absolute separability of
$\Phi_2\circ\Phi_1[\varrho_{\text{in}}]$.
\end{proof}

There exist such physical maps $\Phi:\mathcal{S}(\mathcal{H}_d)
\mapsto \mathcal{S}(\mathcal{H}_d)$ that are not sensitive to
unitary rotations of input states and translate that property to
the output states. We will call the map
$\Phi:\mathcal{S}(\mathcal{H}_d) \mapsto
\mathcal{S}(\mathcal{H}_d)$ covariant if
\begin{equation}
\label{covariant} \Phi[U \varrho U^{\dag}] = U \Phi[\varrho]
U^{\dag}
\end{equation}

\noindent for all $U \in \text{SU(d)}$. The example of covariant
map is the depolarizing channel
$\mathcal{D}_q:\mathcal{S}(\mathcal{H}_d) \mapsto
\mathcal{S}(\mathcal{H}_d)$ acting as follows:
\begin{equation}
\label{depolarizing} \mathcal{D}_q [X] = q X + (1-q) {\rm tr}[X]
\frac{1}{d} I_d,
\end{equation}

\noindent which is completely positive if $q \in [-1/(d^2-1),1]$.

\begin{proposition}
A covariant map $\Phi: \mathcal{S}(\mathcal{H}_{mn}) \mapsto
\mathcal{S}(\mathcal{H}_{mn})$ is absolutely separating with
respect to partition $m|n$ if and only if it is entanglement
annihilating with respect to partition
$\mathcal{H}_m|\mathcal{H}_n$.
\end{proposition}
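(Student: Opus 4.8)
The plan is to prove the two implications separately, with one direction being essentially trivial.

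\textbf{The easy direction.} If $\Phi$ is absolutely separating with respect to $m|n$, then by definition $\Phi[\varrho] \in \mathcal{A}(m|n) \subset \mathcal{S}(\mathcal{H}_m|\mathcal{H}_n)$ for every input $\varrho$, so $\Phi$ is entanglement annihilating with respect to $\mathcal{H}_m|\mathcal{H}_n$. This uses only the inclusion $\mathcal{A}(m|n) \subset \mathcal{S}(\mathcal{H}_m|\mathcal{H}_n)$ noted earlier and requires no covariance at all.

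\textbf{The nontrivial direction.} Assume $\Phi$ is covariant and entanglement annihilating; I want to conclude it is absolutely separating. Fix an input state $\varrho$ and an arbitrary unitary $V \in \mathrm{SU}(mn)$; I must show $V\,\Phi[\varrho]\,V^{\dag}$ is separable with respect to the fixed partition $\mathcal{H}_m|\mathcal{H}_n$. The key observation is that absolute separability of $\Phi[\varrho]$ depends only on the spectrum of $\Phi[\varrho]$ (as stressed in the subsection on criteria of absolute separability): two states related by a unitary are either both absolutely separable or both not. So it suffices to produce, for each $\varrho$, \emph{some} unitarily-equivalent copy of $\Phi[\varrho]$ that is known to be separable, and by covariance this copy will itself be an output of $\Phi$. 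Concretely: given $\varrho$ and $V$, let $W \in \mathrm{SU}(mn)$ be any unitary such that $W\,\varrho\,W^{\dag}$ has the property that $V W^{\dag}\,\Phi[W \varrho W^{\dag}]\,W V^{\dag}$ — no, more directly: using covariance, $\Phi[W\varrho W^{\dag}] = W\,\Phi[\varrho]\,W^{\dag}$ for all $W$, so as $W$ ranges over $\mathrm{SU}(mn)$ the outputs $\Phi[W\varrho W^{\dag}]$ range over \emph{all} unitary conjugates of $\Phi[\varrho]$. In particular, for the given $V$, the state $V\,\Phi[\varrho]\,V^{\dag}$ equals $\Phi[V\varrho V^{\dag}]$, which is separable because $V\varrho V^{\dag}$ is a legitimate input state and $\Phi$ is entanglement annihilating. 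Since $V$ was arbitrary, $\Phi[\varrho]$ is absolutely separable, and since $\varrho$ was arbitrary, $\Phi$ is absolutely separating.

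\textbf{Where the care is needed.} The only subtle point is the quantifier over unitaries: the definition of absolutely separable demands separability of $U\varrho U^{\dag}$ for \emph{all} unitary $U$ (or equivalently all $U\in\mathrm{SU}$, since a global phase is irrelevant), while covariance in \eqref{covariant} is stated for $U\in\mathrm{SU}(d)$. One should remark that this is harmless — any unitary differs from one in $\mathrm{SU}(d)$ by a scalar phase that cancels in the conjugation $U\varrho U^{\dag}$ — so the two quantifier ranges coincide for our purposes. Beyond that, the argument is a one-line application of the intertwining identity $\Phi[V\varrho V^{\dag}] = V\Phi[\varrho]V^{\dag}$; I do not expect any genuine obstacle.
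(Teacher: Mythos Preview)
Your proof is correct. For the direction EA $+$ covariant $\Rightarrow$ absolutely separating, your argument is identical to the paper's: one simply reads the covariance identity $V\Phi[\varrho]V^{\dag} = \Phi[V\varrho V^{\dag}]$ and observes that the right-hand side is separable because $\Phi$ is entanglement annihilating.

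For the converse (AS $\Rightarrow$ EA), you give the cleaner argument: you just invoke the inclusion $\mathcal{A}(m|n) \subset \mathcal{S}(\mathcal{H}_m|\mathcal{H}_n)$, which needs no covariance. The paper instead argues this direction by running covariance through pure states (fixing a pure $\ket{\psi}\bra{\psi}$, noting $U\Phi[\ket{\psi}\bra{\psi}]U^{\dag}$ is separable for all $U$, using covariance to rewrite this as $\Phi[U\ket{\psi}\bra{\psi}U^{\dag}]$, then extending to mixed inputs by convexity). That route is correct but unnecessarily elaborate for this direction; your observation that AS $\Rightarrow$ EA holds for \emph{any} positive map, covariant or not, is the right way to see it. One small remark: the false start in your middle paragraph (``produce \ldots\ some unitarily-equivalent copy of $\Phi[\varrho]$ that is known to be separable'') would not have sufficed had you not corrected it --- a single separable conjugate does not imply absolute separability --- but you caught this yourself.
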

\begin{proof}
Suppose $\Phi$ is covariant and entanglement annihilating. Since
$\Phi$ is entanglement annihilating, then the left hand side of
equation~\eqref{covariant} is separable for all $U$ with respect to
partition $\mathcal{H}_m|\mathcal{H}_n$. Due to covariance
property it means that $U \Phi[\varrho] U^{\dag} \in
\mathcal{S}(\mathcal{H}_m|\mathcal{H}_n)$ for all unitary $U$,
i.e. $\Phi$ is PAS($m|n$).

Suppose $\Phi$ is covariant and absolutely separating with respect
to partition $m|n$. Consider pure states $\varrho =
\ket{\psi}\bra{\psi} \in \mathcal{S}(\mathcal{H}_{mn})$. Since
$\Phi$ is absolutely separating, the right hand side of
equation~\eqref{covariant} is separable with respect to a fixed
partition $\mathcal{H}_m|\mathcal{H}_n$ for all $U$. By covariance
this implies $\Phi[U \ket{\psi}\bra{\psi}U^{\dag}] \in
\mathcal{S}(\mathcal{H}_m|\mathcal{H}_n)$ for all unitary $U$,
i.e. $\Phi[\ket{\varphi}\bra{\varphi}] \in
\mathcal{S}(\mathcal{H}_m|\mathcal{H}_n)$ for all pure states
$\ket{\varphi}$. Since the set of input states
$\mathcal{S}(\mathcal{H}_{mn})$ is convex, it implies that
$\Phi[\varrho_{\rm in}] \in
\mathcal{S}(\mathcal{H}_m|\mathcal{H}_n)$ for all input states
$\varrho_{\rm in}$, i.e. $\Phi$ is entanglement annihilating with
respect to partition $\mathcal{H}_m|\mathcal{H}_n$.
\end{proof}

\begin{example}
\label{example-global-depolarizing-as} The depolarizing channel
$\mathcal{D}_q:\mathcal{S}(\mathcal{H}_{mn}) \mapsto
\mathcal{S}(\mathcal{H}_{mn})$ is known to be
$\text{PEA}(\mathcal{H}_m | \mathcal{H}_n)$ if $q \leqslant
\frac{2}{mn+2}$~\cite{filippov-2014,lami-huber-2016}. Therefore,
$\mathcal{D}_q$ is absolutely separating with respect to partition
$m|n$ if $q \leqslant \frac{2}{mn+2}$ because $\mathcal{D}_q$ is
covariant.
\end{example}

The following results show the behaviour of absolutely separating
maps under tensor product.

\begin{proposition}
Suppose $\Phi_1: \mathcal{S}(\mathcal{H}_{m_1 n_1}) \mapsto
\mathcal{S}(\mathcal{H}_{m_1 n_1})$ and $\Phi_2:
\mathcal{S}(\mathcal{H}_{m_2 n_2}) \mapsto
\mathcal{S}(\mathcal{H}_{m_2 n_2})$ are such positive maps that
$\Phi = \Phi_1 \otimes \Phi_2$ is absolutely separating with
respect to partition $m_1 m_2 | n_1 n_2$. Then $\Phi_1$ is
$\text{PAS}(m_1|n_1)$ and $\Phi_2$ is $\text{PAS}(m_2|n_2)$.
\end{proposition}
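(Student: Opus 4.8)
The plan is to prove the contrapositive for each factor; by the evident symmetry between the two factors, it suffices to show that if $\Phi_1 \notin \text{PAS}(m_1|n_1)$ then $\Phi = \Phi_1 \otimes \Phi_2$ is not absolutely separating with respect to $m_1 m_2 \,|\, n_1 n_2$, and the statement for $\Phi_2$ then follows by interchanging the indices $1 \leftrightarrow 2$. So assume $\Phi_1$ is not $\text{PAS}(m_1|n_1)$. By the definition of $\mathcal{A}(m_1|n_1)$ there exist an input state $\varrho_1 \in \mathcal{S}(\mathcal{H}_{m_1 n_1})$ and a unitary $U_1$ on $\mathcal{H}_{m_1 n_1} = \mathcal{H}_{m_1}^{A_1} \otimes \mathcal{H}_{n_1}^{B_1}$ such that $\sigma := U_1 \, \Phi_1[\varrho_1] \, U_1^{\dag}$ is entangled with respect to $\mathcal{H}_{m_1}^{A_1} \,|\, \mathcal{H}_{n_1}^{B_1}$.

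Next I would turn this into a witness for $\Phi$. Identify $\mathcal{H}_{m_1 m_2} = \mathcal{H}_{m_1}^{A_1} \otimes \mathcal{H}_{m_2}^{A_2}$ and $\mathcal{H}_{n_1 n_2} = \mathcal{H}_{n_1}^{B_1} \otimes \mathcal{H}_{n_2}^{B_2}$, so that the fixed bipartition $m_1 m_2 \,|\, n_1 n_2$ is the cut $A_1 A_2 \,|\, B_1 B_2$, with $\Phi_1$ acting on the $A_1 B_1$ part and $\Phi_2$ on the $A_2 B_2$ part. Choose any $\varrho_2 \in \mathcal{S}(\mathcal{H}_{m_2 n_2})$ (for instance the maximally mixed state) and set $\tau := \Phi_2[\varrho_2] \in \mathcal{S}(\mathcal{H}_{m_2 n_2})$. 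Feeding $\Phi$ the product input $\varrho_1 \otimes \varrho_2$ gives output $\Phi_1[\varrho_1] \otimes \Phi_2[\varrho_2]$, and applying the unitary $U_1 \otimes I$ (with $U_1$ on the $A_1 B_1$ factors and $I$ on the $A_2 B_2$ factors, reshuffled to the grouping $A_1 A_2 \,|\, B_1 B_2$) yields
\[
(U_1 \otimes I) \, \Phi[\varrho_1 \otimes \varrho_2] \, (U_1 \otimes I)^{\dag} = \sigma \otimes \tau .
\]

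It then remains to see that $\sigma \otimes \tau$ is entangled across $A_1 A_2 \,|\, B_1 B_2$. The partial trace ${\rm tr}_{A_2 B_2} = {\rm tr}_{A_2} \otimes {\rm tr}_{B_2}$ is a local operation for this cut, hence it maps states separable across $A_1 A_2 \,|\, B_1 B_2$ to states separable across $A_1 \,|\, B_1$; since ${\rm tr}_{A_2 B_2}[\sigma \otimes \tau] = \sigma$ (using ${\rm tr}\,\tau = 1$), separability of $\sigma \otimes \tau$ would force separability of $\sigma$, contradicting the choice of $U_1$. Thus $\sigma \otimes \tau \notin \mathcal{S}(\mathcal{H}_{m_1 m_2}\,|\,\mathcal{H}_{n_1 n_2})$, so $\Phi[\varrho_1 \otimes \varrho_2] \notin \mathcal{A}(m_1 m_2 \,|\, n_1 n_2)$, contradicting the hypothesis that $\Phi$ is absolutely separating with respect to $m_1 m_2 \,|\, n_1 n_2$. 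Hence $\Phi_1 \in \text{PAS}(m_1|n_1)$, and by symmetry $\Phi_2 \in \text{PAS}(m_2|n_2)$.

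I do not expect a genuine obstacle here; the delicate points are purely bookkeeping. One is keeping the tensor factors ordered correctly, so that the interleaved space $\mathcal{H}^{A_1} \otimes \mathcal{H}^{B_1} \otimes \mathcal{H}^{A_2} \otimes \mathcal{H}^{B_2}$ is regrouped as the fixed bipartition $A_1 A_2 \,|\, B_1 B_2$ on which $\Phi$ acts. The other is justifying that adjoining the arbitrary ancillary state $\tau$ on the $A_2 B_2$ side cannot wash out the entanglement already present in $\sigma$ across $A_1 \,|\, B_1$ — which is exactly the partial-trace (LOCC monotonicity) argument above; equivalently one may invoke the standard fact that $\sigma \otimes \tau$ is separable across $A_1 A_2 \,|\, B_1 B_2$ if and only if $\sigma$ is separable across $A_1 \,|\, B_1$ and $\tau$ is separable across $A_2 \,|\, B_2$.
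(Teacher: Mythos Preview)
Your proof is correct and follows essentially the same route as the paper: both feed $\Phi$ a product input $\varrho_1\otimes\varrho_2$, apply a product unitary ($U_1\otimes I$ for you, $U_1\otimes U_2$ in the paper), and use that the partial trace over the second factor sends states separable across $A_1A_2\,|\,B_1B_2$ to states separable across $A_1\,|\,B_1$. The only cosmetic difference is that you argue by contrapositive while the paper argues directly.
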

\begin{proof}
Let $\varrho_{\rm in} = \varrho_1 \otimes \varrho_2$, where
$\varrho_1 \in \mathcal{S}(\mathcal{H}_{m_1 n_1})$ and $\varrho_2
\in \mathcal{S}(\mathcal{H}_{m_2 n_2})$, then
\begin{equation}
\label{U-rotated-factorized} U \Phi (\varrho_1 \otimes \varrho_2)
U^{\dagger} = U \Phi_1(\varrho_1) \otimes \Phi_2(\varrho_2)
U^{\dagger}
\end{equation}

\noindent is separable with respect to a specific bipartition
$\mathcal{H}_{m_1 m_2}^{AB}|\mathcal{H}_{n_1 n_2}^{CD}$ for any
unitary operator $U$. So the state \eqref{U-rotated-factorized}
can be written as
\begin{equation}
U \Phi_1 (\varrho_1) \otimes \Phi_2 (\varrho_2) U^{\dagger} =
\sum_k p_k \varrho_{k}^{AB} \otimes \varrho_{k}^{CD}.
\end{equation}

\noindent Tracing out subsystem $BD$ we get
\begin{equation}
{\rm tr}_{BD} \left( \sum_k p_k \varrho_{k}^{AB} \otimes
\varrho_{k}^{CD} \right) = \sum_k p_k \varrho_k^A \otimes
\varrho_k^C,
\end{equation}

\noindent which is separable with respect to bipartition $A|C$.
Suppose $U = U_1 \otimes U_2$ in \eqref{U-rotated-factorized},
then we obtain that $U_1 \Phi_1(\varrho_1) U_1^{\dagger}$ is
separable with respect to bipartition $A|C$ for all $U_1$, which
means that $\Phi_1$ is $\text{PAS}(m_1|n_1)$. By the same line of
reasoning, $\Phi_2$ is $\text{PAS}(m_2|n_2)$.
\end{proof}

However, even if two maps $\Phi_1 \in \text{PAS}(m_1|n_1)$ and
$\Phi_2 \in \text{PAS}(m_2|n_2)$, the map $\Phi_1 \otimes \Phi_2$
can still be not absolutely separable with respect to partition
$m_1 m_2 | n_1 n_2$, which is illustrated by the following
example.

\begin{example}\label{ASdepolarizingglobal}
Consider a four qubit map $\Phi: \mathcal{S}(\mathcal{H}_{16})
\mapsto \mathcal{S}(\mathcal{H}_{16})$ of the form $\Phi = {\cal
D}_q \otimes {\cal D}_q$, where ${\cal D}_q:
\mathcal{S}(\mathcal{H}_{4}) \mapsto \mathcal{S}(\mathcal{H}_{4})$
is a two qubit global depolarizing channel given by
equation~\eqref{depolarizing}. Let $q = \frac{1}{3}$ then ${\cal
D}_{1/3}$ is absolutely separating with respect to partition $2|2$
by example~\ref{example-global-depolarizing-as}. Despite the fact
that both parts of the tensor product ${\cal D}_{1/3} \otimes
{\cal D}_{1/3}$ are absolutely separating with respect to $2|2$,
$\Phi$ is not absolutely separating with respect to $4|4$. In
fact, let $U
= \begin{pmatrix} I_7 & 0 & 0 & 0 \\
0 & \frac{1}{\sqrt{2}} & -\frac{\rmi}{\sqrt{2}} & 0 \\
0 & \frac{\rmi}{\sqrt{2}} & \frac{1}{\sqrt{2}} & 0 \\
0 & 0 & 0 & I_7
\end{pmatrix}$ be a $16\times 16$ unitary matrix in the conventional four-qubit basis, $\varrho = (\ket{\psi}\bra{\psi})^{\otimes 2}$, $|\psi\rangle =
\frac{1}{\sqrt{2}} (|00\rangle+|11\rangle)$, then $U \Phi[\varrho]
U^{\dagger}$ is entangled with respect to partition
$\mathcal{H}_4|\mathcal{H}_4$ because the the partially transposed
output density matrix $(U \Phi[\varrho] U^{\dagger})^{\Gamma}$ has
negative eigenvalue $\lambda < -0.0235$. Thus, $\Phi = {\cal
D}_{1/3}\otimes{\cal D}_{1/3}$ is not absolutely separating with
respect to partition $4|4$ even though each ${\cal D}_{1/3}$ is
absolutely separating with respect to partition $2|2$.
\end{example}

The practical criterion to detect absolutely separating channels
follows from the consideration of norms. Let us recall that for a
given linear map $\Phi$ and real numbers $1\leq p, q\leq\infty$,
the induced Schatten superoperator norm \cite{KR04, KNR05, WAT05}
of $\Phi$ is defined by formula
\begin{equation}
\label{Schatten} \|\Phi\|_{q \rightarrow
p}:=\sup\limits_{X}\bigl\{\|\Phi[X]\|_p:\|X\|_q = 1\bigr\},
\end{equation}

\noindent where $\|\cdot\|_p$ and $\|\cdot\|_q$ are the Schatten
$p$- and $q$-norms, i.e. $\|A\|_p=\left[{\rm
tr}\left((A^{\dagger}A)^{\frac{p}{2}}\right)\right]^{\frac{1}{p}}$.
Physically, in the case $q=1$ and $p=2$ equation~\eqref{Schatten}
provides the maximal output purity $(\|\Phi\|_{1 \rightarrow 2})^2
= \max_{\varrho \in \mathcal{S}(\mathcal{H})} {\rm
tr}[(\Phi[\varrho])^2]$.

\begin{proposition}
\label{proposition-ball} A positive linear map
$\Phi:\mathcal{S}(\mathcal{H}_{mn}) \mapsto
\mathcal{S}(\mathcal{H}_{mn})$ is absolutely separating with
respect to partition $m|n$ if
\begin{equation}
\label{ASmap} (\|\Phi\|_{1\rightarrow2})^2 \leqslant
\frac{1}{mn-1}.
\end{equation}
\end{proposition}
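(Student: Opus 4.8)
The plan is to connect the maximal output purity of $\Phi$ to the separable-ball criterion \eqref{ball}, which was already established to imply absolute separability. Concretely, for any input state $\varrho_{\rm in} \in \mathcal{S}(\mathcal{H}_{mn})$, the output $\varrho_{\rm out} = \Phi[\varrho_{\rm in}]$ satisfies ${\rm tr}[\varrho_{\rm out}^2] = \|\varrho_{\rm out}\|_2^2$. Since $\varrho_{\rm in}$ is a density operator, $\|\varrho_{\rm in}\|_1 = {\rm tr}[\varrho_{\rm in}] = 1$, so by the definition \eqref{Schatten} of the induced norm with $q=1$, $p=2$ we immediately get $\|\Phi[\varrho_{\rm in}]\|_2 \leqslant \|\Phi\|_{1 \rightarrow 2}$, hence ${\rm tr}[\varrho_{\rm out}^2] \leqslant (\|\Phi\|_{1 \rightarrow 2})^2$.

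First I would make this observation precise, noting that the supremum in \eqref{Schatten} over all $X$ with $\|X\|_1 = 1$ is in particular an upper bound over the density operators, which form a subset of the unit sphere of the trace norm. Then, invoking the hypothesis \eqref{ASmap}, we obtain ${\rm tr}[\varrho_{\rm out}^2] \leqslant (\|\Phi\|_{1 \rightarrow 2})^2 \leqslant \frac{1}{mn-1}$ for every output state. By the separable-ball result \eqref{ball} recalled earlier in the text — namely that ${\rm tr}[\varrho^2] \leqslant \frac{1}{mn-1}$ implies $\varrho \in \mathcal{S}(\mathcal{H}_m | \mathcal{H}_n)$, and since the Frobenius norm is unitarily invariant, in fact $\varrho \in \mathcal{A}(m|n)$ — every output $\Phi[\varrho_{\rm in}]$ lies in $\mathcal{A}(m|n)$. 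This is exactly the statement that $\Phi$ is $\text{PAS}(m|n)$.

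There is essentially no hard part here: the proof is a one-line combination of the definition of the $1 \rightarrow 2$ norm with the already-cited Gurvits–Barnum-type separable-ball bound \eqref{ball}. The only point requiring a modicum of care is to confirm that the relevant maximization in \eqref{Schatten} really dominates the purity of all output states; this follows because $\mathcal{S}(\mathcal{H}_{mn})$ is contained in the unit ball of $\|\cdot\|_1$ (indeed in its boundary sphere), so the supremum over $\|X\|_1 = 1$ is at least the supremum of $\|\Phi[\varrho]\|_2$ over states $\varrho$. One should also keep in mind that positivity of $\Phi$ guarantees $\Phi[\varrho_{\rm in}]$ is again a valid density operator, so that \eqref{ball} is applicable to it. I would remark in passing that the converse need not hold — a map can be absolutely separating without satisfying \eqref{ASmap}, since $\mathcal{A}(m|n)$ is strictly larger than the separable ball — so \eqref{ASmap} is only a sufficient condition, consistent with the wording of the proposition.
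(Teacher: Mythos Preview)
Your proof is correct and follows essentially the same approach as the paper: use the definition of $\|\Phi\|_{1\rightarrow 2}$ to bound the output purity, then invoke the separable-ball criterion \eqref{ball} to conclude that every output lies in $\mathcal{A}(m|n)$. Your exposition is more detailed (and includes the useful remark that the condition is only sufficient), but the argument is identical in substance.
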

\begin{proof}
If \eqref{ASmap} holds, then the state $\Phi[\varrho]$ satisfies
equation~\eqref{ball} and belongs to the separability ball, i.e.
$\Phi[\varrho]$ is absolutely separable with respect to partition
$m|n$ for all $\varrho \in \mathcal{S}(\mathcal{H}_{mn})$.
\end{proof}

\begin{proposition}
\label{proposition-N-qubits} A positive linear map
$\Phi:\mathcal{S}(\mathcal{H}_{2^N}) \mapsto
\mathcal{S}(\mathcal{H}_{2^N})$ is absolutely separating with
respect to partition $\underbrace{2 | \ldots |2}_{N~\text{times}}$
if
\begin{equation}
\label{ASmap-N-qubits} (\|\Phi\|_{1\rightarrow2})^2 \leqslant
\frac{1}{2^N} \left( 1 + \frac{54}{17} \, 3^{-N} \right).
\end{equation}
\end{proposition}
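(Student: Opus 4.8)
The plan is to mirror the proof of Proposition~\ref{proposition-ball} but replace the bipartite separability ball~\eqref{ball} by the multipartite separability ball~\eqref{ball-N-qubits}. The key observation is that the Schatten superoperator norm $\|\Phi\|_{1\rightarrow 2}$ controls the purity of \emph{every} output: by the physical interpretation recorded just before Proposition~\ref{proposition-ball}, $(\|\Phi\|_{1\rightarrow 2})^2 = \max_{\varrho\in\mathcal{S}(\mathcal{H}_{2^N})} {\rm tr}[(\Phi[\varrho])^2]$, so the hypothesis~\eqref{ASmap-N-qubits} says exactly that ${\rm tr}[(\Phi[\varrho])^2] \leqslant \frac{1}{2^N}(1+\frac{54}{17}3^{-N})$ for all input states $\varrho$.

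First I would fix an arbitrary input state $\varrho\in\mathcal{S}(\mathcal{H}_{2^N})$ and set $\varrho_{\rm out} = \Phi[\varrho]$; since $\Phi$ is a positive linear map sending states to states, $\varrho_{\rm out}$ is again a legitimate $N$-qubit density operator. Next I would invoke the hypothesis~\eqref{ASmap-N-qubits} together with the norm identity above to conclude ${\rm tr}[\varrho_{\rm out}^2] \leqslant \frac{1}{2^N}(1 + \frac{54}{17}\,3^{-N})$. Then I would apply the sufficient condition for multipartite absolute separability~\eqref{ball-N-qubits} (the separability-ball result of~\cite{Hildebrand-2007}), which gives $\varrho_{\rm out}\in\mathcal{A}(\underbrace{2|\ldots|2}_{N})$. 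Since $\varrho$ was arbitrary, $\Phi$ maps every input state into $\mathcal{A}(2|\ldots|2)$, i.e. $\Phi$ is absolutely separating with respect to the partition $\underbrace{2|\ldots|2}_{N\text{ times}}$.

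There is essentially no obstacle here: the argument is a direct chaining of the norm-purity identity with the cited separability-ball bound, completely parallel to Proposition~\ref{proposition-ball}. The only point deserving a word of care is that one must use the \emph{multipartite} ball~\eqref{ball-N-qubits} rather than the bipartite ball~\eqref{ball}; the former is strictly smaller (its radius has the extra factor $1+\frac{54}{17}3^{-N}$ controlling simultaneous separability across all $A_1|\cdots|A_N$ groupings), which is exactly why the weaker-looking threshold in~\eqref{ASmap-N-qubits} still suffices. I would therefore write:

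\begin{proof}
For any input state $\varrho \in \mathcal{S}(\mathcal{H}_{2^N})$ the output $\varrho_{\rm out} = \Phi[\varrho]$ satisfies ${\rm tr}[\varrho_{\rm out}^2] \leqslant (\|\Phi\|_{1\rightarrow 2})^2 \leqslant \frac{1}{2^N}\left(1 + \frac{54}{17}\,3^{-N}\right)$, so $\varrho_{\rm out}$ lies in the multipartite separability ball~\eqref{ball-N-qubits} and hence $\varrho_{\rm out} \in \mathcal{A}(\underbrace{2|\ldots|2}_{N\text{ times}})$. As $\varrho$ was arbitrary, $\Phi$ is absolutely separating with respect to partition $\underbrace{2|\ldots|2}_{N\text{ times}}$.
\end{proof}
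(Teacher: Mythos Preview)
Your proof is correct and follows exactly the same approach as the paper: both argue that the hypothesis~\eqref{ASmap-N-qubits} bounds the output purity so that every $\Phi[\varrho]$ lies in the multipartite separability ball~\eqref{ball-N-qubits}, whence $\Phi$ is absolutely separating with respect to $2|\ldots|2$. (A minor inaccuracy in your commentary: for $N=2$ the threshold in~\eqref{ball-N-qubits} is actually slightly \emph{larger} than $1/(2^N-1)$, not smaller, but this does not affect the proof.)
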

\begin{proof}
If \eqref{ASmap-N-qubits} holds, then the $N$-qubit state
$\Phi[\varrho]$ satisfies equation~\eqref{ball-N-qubits} and
belongs to the full separability ball, i.e. $\Phi[\varrho]$ is
absolutely separable with respect to partition $2 | \ldots |2$ for
all $\varrho \in \mathcal{S}(\mathcal{H}_{2^N})$.
\end{proof}

A necessary condition for the map
$\Phi:\mathcal{S}(\mathcal{H}_{mn}) \mapsto
\mathcal{S}(\mathcal{H}_{mn})$ to be absolutely separating with
respect to partition $m|n$ follows from equation~\eqref{necessary}
which must be satisfied by all output states $\Phi[\varrho]$. If a
map has a local structure, $\Phi = \Phi_1 \otimes \Phi_2$, then
the output state $\Phi[\varrho_1 \otimes \varrho_2] =
\Phi_1[\varrho_1] \otimes \Phi_2[\varrho_2]$ is factorized for
factorized input states $\varrho_1 \otimes \varrho_2$.

\begin{proposition}
\label{proposition-not-AS} A local map $\Phi_1 \otimes \Phi_2$
with $\Phi_1:\mathcal{S}(\mathcal{H}_{m}) \mapsto
\mathcal{S}(\mathcal{H}_{m})$ and
$\Phi_2:\mathcal{S}(\mathcal{H}_{n}) \mapsto
\mathcal{S}(\mathcal{H}_{n})$ is not absolutely separating with
respect to partition $m|n$ if the image (range) of $\Phi_1$ or
$\Phi_2$ contains a boundary point of
$\mathcal{S}(\mathcal{H}_{m})$ or $\mathcal{S}(\mathcal{H}_{n})$,
respectively.
\end{proposition}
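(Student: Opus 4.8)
The plan is to exhibit, for such a local map $\Phi_1\otimes\Phi_2$, an explicit family of input states whose output violates the necessary condition for absolute separability, equation~\eqref{necessary}. Suppose without loss of generality that the range of $\Phi_1$ contains a boundary point $\sigma_1\in\partial\mathcal{S}(\mathcal{H}_m)$, so there is an input $\varrho_1^\ast$ with $\Phi_1[\varrho_1^\ast]=\sigma_1$. A boundary density operator $\sigma_1$ has at least one zero eigenvalue. The key observation, already used in the discussion preceding Proposition~\ref{proposition-not-AS-purity}, is that for \emph{any} state $\sigma_2\in\mathcal{S}(\mathcal{H}_n)$ the product $\sigma_1\otimes\sigma_2$ then has at least $n\geqslant 2$ zero eigenvalues, so its three smallest ordered eigenvalues all vanish, $\lambda_{mn-2}=\lambda_{mn-1}=\lambda_{mn}=0$, while $\lambda_1>0$. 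Hence equation~\eqref{necessary} fails and $\sigma_1\otimes\sigma_2\notin\mathcal{A}(m|n)$.

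The second step is simply to connect this to the map: choosing the factorized input $\varrho_{\rm in}=\varrho_1^\ast\otimes\varrho_2$ for an arbitrary $\varrho_2\in\mathcal{S}(\mathcal{H}_n)$, the output is the factorized state $\Phi_1[\varrho_1^\ast]\otimes\Phi_2[\varrho_2]=\sigma_1\otimes\Phi_2[\varrho_2]$, which by the previous paragraph is not absolutely separable with respect to $m|n$. Therefore $\Phi_1\otimes\Phi_2$ is not PAS($m|n$). The case where instead the range of $\Phi_2$ meets $\partial\mathcal{S}(\mathcal{H}_n)$ is handled symmetrically, using that a zero eigenvalue of $\sigma_2=\Phi_2[\varrho_2^\ast]$ forces at least $m\geqslant 2$ zero eigenvalues of $\varrho_1\otimes\sigma_2$.

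I do not anticipate a serious obstacle here; the argument is essentially a one-line eigenvalue count combined with the multiplicativity of the spectrum under tensor products. The only point requiring minor care is the bookkeeping of the ordered spectrum: one must note that a single zero eigenvalue of $\sigma_1$ contributes a whole block of $n$ zeros to $\sigma_1\otimes\sigma_2$ (regardless of the rank of $\sigma_2$), which is exactly what guarantees that the \emph{three} smallest eigenvalues vanish once $n\geqslant 2$, and likewise with the roles of $m$ and $n$ exchanged. Since $\lambda_1$ is strictly positive (any nonzero state has a positive largest eigenvalue), the strict inequality $\lambda_1>\lambda_{mn-2}+\lambda_{mn-1}+\lambda_{mn}=0$ holds, violating~\eqref{necessary}. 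This mirrors the reasoning used earlier in the excerpt for factorized states $\varrho_1\otimes\varrho_2$ with $\varrho_1$ or $\varrho_2$ on the boundary, now applied to the images of the local maps rather than to the inputs.
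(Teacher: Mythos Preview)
Your proof is correct and follows essentially the same approach as the paper: choose a factorized input so that the output is $\Phi_1[\varrho_1^\ast]\otimes\Phi_2[\varrho_2]$ with one factor on the boundary, then invoke the observation (stated just after equation~\eqref{necessary}) that such a tensor product has at least $n\geqslant 2$ (respectively $m\geqslant 2$) zero eigenvalues and hence violates~\eqref{necessary}. The paper's proof is simply a terser version of yours, referring back to that earlier discussion rather than repeating the eigenvalue count.
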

\begin{proof}
Suppose the image of $\Phi_1$ contains a boundary point of
$\mathcal{S}(\mathcal{H}_{m})$, i.e. there exists a state
$\varrho_1$ such that $\Phi_1[\varrho_1] \in \partial
\mathcal{S}(\mathcal{H}_{m})$, then $\Phi_1[\varrho_1] \otimes
\Phi_2[\varrho_2]$ is not absolutely separating with respect to
partition $m|n$, see the discussion after equation~\eqref{necessary}.
Analogous proof takes place if the image of $\Phi_2$ contains a
boundary point of $\mathcal{S}(\mathcal{H}_{n})$.
\end{proof}

\begin{example}
Suppose $\Phi_1:\mathcal{S}(\mathcal{H}_m) \mapsto
\mathcal{S}(\mathcal{H}_m)$ is an amplitude damping
channel~\cite{nielsen-2000} and $\Phi_2:\mathcal{S}(\mathcal{H}_n)
\mapsto \mathcal{S}(\mathcal{H}_n)$ is an arbitrary channel, then
$\Phi_1 \otimes \Phi_2$ is not absolutely separating with respect
to $m|n$ by proposition~\ref{proposition-not-AS}, because $\Phi_1$
has a fixed point, which is a pure state.
\end{example}

Similarly, if the maximal output purity of a positive map is large
enough, then it cannot be absolutely separating.

\begin{proposition}
\label{proposition-anti-ball} A positive linear map
$\Phi:\mathcal{S}(\mathcal{H}_{mn}) \mapsto
\mathcal{S}(\mathcal{H}_{mn})$ is not absolutely separating with
respect to partition $m|n$ if
\begin{equation}
\label{notASmap} (\|\Phi\|_{1\rightarrow 2})^2 > \frac{9}{mn+8}.
\end{equation}
\end{proposition}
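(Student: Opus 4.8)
The plan is to combine two ingredients already available in the excerpt: the norm interpretation of the maximal output purity and the purity bound from Proposition~\ref{proposition-not-AS-purity}. Recall that $(\|\Phi\|_{1\rightarrow 2})^2 = \max_{\varrho \in \mathcal{S}(\mathcal{H}_{mn})} {\rm tr}[(\Phi[\varrho])^2]$, so the hypothesis \eqref{notASmap} asserts the existence of at least one input state $\varrho_{\ast}$ whose image $\Phi[\varrho_{\ast}]$ has purity strictly exceeding $\frac{9}{mn+8}$.

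First I would fix such a state $\varrho_\ast$ achieving (or approaching, by compactness of $\mathcal{S}(\mathcal{H}_{mn})$ the supremum is attained) the maximal output purity, and set $\sigma = \Phi[\varrho_\ast]$, so that ${\rm tr}[\sigma^2] > \frac{9}{mn+8}$. Second, I would invoke Proposition~\ref{proposition-not-AS-purity}: any absolutely separable state $\varrho \in \mathcal{A}(m|n)$ must satisfy \eqref{max-ball-approximate}, i.e. ${\rm tr}[\varrho^2] \leqslant \frac{9}{mn+8}$ (as reiterated in the remark after the proof of that proposition). Since $\sigma$ violates this bound, $\sigma \notin \mathcal{A}(m|n)$. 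Therefore there is an input state whose output under $\Phi$ fails to be absolutely separable with respect to $m|n$, so by definition $\Phi$ is not $\text{PAS}(m|n)$.

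The argument is essentially immediate once the right pieces are assembled; there is no genuine obstacle, only the bookkeeping of making sure the supremum in \eqref{Schatten} is realized by an actual density operator. This is harmless: $\mathcal{S}(\mathcal{H}_{mn})$ is compact and $\varrho \mapsto {\rm tr}[(\Phi[\varrho])^2]$ is continuous, so the maximum is attained; alternatively, if one prefers to avoid attainment, pick $\varrho_\ast$ with ${\rm tr}[(\Phi[\varrho_\ast])^2]$ close enough to the supremum that it still exceeds $\frac{9}{mn+8}$, which suffices since the inequality \eqref{notASmap} is strict. Either way the conclusion follows. I would phrase the proof in two sentences referencing \eqref{notASmap}, the identity $(\|\Phi\|_{1\rightarrow2})^2 = \max_\varrho {\rm tr}[(\Phi[\varrho])^2]$, and \eqref{max-ball-approximate}.
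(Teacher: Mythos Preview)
Your proposal is correct and follows essentially the same approach as the paper's proof: both use the hypothesis \eqref{notASmap} together with the identification of $(\|\Phi\|_{1\rightarrow 2})^2$ as the maximal output purity to produce a state $\varrho$ for which $\Phi[\varrho]$ violates \eqref{max-ball-approximate}, and then conclude via Proposition~\ref{proposition-not-AS-purity} that $\Phi[\varrho]\notin\mathcal{A}(m|n)$. Your additional remark about compactness and attainment of the supremum is a helpful piece of bookkeeping that the paper leaves implicit.
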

\begin{proof}
Inequality \eqref{notASmap} implies that there exists a state
$\varrho\in\mathcal{S}(\mathcal{H}_{mn})$ such that the output
state $\Phi[\varrho]$ violates
inequality~\eqref{max-ball-approximate}, i.e. $\Phi[\varrho]$ is
not absolutely separable with respect to partition $m|n$ and the
map $\Phi$ is not absolutely separating.
\end{proof}

\section{Tensor-stable absolutely separating maps}
\label{section-n-tensor-stable-as}

Suppose a map $\Phi:\mathcal{S}(\mathcal{H}_d) \mapsto
\mathcal{S}(\mathcal{H}_d)$. We will refer to $\Phi$ as
$N$-tensor-stable absolutely separating if $\Phi^{\otimes N}$ is
absolutely separating with respect to any valid partitions $m|n$,
$d^N=mn$, $m,n \geqslant 2$. If $\Phi$ is $N$-tensor-stable
absolutely separating for all $N=1,2,\ldots$, then $\Phi$ is
called tensor-stable absolutely separating. These definitions are
inspired by the paper~\cite{muller-hermes-2016}, where the
stability of positive maps under tensor product was studied.

In what follows we show that all $N$-tensor-stable absolutely
separating maps $\Phi:\mathcal{S}(\mathcal{H}_d) \mapsto
\mathcal{S}(\mathcal{H}_d)$ are close to the tracing map ${\rm
Tr}[\varrho] = {\rm tr}[\varrho] \frac{I_d}{d}$. To quantify such
a closeness, one can use either the maximal output purity
$(\|\Phi\|_{1\rightarrow 2})^2$ or the minimal output
entropy~\cite{holevo-2012}:
\begin{equation}
h(\Phi) = \min_{\varrho \in \mathcal{S}(\mathcal{H}_d)} {\rm
tr}\Big[ - \Phi[\varrho] \log \Phi[\varrho] \Big],
\end{equation}

\noindent where $\log$ stands for the natural logarithm. Note that
$\frac{1}{d} \leqslant (\|\Phi\|_{1\rightarrow 2})^2 \leqslant 1$
and $0 \leqslant h(\Phi) \leqslant \log d$. Since
$(\|\Phi\|_{1\rightarrow 2})^2 = \frac{1}{d}$ and $h(\Phi) = \log
d$ if and only if $\Phi = {\rm Tr}$, the differences
$(\|\Phi\|_{1\rightarrow 2})^2 - \frac{1}{d}$ and $\log d -
h(\Phi)$ can be interpreted as the measure of closeness between
maps $\Phi$ and ${\rm Tr}$.

\begin{proposition}
A map $\Phi:\mathcal{S}(\mathcal{H}_d) \mapsto
\mathcal{S}(\mathcal{H}_d)$ is not $N$-tensor-stable absolutely
separating if
\begin{equation}
\label{N-larger-purity} N > \frac{8}{d(\|\Phi\|_{1\rightarrow
2})^2 - 1} + 1
\end{equation}

\noindent or
\begin{equation}
\label{N-larger-entropy} N > 8 \left( \frac{\log d + 1}{\log d -
h(\Phi)} \right)^2 + 1.
\end{equation}
\end{proposition}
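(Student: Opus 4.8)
The plan is to bound the maximal output purity of $\Phi^{\otimes N}$ from below and then invoke Proposition~\ref{proposition-anti-ball}. First I would observe that the Schatten $1 \to 2$ norm is multiplicative under tensor product when the optimizer is a product state: if $\varrho_{\ast}$ is the state achieving $(\|\Phi\|_{1\rightarrow 2})^2 = \mathrm{tr}[(\Phi[\varrho_{\ast}])^2]$, then feeding $\varrho_{\ast}^{\otimes N}$ into $\Phi^{\otimes N}$ yields an output $(\Phi[\varrho_{\ast}])^{\otimes N}$ whose purity is exactly $(\|\Phi\|_{1\rightarrow 2})^{2N}$. Hence $(\|\Phi^{\otimes N}\|_{1\rightarrow 2})^2 \geqslant (\|\Phi\|_{1\rightarrow 2})^{2N}$. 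By Proposition~\ref{proposition-anti-ball} applied in dimension $d^N$, the map $\Phi^{\otimes N}$ fails to be absolutely separating with respect to any partition once $(\|\Phi\|_{1\rightarrow 2})^{2N} > \frac{9}{d^N+8}$. So it remains to show that condition~\eqref{N-larger-purity} forces this inequality.

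Write $\mu = (\|\Phi\|_{1\rightarrow 2})^2$, so $\mu \geqslant 1/d$ always. The target inequality $\mu^{N} > \frac{9}{d^N + 8}$ is equivalent to $(d\mu)^N > \frac{9 d^N}{d^N+8} = 9\bigl(1 - \frac{8}{d^N+8}\bigr)$, and since the right side is strictly less than $9$, it suffices to have $(d\mu)^N \geqslant 9$, i.e. $N \log(d\mu) \geqslant \log 9 = 2\log 3$. If $\mu = 1/d$ then $\Phi = \mathrm{Tr}$, which is absolutely separating for all $N$ and the hypothesis~\eqref{N-larger-purity} reads $N > \infty$, vacuously true; so assume $d\mu > 1$. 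Using the elementary bound $\log(1+x) \geqslant \frac{x}{1 + x/2} = \frac{2x}{2+x}$ for $x \geqslant 0$ with $x = d\mu - 1$, or more crudely $\log(d\mu) \geqslant \frac{d\mu - 1}{d\mu}$, I can compare against the stated threshold. Actually the cleanest route is: $N > \frac{8}{d\mu - 1} + 1$ together with $d\mu \leqslant d$ (since $\mu \leqslant 1$) gives $N(d\mu - 1) > 8 + (d\mu - 1) \geqslant 8$, so $(d\mu - 1) > 8/N$; one then checks $N \log(d\mu) \geqslant N \cdot \frac{d\mu-1}{d\mu} \geqslant N\cdot\frac{d\mu-1}{d} $ — here the factor $d$ in the denominator is the nuisance, so a slightly sharper estimate of $\log(d\mu)$ near $1$ is needed, which is where the constant $8$ (rather than a smaller one) and the $+1$ in~\eqref{N-larger-purity} are used.

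For the entropy version~\eqref{N-larger-entropy}, the analogous input is subadditivity of minimal output entropy under the same product construction: $h(\Phi^{\otimes N}) \leqslant N\, h(\Phi)$, since $\Phi^{\otimes N}$ applied to the entropy-minimizing product state gives output entropy $N h(\Phi)$. Then a state with entropy $s$ in dimension $D = d^N$ has purity at least $e^{-s}$ (or one uses a quantitative relation between low entropy and large purity), so $(\|\Phi^{\otimes N}\|_{1\rightarrow 2})^2 \geqslant e^{-N h(\Phi)}$, and again Proposition~\ref{proposition-anti-ball} gives non-absolute-separation once $e^{-N h(\Phi)} > \frac{9}{d^N + 8}$, i.e. roughly $N(\log d - h(\Phi)) > \log 9 - \frac{\text{(small)}}{\cdots}$. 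Translating this into the square-root form with the factor $(\log d + 1)$ in the denominator requires relating $\log D - s$ to the purity; the natural tool is a bound like $\mathrm{tr}[\varrho^2] \geqslant \exp\bigl(-2(\log D - S(\varrho))\cdot\frac{\log D}{?}\bigr)$-type inequality, or more simply Fannes-type / concavity estimates. The main obstacle in both cases is purely the arithmetic of converting the clean "$(\|\Phi\|_{1\rightarrow 2})^{2N} > 9/(d^N+8)$'' (resp. its entropic analogue) into the exact closed-form thresholds stated, including pinning down why the constant is $8$ and why the $(\log d + 1)^2$ appears; the conceptual content — product test states make purity multiplicative and entropy additive, then apply the anti-ball bound — is straightforward.
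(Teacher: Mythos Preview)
Your strategy is sound and genuinely different from the paper's, and the arithmetic you were unsure about closes cleanly. For~\eqref{N-larger-purity}: with $x=d\mu-1>0$ the hypothesis reads $(N-1)x>8$, and Bernoulli's inequality gives $(d\mu)^{N-1}=(1+x)^{N-1}\geqslant 1+(N-1)x>9$, hence $(d\mu)^N>9>9d^N/(d^N+8)$, i.e.\ $\mu^N>9/(d^N+8)$, so Proposition~\ref{proposition-anti-ball} applies. For~\eqref{N-larger-entropy}: your Jensen bound $\mathrm{tr}[\varrho^2]\geqslant e^{-S(\varrho)}$ gives $(\|\Phi^{\otimes N}\|_{1\to 2})^2\geqslant e^{-Nh(\Phi)}$, and then $e^{-Nh(\Phi)}>9/(d^N+8)$ follows from the linear condition $N(\log d-h(\Phi))>\log 9$. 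Since $\delta:=\log d-h(\Phi)\leqslant\log d<\log d+1$, the paper's hypothesis $N-1>8(\log d+1)^2/\delta^2$ gives $N\delta>8(\log d+1)^2/\delta\geqslant 8(\log d+1)>\log 9$, so your sharper linear threshold is implied by~\eqref{N-larger-entropy} and you are done---no Fannes estimate needed.

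The paper takes a different path: instead of bounding the maximal output purity of $\Phi^{\otimes N}$ and invoking Proposition~\ref{proposition-anti-ball}, it works directly with the eigenvalues of $(\Phi[\varrho])^{\otimes N}$ and shows that the necessary spectral condition~\eqref{necessary} is violated once $(d\lambda_1)^{N-1}>3$. For the purity bound it uses $\lambda_1^2\geqslant\mu/d$; for the entropy bound it passes through a Fannes-type continuity estimate to lower-bound $\lambda_1$ in terms of $\log d-h(\Phi)$, which is where the factor $(\log d+1)^2$ originates. Your route is shorter, avoids Fannes entirely, and in fact yields a strictly better entropy threshold; the paper's route is more explicit about which eigenvalue inequality fails.
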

\begin{proof}
Suppose the map $\Phi^{\otimes N}$ and the input state
$\varrho^{\otimes N}$, then $\Phi^{\otimes N}[\varrho^{\otimes N}]
= \left( \Phi[\varrho] \right)^{\otimes N}$. Let decreasingly
ordered eigenvalues of $\Phi[\varrho]$ be
$\lambda_1,\ldots,\lambda_d$, then the decreasingly ordered
eigenvalues $\Lambda_1,\ldots,\Lambda_{d^N}$ of $\left(
\Phi[\varrho] \right)^{\otimes N}$ satisfy the following
relations:
\begin{equation}
\lambda_1^N = \Lambda_1, \qquad \lambda_1 \lambda_d^{N-1}
\geqslant \Lambda_{d^N - 2}, \qquad \lambda_1 \lambda_d^{N-1}
\geqslant \Lambda_{d^N - 1}, \qquad \lambda_1 \lambda_d^{N-1}
\geqslant \lambda_d^{N} = \Lambda_{d^N}.
\end{equation}

If $\Lambda_1 > \Lambda_{d^N - 2} + \Lambda_{d^N - 1} +
\Lambda_{d^N}$, then $\left( \Phi[\varrho] \right)^{\otimes N}$ is
not absolutely separable with respect to any partition in view of
equation~\eqref{necessary} and $\Phi$ is not $N$-tensor-stable
absolutely separating. On the other hand, inequality $\Lambda_1 >
\Lambda_{d^N - 2} + \Lambda_{d^N - 1} + \Lambda_{d^N}$ follows
from the inequalities $\lambda_1^{N-1} > 3 \lambda_d^{N-1}$ and
$(d\lambda_1)^{N-1} > 3$ because $\frac{1}{d} \geqslant
\lambda_d$.

Let $\varrho$ be a state, which maximizes the purity of
$\Phi[\varrho]$, then $(\|\Phi\|_{1\rightarrow 2})^2 =
\sum_{i=1}^d \lambda_i^2$ and $\lambda_1^2 \geqslant \frac{1}{d}
(\|\Phi\|_{1\rightarrow 2})^2$. Consequently, $(d\lambda_1)^2
\geqslant d (\|\Phi\|_{1\rightarrow 2})^2$ and the inequality
\begin{equation}
\label{aux-purity-N-bound} d (\|\Phi\|_{1\rightarrow 2})^2 > 1 +
\frac{8}{N-1} > \sqrt[N-1]{9}
\end{equation}

\noindent implies $(d\lambda_1)^{N-1} > 3$. Finally, the first
inequality in equation~\eqref{aux-purity-N-bound} is equivalent to
inequality~\eqref{N-larger-purity} and provides a sufficient
condition for the map $\Phi$ not to be $N$-tensor-stable
absolutely separable.

Let $\varrho$ be a state, which minimizes the entropy of
$\Phi[\varrho]$, then $h(\Phi) = - \sum_{i=1}^d \lambda_i \log
\lambda_i$. Denote  $T=\|\Phi[\varrho] - \frac{1}{d} I\|_1 =
\sum_{i=1}^d |\lambda_i - \frac{1}{d}| < 2$, then $\frac{T}{2} =
\sum_{i:\ \lambda_i \geqslant \frac{1}{d}} (\lambda_i -
\frac{1}{d}) \leqslant d (\lambda_1 - \frac{1}{d})$ and $\lambda_1
\geqslant \frac{1}{d}\left( 1 + \frac{T}{2} \right)$. Using
results of the paper~\cite{fannes-1973}, we obtain
\begin{equation}
\log d - h(\Phi) \leqslant T \log d + \min\left(-T \log
T,\frac{1}\rme\right) \leqslant T \log d + \sqrt{2T} \leqslant
\sqrt{2T} (\log d + 1).
\end{equation}

\noindent Therefore,
\begin{equation}
\label{aux-entropy-N-bound} (d\lambda_1)^{N-1} \geqslant \left( 1
+ \frac{T}{2} \right)^{N-1} \geqslant 1 + \frac{N-1}{2} T
\geqslant 1 + \frac{N-1}{4} \left( \frac{\log d - h(\Phi)}{\log d
+ 1} \right)^2.
\end{equation}

\noindent If inequality \eqref{N-larger-entropy} is fulfilled,
then the right hand side of equation~\eqref{aux-entropy-N-bound}
is greater than 3, which implies $(d\lambda_1)^{N-1} > 3$ and
$\Phi$ is not $N$-tensor-stable absolutely separating.
\end{proof}

If $\Phi \neq {\rm Tr}$, then there exists $N$ such that
$\Phi^{\otimes N}$ is not absolutely separating. On the contrary,
if $\Phi = {\rm Tr}$, then $\Phi^{\otimes N}$ is absolutely
separating with respect to any partition for all $N$ because
$\Phi^{\otimes N}[\widetilde{\varrho}] = \frac{1}{d^N} I_{d^N}$
for all $\widetilde{\varrho}$.

\begin{corollary}
A map $\Phi:\mathcal{S}(\mathcal{H}_d) \mapsto
\mathcal{S}(\mathcal{H}_d)$ is tensor-stable absolutely separating
if and only if $\Phi={\rm Tr}$.
\end{corollary}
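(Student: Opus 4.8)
The plan is to prove the two implications separately, the substantive one being an immediate consequence of the proposition immediately preceding the corollary.

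For the ``if'' direction I would argue directly. If $\Phi = {\rm Tr}$, then for every $N$ and every input $\widetilde\varrho \in \mathcal{S}(\mathcal{H}_{d^N})$ one has $\Phi^{\otimes N}[\widetilde\varrho] = \frac{1}{d^N} I_{d^N}$, since each tensor factor outputs the maximally mixed state regardless of its marginal input. This state has purity ${\rm tr}[(\frac{1}{d^N} I_{d^N})^2] = \frac{1}{d^N} < \frac{1}{d^N - 1}$, so it lies strictly inside the separability ball \eqref{ball} and is therefore absolutely separable with respect to every valid bipartition $m|n$, $mn = d^N$. Hence $\Phi^{\otimes N}$ is absolutely separating for all $N$, i.e. $\Phi$ is tensor-stable absolutely separating.

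For the ``only if'' direction I would argue by contraposition. Suppose $\Phi \neq {\rm Tr}$. As already recorded in the excerpt, $(\|\Phi\|_{1\rightarrow 2})^2 = \frac{1}{d}$ holds if and only if $\Phi = {\rm Tr}$, so in our case $(\|\Phi\|_{1\rightarrow 2})^2 > \frac{1}{d}$, whence $d(\|\Phi\|_{1\rightarrow 2})^2 - 1 > 0$ and the real number $\frac{8}{d(\|\Phi\|_{1\rightarrow 2})^2 - 1} + 1$ is finite. Choosing any integer $N$ exceeding it, inequality \eqref{N-larger-purity} is satisfied, so by the preceding proposition $\Phi$ is not $N$-tensor-stable absolutely separating; in particular $\Phi$ fails to be tensor-stable absolutely separating. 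Contraposition then gives that tensor-stable absolute separability forces $\Phi = {\rm Tr}$.

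Since this is a corollary, there is no genuine obstacle: both the characterization of ${\rm Tr}$ via the maximal output purity and the quantitative bound \eqref{N-larger-purity} are already in hand. The only point worth a line of care is that the obstructing tensor power $N$ can be taken to be a \emph{finite} integer, which is precisely what \eqref{N-larger-purity} guarantees; the same conclusion follows equally well from the minimal-output-entropy bound \eqref{N-larger-entropy}.
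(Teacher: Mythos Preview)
Your proof is correct and follows the same approach as the paper: the ``if'' direction uses that ${\rm Tr}^{\otimes N}$ sends every state to the maximally mixed state $\frac{1}{d^N}I_{d^N}$, and the ``only if'' direction applies the preceding proposition via the characterization $(\|\Phi\|_{1\rightarrow 2})^2=\frac{1}{d}\iff\Phi={\rm Tr}$ to obtain a finite obstructing $N$. The paper records exactly this argument in the two sentences immediately preceding the corollary.
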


Physical interpretation of this result can be also based on the
fact that $\varrho^{\otimes N}$ allows Schumacher
compression~\cite{schumacher-1995}, namely, $\varrho^{\otimes N}
\approx P \oplus 0 \approx \ket{\psi}\bra{\psi} \otimes P$, where
$P$ is a projector onto the typical subspace of dimension
$\rme^{S(\varrho)N}$ and $\ket{\psi} \in \mathcal{H}_{\rme^{[\log
d - S(\varrho)]N}}$. If $S(\varrho) \neq \log d$, then for
sufficiently large number $N$ of identical mixed states $\varrho$
the dimension $\rme^{[\log d - S(\varrho)]N}$ exceeds 4, so
$\ket{\psi}$ can be transformed into an entangled state
$U\ket{\psi}$ by the action of a proper unitary operator $U$.

\section{Specific absolutely separating maps and channels}
\label{section-specific-as-channels}

In this section we focus on particular physical evolutions and
transformations, which either describe specific dynamical maps or
represent interesting examples of linear state transformations. We
characterize the region of parameters, where the map is absolutely
separating and find states robust to the loss of property to be
not absolutely separable.

\subsection{Local depolarizing qubit maps and channels}
\label{subsection-depolarizing}

Let us analyze a map of the form
$\mathcal{D}_{q_1}\otimes\mathcal{D}_{q_2}$, where
\begin{equation}
\mathcal{D}_{q}[X] = q X + (1-q) \text{tr}[X] \frac{1}{2}I.
\end{equation}
Map $\mathcal{D}_q$ is positive for $q\in[-1,1]$ and completely
positive if $q \in [-\frac{1}{3},1]$. As absolutely separating
maps are the subset of entanglement annihilating maps, it is worth
to mention that entanglement-annihilating properties of the map
$\mathcal{D}_{q_1}\otimes\mathcal{D}_{q_2}$ and their
generalizations (acting in higher dimensions) are studied in the
papers~\cite{filippov-2014,filippov-ziman-2013,lami-huber-2016}.

Since depolarizing maps are not sensitive to local changes of
basis states, we consider a pure input state
$|\psi\rangle\langle\psi|$, where $|\psi\rangle$ always adopts the
Schmidt decomposition
$|\psi\rangle=\sqrt{p}|00\rangle+\sqrt{1-p}|11\rangle$ in the
proper local bases. We denote $\varrho_{\rm out} =
\mathcal{D}_{q_1} \otimes \mathcal{D}_{q_2}
[|\psi\rangle\langle\psi|]$. Using
proposition~\ref{proposition-ball}, we conclude that
$\mathcal{D}_{q_1}\otimes\mathcal{D}_{q_2}$ is absolutely
separating with respect to partition $2|2$ if
$\text{tr}[\varrho_{\rm out}^2] \leqslant \frac{1}{3}$ for all $p
\in [0,1]$, which reduces to
\begin{equation}
\label{AbsSepLocalDep}
q_1^2 + q_2^2 + q_1^2 q_2^2 \leqslant \frac{1}{3}.
\end{equation}

Note that equation~\eqref{AbsSepLocalDep} provides only sufficient
condition for absolutely separating maps $\mathcal{D}_{q_1}
\otimes \mathcal{D}_{q_2}$. The area of parameters $q_1,q_2$
satisfying equation~\eqref{AbsSepLocalDep} is depicted in
figure~\ref{figure3}.

\begin{proposition}
Two-qubit local depolarizing map $\mathcal{D}_{q_1} \otimes
\mathcal{D}_{q_2}$ is absolutely separating with respect to
partition $2|2$ if and only if
\begin{eqnarray}\
&& \label{local-dep-1} q_1(1+|q_2|) \leqslant \sqrt{1-q_1^2}\, (1-|q_2|) ~~\text{if}~ q_1 \geqslant q_2,\\
&& \label{local-dep-2} q_2(1+|q_1|) \leqslant \sqrt{1-q_2^2}\,
(1-|q_1|) ~~\text{if}~ q_1 \leqslant q_2.
\end{eqnarray}
\end{proposition}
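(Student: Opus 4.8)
The plan is to characterize exactly when $\mathcal{D}_{q_1}\otimes\mathcal{D}_{q_2}$ is absolutely separating with respect to $2|2$ by using the known fact that the property of absolute separability depends only on the spectrum of the output, combined with the sharp criterion~\eqref{2-n} for $m=2$ (which, by~\cite{Johnston}, is necessary and sufficient for $\mathcal{A}(2|2)=\mathcal{A}_{\rm PPT}(2|2)$). Since the depolarizing maps are unitarily covariant on each qubit, and since $\mathcal{D}_{q_i}$ reduces to a contraction of the Bloch vector, the extreme cases are obtained on pure inputs. So I would first argue that it suffices to check the condition on outputs of pure input states $\ket{\psi}\bra{\psi}$, and by local-basis covariance one may take $\ket{\psi}=\sqrt{p}\,\ket{00}+\sqrt{1-p}\,\ket{11}$. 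This reduces the problem to a one-parameter family indexed by $p\in[0,1]$.

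Next I would compute the spectrum of $\varrho_{\rm out}=\mathcal{D}_{q_1}\otimes\mathcal{D}_{q_2}[\ket{\psi}\bra{\psi}]$ explicitly. Writing the Bloch/Fano form of $\ket{\psi}\bra{\psi}$ and applying the diagonal action of the two depolarizing channels on the Pauli components, one gets a $4\times4$ matrix whose eigenvalues are elementary functions of $p$, $q_1$, $q_2$. In fact $\varrho_{\rm out}$ is block-diagonal (an X-state), so its four eigenvalues come in two pairs, each pair being the eigenvalues of a $2\times2$ block; this makes $\lambda_1,\lambda_2,\lambda_3,\lambda_4$ (in decreasing order) computable in closed form as functions of $p$. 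Then I plug these into~\eqref{2-n}, i.e. $\lambda_1\leqslant\lambda_3+2\sqrt{\lambda_2\lambda_4}$ (note $n=2$, so $2n=4$, $2n-1=3$, $2n-2=2$), and the map is absolutely separating iff this holds for \emph{all} $p\in[0,1]$.

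The remaining work is an optimization over $p\in[0,1]$: determine for which $(q_1,q_2)$ the inequality $\lambda_1(p)-\lambda_3(p)-2\sqrt{\lambda_2(p)\lambda_4(p)}\leqslant0$ holds uniformly in $p$. I expect the worst case to occur at an endpoint $p=0$ or $p=1$ (the separable pure product states), since these maximize the spread of the spectrum; one should check this by monotonicity or by a derivative argument, being careful about which eigenvalue is largest as $p$ varies and which of $q_1,q_2$ dominates — this case split is exactly the origin of the two cases~\eqref{local-dep-1} and~\eqref{local-dep-2}, distinguished by whether $q_1\geqslant q_2$ or $q_1\leqslant q_2$. Evaluating the condition at the critical $p$ and simplifying algebraically (squaring to remove the square root, then factoring) should yield precisely $q_1(1+|q_2|)\leqslant\sqrt{1-q_1^2}\,(1-|q_2|)$ in the regime $q_1\geqslant q_2$, and symmetrically in the other regime.

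The main obstacle will be the case analysis in the uniform-in-$p$ optimization: verifying rigorously that the extremum of $\lambda_1-\lambda_3-2\sqrt{\lambda_2\lambda_4}$ over $p$ is attained at a product state (endpoint), and correctly tracking the reordering of eigenvalues and the absolute values $|q_i|$ (the channels may have negative parameters, $q_i\in[-1,1]$). Once the correct critical configuration is identified, the rest is routine algebraic simplification, and the sufficiency direction is immediate since~\eqref{2-n} is also sufficient for $\mathcal{A}(2|2)$ and the spectrum is all that matters.
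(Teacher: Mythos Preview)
Your proposal is correct and follows essentially the same approach as the paper: reduce to pure inputs in Schmidt form, apply the necessary and sufficient criterion~\eqref{2-n} with $n=2$ to the output spectrum, and optimize over the Schmidt parameter $p$. The paper's proof is terser---it simply asserts that the saturating $p$ is $0$ or $1$ (``it is not hard to see'')---whereas you correctly identify this endpoint optimization as the step requiring genuine care, especially the eigenvalue ordering and the sign tracking for $|q_i|$.
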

\begin{proof}
We use equation~\eqref{2-n} with $n=2$ and apply it to all possible
output states $\varrho_{\rm out} = \mathcal{D}_{q_1} \otimes
\mathcal{D}_{q_2} [|\psi\rangle\langle\psi|]$ with
$|\psi\rangle=\sqrt{p}|00\rangle+\sqrt{1-p}|11\rangle$. It is not
hard to see that the Schmidt decomposition parameter $p=0$ or $1$
for eigenvalues $\lambda_1,\ldots,\lambda_4$ saturating
inequality~\eqref{2-n}. If $p=0,1$, then equation~\eqref{2-n} reduces
to equations~\eqref{local-dep-1}--\eqref{local-dep-2}.
\end{proof}

The area of parameters $q_1,q_2$ satisfying
equations~\eqref{local-dep-1}--\eqref{local-dep-2} is shown in
figure~\ref{figure3}. The fact that $p=0,1$ in derivation of
equations~\eqref{local-dep-1}--\eqref{local-dep-2} means that, in the
case of local depolarizing noises, the \emph{factorized} states
exhibit the most resistance to absolute separability when affected
by local depolarizing noises.

If $q_1 = q_2 = q$, then the sufficient condition
\eqref{AbsSepLocalDep} provides $\mathcal{D}_q \otimes
\mathcal{D}_q \in \text{PAS}(2|2)$ if $|q|\leqslant
\sqrt{\frac{2}{\sqrt{3}}-1} \approx 0.3933$, whereas the exact
conditions \eqref{local-dep-1}--\eqref{local-dep-2} provide
$\mathcal{D}_q \otimes \mathcal{D}_q \in \text{PAS}(2|2)$ if
$|q|\leqslant q_{\ast} \approx 0.3966$, with $q_{\ast}$ being a
solution of equation $2q_{\ast}^3 - 2q_{\ast}^2 + 3q_{\ast}-1=0$.

The boundary points of both equation~\eqref{AbsSepLocalDep} and
equations~\eqref{local-dep-1}--\eqref{local-dep-2} are $q_1 = \pm
\frac{1}{\sqrt{5}}, q_2 = \pm {\frac{1}{3}}$ and $q_1 = \pm
{\frac{1}{3}}, q_2 = \pm \frac{1}{\sqrt{5}}$. Let us recall that
$\mathcal{D}_{q_1}\otimes\mathcal{D}_{q_2}$ is entanglement
breaking if and only if $|q_1|, |q_2| \leqslant \frac{1}{3}$.
Thus, the two qubit map
$\mathcal{D}_{q_1}\otimes\mathcal{D}_{q_2}$ can be entanglement
breaking but not absolutely separating and vice versa. Thus,
$\text{PAS}(2|2)\not\subset\text{EB}$ and
$\text{EB}\not\subset\text{PAS}(2|2)$. This is related with the
fact that factorized states remain separable under the action of
local depolarizing channels, but they are the most robust states
with respect to preserving the property not to be absolutely
separable.

Moreover, $\text{PAS}(2|2)\not\subset\text{CPT}$. In fact,
$\mathcal{D}_{q_1}\otimes\mathcal{D}_{q_2}$ is completely positive
if and only if $q_1, q_2 \in [-\frac{1}{3},1]$, whereas the map
$\mathcal{D}_{0}\otimes\mathcal{D}_{-1/\sqrt{2}}$ is positive and
absolutely separating.

\begin{figure}
\centering
\includegraphics[width=8cm]{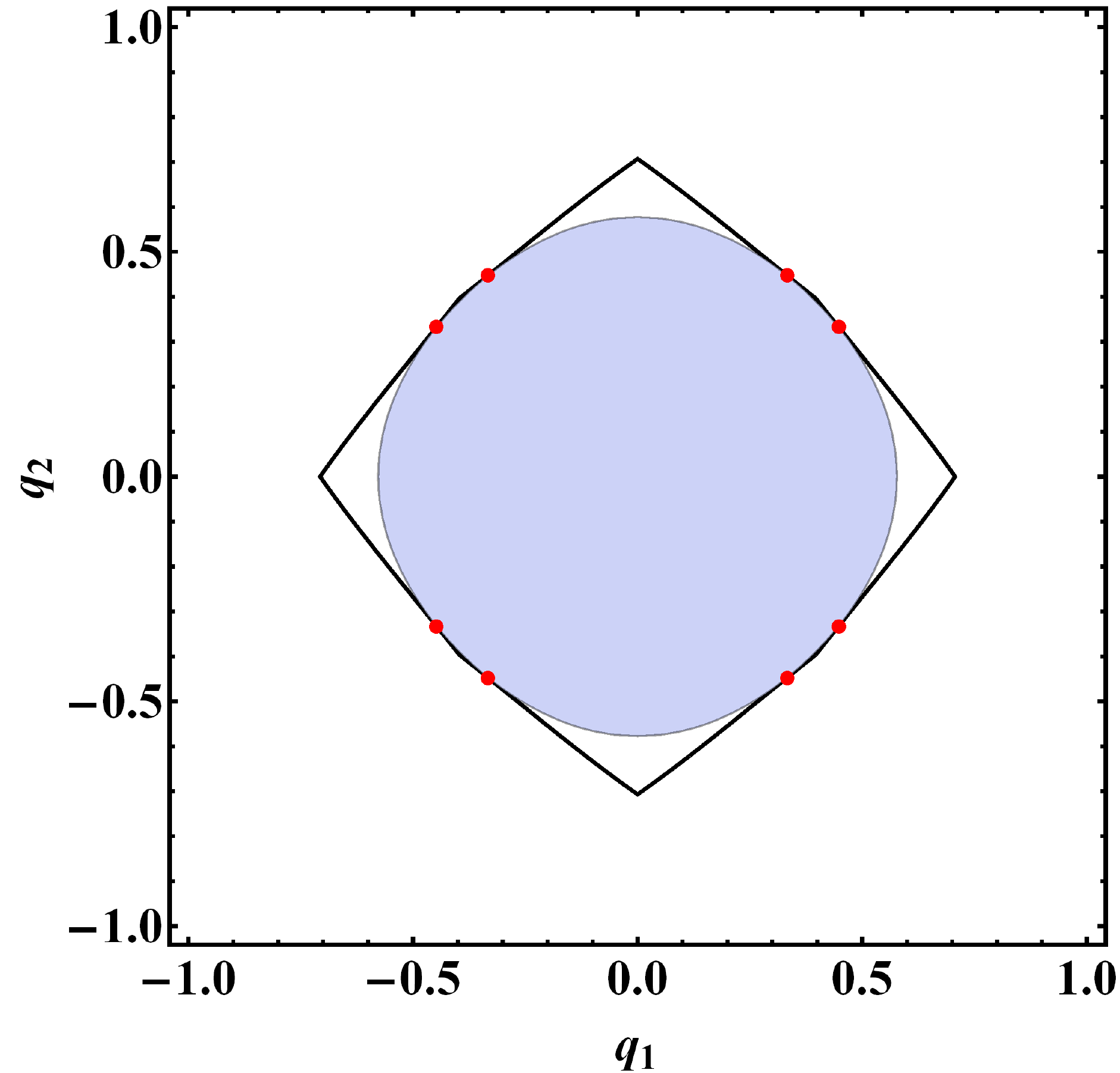}
\caption{\label{figure3} Local depolarizing two-qubit map
$\Phi_{q_1}\otimes\Phi_{q_2}$ is absolutely separating with
respect to partition $2|2$ for parameters $(q_1,q_2)$ inside the
solid line region, equations~\eqref{local-dep-1}--\eqref{local-dep-2}.
The shaded area corresponds to sufficient condition
\eqref{AbsSepLocalDep}. Points of contact between two figures are
marked by dots.}
\end{figure}

One more interesting feature is related with the fact that
$\mathcal{D}_{0}\otimes\mathcal{D}_{q_2}$ is not absolutely
separating if $q_2 > \frac{1}{\sqrt{2}}$. Physically, even though
one of the qubits is totally depolarized in the state
$\mathcal{D}_{0}\otimes\mathcal{D}_{q_2} [\varrho] = \frac{1}{2} I
\otimes \mathcal{D}_{q_2}\left[ {\rm tr}_A [\varrho] \right]$,
there exists a unitary operator $U$ (Hamilton dynamics) and a two
qubit state $\varrho$ such that $U (
\mathcal{D}_{0}\otimes\mathcal{D}_{q_2} [\varrho] ) U^{\dagger}$
is entangled with respect to $\mathcal{H}_2^A | \mathcal{H}_2^B$
if $q_2 > \frac{1}{\sqrt{2}}$. To overcome absolute separability
of the outcome, the initial state $\varrho$ should meet the
requirement $q_2^2(\lambda_1 - \lambda_2)^2 > [1+ q_2(2 \lambda_1
- 1)] [1+ q_2(2 \lambda_2 - 1)]$, where $\lambda_1,\lambda_2$ are
eigenvalues of the reduced density operator ${\rm tr}_A \varrho$.
If the state $\varrho$ satisfies this inequality, then one can
choose $U = \ket{\psi_1 \psi_1}\bra{\psi_1 \psi_1} + \ket{\psi_2
\psi_2}\bra{\psi_2 \psi_2} + \frac{1}{\sqrt{2}} \rme^{\rmi\pi/4}
(\ket{\psi_1 \psi_2}\bra{\psi_2 \psi_1} + \ket{\psi_2
\psi_1}\bra{\psi_2 \psi_1}) + \frac{1}{\sqrt{2}} \rme^{-\rmi\pi/4}
(\ket{\psi_1 \psi_2}\bra{\psi_2 \psi_1} + \ket{\psi_2
\psi_1}\bra{\psi_1 \psi_2})$, where $\ket{\psi_1},\ket{\psi_2}$
are eigenvectors of the reduced density operator ${\rm tr}_A
\varrho$.

\begin{proposition}
An $N$-qubit local depolarizing channel $\mathcal{D}_{q_1} \otimes
\ldots \otimes \mathcal{D}_{q_N}$ is absolutely separating with
respect to multipartition
$\underbrace{2|\ldots|2}_{N~\text{times}}$ if
\begin{equation}
\label{dep-N-qubits} \prod_{k=1}^N (1+q_k^2) \leqslant 1 +
\frac{54}{17}3^{-N}.
\end{equation}
\end{proposition}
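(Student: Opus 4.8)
The plan is to use Proposition~\ref{proposition-N-qubits}: it suffices to show that the maximal output purity of $\mathcal{D}_{q_1}\otimes\cdots\otimes\mathcal{D}_{q_N}$ is bounded above by $\frac{1}{2^N}(1+\frac{54}{17}3^{-N})$ and then invoke the $N$-qubit separability ball. Since the local depolarizing map is insensitive to local changes of basis on each qubit, the purity of the output depends on the input only through its ``correlation tensor'' in the Pauli basis. Concretely, I would write any input state as $\varrho=\frac{1}{2^N}\sum_{\mathbf{a}} t_{\mathbf{a}}\, \sigma_{a_1}\otimes\cdots\otimes\sigma_{a_N}$ with $a_k\in\{0,1,2,3\}$, $t_{\mathbf{0}}=1$, and note that each $\mathcal{D}_{q_k}$ multiplies every nonidentity Pauli $\sigma_{a_k}$ ($a_k\neq 0$) by $q_k$ while fixing $\sigma_0$. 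Hence the output correlation coefficient for multi-index $\mathbf{a}$ is $t_{\mathbf{a}}\prod_{k:\,a_k\neq 0} q_k$, and by orthogonality of Pauli operators,
\begin{equation}
\mathrm{tr}\big[(\mathcal{D}_{q_1}\otimes\cdots\otimes\mathcal{D}_{q_N}[\varrho])^2\big]=\frac{1}{2^N}\sum_{\mathbf{a}} t_{\mathbf{a}}^2 \prod_{k:\,a_k\neq 0} q_k^2 .
\end{equation}

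Next I would bound the right-hand side over all legitimate input states. The key fact is that positivity and unit trace of $\varrho$ force $\sum_{\mathbf{a}} t_{\mathbf{a}}^2=2^N\,\mathrm{tr}[\varrho^2]\leqslant 2^N$, i.e. $\sum_{\mathbf{a}\neq\mathbf{0}} t_{\mathbf{a}}^2\leqslant 2^N-1$. I want to maximize $\sum_{\mathbf{a}} t_{\mathbf{a}}^2 w_{\mathbf{a}}$ with weights $w_{\mathbf{a}}=\prod_{k:\,a_k\neq 0}q_k^2\leqslant 1$ subject to this constraint; naively this would give $1+(2^N-1)\max_{\mathbf{a}\neq\mathbf0}w_{\mathbf a}$, which is too weak. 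The sharper route is to observe that the weights $w_{\mathbf a}$ are themselves constrained: summing over the $3^N$ assignments of which coordinates carry a nonidentity Pauli (with multiplicity $3$ per such coordinate for the choice of $\sigma_1,\sigma_2,\sigma_3$) one gets an identity of product form. I would instead directly bound via
\begin{equation}
\sum_{\mathbf{a}} t_{\mathbf{a}}^2 \prod_{k:\,a_k\neq 0} q_k^2 \leqslant \sum_{\mathbf{a}}\Big(\max_{\mathbf b} t_{\mathbf b}^2\Big)\prod_{k:\,a_k\neq 0} q_k^2 \cdot\frac{1}{3^{(\#\{k:a_k\neq0\})}}\cdot 3^{(\#\{k:a_k\neq0\})}
\end{equation}
--- but this is getting baroque; the cleaner idea is that $\prod_{k=1}^N(1+q_k^2)=\sum_{S\subseteq\{1,\dots,N\}}\prod_{k\in S}q_k^2$, and that the purity is a convex function on the state space so its maximum is attained at a pure state $\ket{\psi}\bra{\psi}$, for which $\sum_{\mathbf a}t_{\mathbf a}^2=2^N$ exactly. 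For a pure state the $t_{\mathbf a}$ are further structured, and one shows the worst case is a product of single-qubit pure states, where $t_{\mathbf a}=\prod_{k:\,a_k\neq0} r_k^{(a_k)}$ with $\sum_{a_k=1}^3 (r_k^{(a_k)})^2=1$; then the purity factorizes as $\frac{1}{2^N}\prod_{k=1}^N\big(1+q_k^2\sum_{a_k}(r_k^{(a_k)})^2\big)=\frac{1}{2^N}\prod_{k=1}^N(1+q_k^2)$.

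Assembling, the maximal output purity equals exactly $\frac{1}{2^N}\prod_{k=1}^N(1+q_k^2)$ (attained on products of single-qubit pure states — the same ``factorized states are most resistant'' phenomenon noted after equations~\eqref{local-dep-1}--\eqref{local-dep-2}), so $(\|\mathcal{D}_{q_1}\otimes\cdots\otimes\mathcal{D}_{q_N}\|_{1\rightarrow 2})^2=\frac{1}{2^N}\prod_{k=1}^N(1+q_k^2)$. Hypothesis~\eqref{dep-N-qubits} then says precisely $(\|\Phi\|_{1\rightarrow2})^2\leqslant\frac{1}{2^N}(1+\frac{54}{17}3^{-N})$, so Proposition~\ref{proposition-N-qubits} gives that $\mathcal{D}_{q_1}\otimes\cdots\otimes\mathcal{D}_{q_N}$ is absolutely separating with respect to $\underbrace{2|\cdots|2}_{N}$. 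The main obstacle I anticipate is the middle step: rigorously showing that the maximum of the output purity over all input states is achieved at a product of single-qubit pure states rather than at some entangled pure state. This should follow either from the multiplicativity of the $1\rightarrow2$ Schatten norm under tensor products for this specific (unitarily covariant, diagonal-in-Pauli-basis) family of channels, or from a direct Lagrange/symmetry argument showing that correlations between qubits can only decrease the weighted sum $\sum_{\mathbf a}t_{\mathbf a}^2\prod_{k:a_k\neq0}q_k^2$ given the pure-state normalization $\sum_{\mathbf a}t_{\mathbf a}^2=2^N$; I would cite the known multiplicativity result for depolarizing channels to close this cleanly.
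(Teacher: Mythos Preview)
Your proposal is correct and matches the paper's approach: the paper's proof simply invokes the known multiplicativity of the maximal output purity for depolarizing channels (references \cite{amosov-2002,king-2003}) to obtain $(\|\bigotimes_{k=1}^N \mathcal{D}_{q_k}\|_{1\rightarrow 2})^2 = 2^{-N}\prod_{k=1}^N(1+q_k^2)$ and then applies Proposition~\ref{proposition-N-qubits}, exactly as you conclude at the end. Your Pauli-basis exploration is unnecessary detour --- the multiplicativity citation you mention at the close is precisely what the paper uses, and is all that is needed.
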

\begin{proof}
The channel $\mathcal{D}_{q_1} \otimes \ldots \otimes
\mathcal{D}_{q_N}$ satisfies multiplicativity condition of the
maximum output purity~\cite{amosov-2002,king-2003}, therefore
$(\|\bigotimes_{k=1}^N \mathcal{D}_{q_k}\|_{1 \rightarrow 2})^2 =
\prod_{k=1}^N (\| \mathcal{D}_{q_k} \|_{1 \rightarrow 2})^2 =
2^{-N} \prod_{k=1}^N (1+q_k^2)$. Using
proposition~\ref{proposition-N-qubits}, we obtain
equation~\eqref{dep-N-qubits} guaranteeing the desired absolutely
separating property of $\mathcal{D}_{q_1} \otimes \ldots \otimes
\mathcal{D}_{q_N}$.
\end{proof}

\begin{example}
A local depolarizing channel $\mathcal{D}_{q}^{\otimes N}$ acting
on $N \geqslant 3$ qubits is absolutely separating with respect to
multipartition $\underbrace{2|\ldots|2}_{N~\text{times}}$ if $q\leqslant
\frac{21\sqrt{2}}{17\sqrt{N \cdot 3^N}}$.
\end{example}

Suppose each qubit experiences the same depolarizing noise, then
one can find a condition under which the resulting channel is not
absolutely separating with respect to any bipartition.

\begin{proposition}
An $N$-qubit local uniform depolarizing channel
$\mathcal{D}_{q}^{\otimes N}$ is not absolutely separating with
respect to any partition $2^k|2^{N-k}$ if
\begin{equation}
\label{dep-N-qubits-not-as} \sqrt{\frac{1+|q|}{1-|q|}} >
\frac{3+|q|}{1+|q|},
\end{equation}

\noindent or $|q| > \frac{1}{N}$.
\end{proposition}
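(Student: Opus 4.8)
The plan is to exhibit a single input state whose image under $\mathcal{D}_q^{\otimes N}$ has a spectrum violating the necessary condition~\eqref{necessary}. Because that condition is purely spectral and reads $\lambda_1 \leqslant \lambda_{mn-2}+\lambda_{mn-1}+\lambda_{mn}$ for \emph{every} factorization $mn = 2^N$ with $m,n\geqslant 2$, one such witness rules out $\mathcal{A}(m|n)$ for all those factorizations simultaneously, in particular for every $m|n = 2^k|2^{N-k}$ with $1\leqslant k\leqslant N-1$; hence $\mathcal{D}_q^{\otimes N}\notin\text{PAS}(2^k|2^{N-k})$ for each such $k$.

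First I would take the factorized pure input $\varrho_{\rm in} = (\ket{0}\bra{0})^{\otimes N}$, and assume without loss of generality $0 < |q| < 1$ (for $|q|=1$ the channel is the identity and $\varrho_{\rm out}$ is pure, hence manifestly not absolutely separable; and $\mathcal{D}_q[\ket{0}\bra{0}]$ has the same spectrum for $q$ and $|q|$, so I may treat $q>0$). Then $\varrho_{\rm out} := \mathcal{D}_q^{\otimes N}[\varrho_{\rm in}] = \big({\rm diag}(t,s)\big)^{\otimes N}$ with $t = \tfrac{1+|q|}{2} > s = \tfrac{1-|q|}{2} > 0$ and $t+s=1$; its eigenvalues are $t^{N-j}s^{j}$ with multiplicity $\binom{N}{j}$, $j=0,\dots,N$, and strictly decreasing in $j$. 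Hence $\lambda_1 = t^N$, and since $\binom{N}{N-1}=N\geqslant 2$ for $N\geqslant 2$, the three smallest are $\lambda_{2^N}=s^N$ and $\lambda_{2^N-1}=\lambda_{2^N-2}=ts^{N-1}$. Therefore $\lambda_{2^N-2}+\lambda_{2^N-1}+\lambda_{2^N} = 2ts^{N-1}+s^N = s^{N-1}(2t+s) = s^{N-1}(1+t)$, so~\eqref{necessary} is violated exactly when $t^N > s^{N-1}(1+t)$, that is, dividing by $ts^{N-1}>0$, precisely when
\[
\left(\frac{1+|q|}{1-|q|}\right)^{N-1} > \frac{1+t}{t} = \frac{3+|q|}{1+|q|}.
\]
When this holds, $\varrho_{\rm out}$ violates~\eqref{necessary} and the claim follows.

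It then remains to check that each stated hypothesis implies the displayed inequality. For the first, since $\tfrac{1+|q|}{1-|q|}\geqslant 1$ and $N-1\geqslant 1\geqslant\tfrac12$, one has $\big(\tfrac{1+|q|}{1-|q|}\big)^{N-1}\geqslant\big(\tfrac{1+|q|}{1-|q|}\big)^{1/2}=\sqrt{\tfrac{1+|q|}{1-|q|}}>\tfrac{3+|q|}{1+|q|}$ by assumption. For the second, $x\mapsto\tfrac{1+x}{1-x}$ is strictly increasing on $[0,1)$, so $|q|>\tfrac1N$ gives $\tfrac{1+|q|}{1-|q|}>\tfrac{N+1}{N-1}=1+\tfrac{2}{N-1}$, and Bernoulli's inequality then yields $\big(\tfrac{1+|q|}{1-|q|}\big)^{N-1}>\big(1+\tfrac{2}{N-1}\big)^{N-1}\geqslant 1+(N-1)\cdot\tfrac{2}{N-1}=3>\tfrac{3+|q|}{1+|q|}$, the last inequality because $3(1+|q|)-(3+|q|)=2|q|>0$.

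The only delicate bookkeeping is the ordered spectrum of the product output, specifically that $ts^{N-1}$ occupies both the $(2^N-1)$-th and $(2^N-2)$-th positions for every $N\geqslant 2$ (for $N=2$ using the two copies of $ts$, for $N\geqslant 3$ using the $N$ copies of $ts^{N-1}$). Once the reduced inequality $\big(\tfrac{1+|q|}{1-|q|}\big)^{N-1} > \tfrac{3+|q|}{1+|q|}$ is isolated, recognizing the two hypotheses as, respectively, an $N$-uniform estimate and a Bernoulli-type estimate for that single quantity is entirely routine; there is no genuinely hard step.
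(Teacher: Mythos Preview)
Your proof is correct and follows essentially the same approach as the paper: the same factorized pure input state, the same identification of the extreme eigenvalues $\lambda_1=t^N$, $\lambda_{2^N-1}=\lambda_{2^N-2}=ts^{N-1}$, $\lambda_{2^N}=s^N$, and the same reduction to the inequality $\big(\tfrac{1+|q|}{1-|q|}\big)^{N-1}>\tfrac{3+|q|}{1+|q|}$. Your treatment of the second hypothesis via Bernoulli's inequality is in fact cleaner than the paper's, which asserts that $|q|>\tfrac1N$ implies~\eqref{dep-N-qubits-not-as} itself (a claim that fails for, e.g., $N=3$ and $|q|=0.4$), whereas you correctly bypass~\eqref{dep-N-qubits-not-as} and bound the $(N-1)$-th power directly.
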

\begin{proof}
Consider a factorized input state $(\ket{\psi}\bra{\psi})^{\otimes
N}$, then decreasingly ordered eigenvalues of
$\mathcal{D}_{q}^{\otimes N}[(\ket{\psi}\bra{\psi})^{\otimes N}]$
are $\lambda_1 = (1+|q|)^N / 2^N$, $\lambda_{2^N-2} =
\lambda_{2^N-1} = (1-|q|)^{N-1} (1+|q|) / 2^N$, and $\lambda_{2^N}
= (1-|q|)^N / 2^N$. If equation~\eqref{dep-N-qubits-not-as} is
satisfied, then the necessary condition of absolute separability
\eqref{necessary} is violated and $\mathcal{D}_{q}^{\otimes N}$ is
not absolutely separating with respect to any partition
$2^k|2^{N-k}$. Condition $|q| > \frac{1}{N}$ implies
equation~\eqref{dep-N-qubits-not-as} so it serves as a simpler
criterion of the absence of absolutely separating property.
\end{proof}

\subsection{Local unital qubit maps and channels}
\label{subsection-Pauli}

In this subsection we consider unital qubit maps
$\Upsilon:\mathcal{S}(\mathcal{H}_2) \mapsto
\mathcal{S}(\mathcal{H}_2)$, i.e. linear maps preserving maximally
mixed state, $\Upsilon[I] = I$. By a proper choice of input and
output bases the action of a general unital qubit map
reads~\cite{R-Z-W}
\begin{equation}
\label{Upsilon} \Upsilon[X] = \frac{1}{2} \sum_{j=0}^3 \lambda_j
{\rm tr}[\sigma_j X] \sigma_j,
\end{equation}

\noindent where $\sigma_0=I$ and $\{\sigma_i\}_{i=1}^3$ is a
conventional set of Pauli operators. In what follows we consider
trace preserving maps \eqref{Upsilon} with $\lambda_0 = 1$.

Consider a local unital map acting on two qubits, $\Upsilon
\otimes \Upsilon^{\prime}$. General properties of such maps are
reviewed in~\cite{filippov-rybar-ziman-2012,F-M}.

\begin{proposition}
\label{proposition-local-unital} The local unital two-qubit map
$\Upsilon \otimes \Upsilon^{\prime}$ is absolutely separating with
respect to partition $2|2$ if
\begin{equation}\label{PauliAsCondition}
\left( 1 + \max(\lambda_1^2,\lambda_2^2,\lambda_3^2) \right)
\left( 1 + \max(\lambda_1^{\prime 2},\lambda_2^{\prime
2},\lambda_3^{\prime 2}) \right) \leqslant \frac{4}{3}.
\end{equation}
\end{proposition}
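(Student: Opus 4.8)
The plan is to use Proposition~\ref{proposition-ball}, which says a positive map on $\mathcal{S}(\mathcal{H}_{mn})$ is absolutely separating with respect to $m|n$ whenever $(\|\Phi\|_{1\to 2})^2 \leqslant \tfrac{1}{mn-1}$. Here $m=n=2$, so the target bound is $(\|\Upsilon\otimes\Upsilon'\|_{1\to 2})^2 \leqslant \tfrac13$. Thus the whole proof reduces to computing (or bounding) the maximal output purity of $\Upsilon\otimes\Upsilon'$ and showing that \eqref{PauliAsCondition} forces it below $\tfrac13$.

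First I would compute $(\|\Upsilon\|_{1\to 2})^2$ for a single unital qubit map of the form \eqref{Upsilon}. Writing an arbitrary input as $X = \tfrac12(I + \vec r\cdot\vec\sigma)$ with $|\vec r|\leqslant 1$, one gets $\Upsilon[X] = \tfrac12(I + (\lambda_1 r_1,\lambda_2 r_2,\lambda_3 r_3)\cdot\vec\sigma)$, so $\mathrm{tr}[(\Upsilon[X])^2] = \tfrac12(1 + \sum_j \lambda_j^2 r_j^2)$. Maximizing over the unit ball $|\vec r| \leqslant 1$ puts all weight on the coordinate with the largest $\lambda_j^2$, giving $(\|\Upsilon\|_{1\to 2})^2 = \tfrac12\bigl(1 + \max(\lambda_1^2,\lambda_2^2,\lambda_3^2)\bigr)$, and likewise for $\Upsilon'$. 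Next I would invoke multiplicativity of the $1\to 2$ maximal output purity under tensor products — this holds because the $1\to 2$ norm is multiplicative for all CP maps, and more directly here because $\|\Phi_1\otimes\Phi_2\|_{1\to 2} = \|\Phi_1\|_{1\to 2}\|\Phi_2\|_{1\to 2}$ can be checked by the same Bloch-sphere parametrization on $\mathcal{S}(\mathcal{H}_4)$ (the purity $\tfrac14(1 + \sum a_i^2 + \sum b_j^2 + \sum_{ij}\lambda_i^2\lambda_j'^2 c_{ij}^2)$ over the input correlation matrix is again maximized at a product state, sending the bound to the product of the single-qubit bounds). Hence $(\|\Upsilon\otimes\Upsilon'\|_{1\to 2})^2 = \tfrac14\bigl(1+\max_j\lambda_j^2\bigr)\bigl(1+\max_j\lambda_j'^2\bigr)$.

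Finally, condition \eqref{PauliAsCondition} says exactly that this quantity is $\leqslant \tfrac14\cdot\tfrac43 = \tfrac13 = \tfrac{1}{mn-1}$, so Proposition~\ref{proposition-ball} applies and $\Upsilon\otimes\Upsilon'$ is $\mathrm{PAS}(2|2)$.

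The main obstacle is justifying that the maximal output purity of the tensor product equals the product of the individual maximal output purities — i.e. that the optimizing input for $\Upsilon\otimes\Upsilon'$ may be taken to be a product state. One cannot simply cite multiplicativity of an operator norm without care (the relevant quantity is $\|\cdot\|_{1\to 2}^2$, whose multiplicativity for CP maps is known, e.g. from~\cite{king-2003,amosov-2002}), but the cleanest route is the explicit Bloch computation: expand a generic two-qubit state as $\tfrac14(I\otimes I + \vec a\cdot\vec\sigma\otimes I + I\otimes\vec b\cdot\vec\sigma + \sum_{ij} c_{ij}\,\sigma_i\otimes\sigma_j)$, note that the output purity after $\Upsilon\otimes\Upsilon'$ is $\tfrac14(1 + \sum_i\lambda_i^2 a_i^2 + \sum_j\lambda_j'^2 b_j^2 + \sum_{ij}\lambda_i^2\lambda_j'^2 c_{ij}^2)$, and bound each piece: the terms in $a$ and $b$ are at most $\max_i\lambda_i^2$ and $\max_j\lambda_j'^2$ respectively, and the correlation term is at most $(\max_i\lambda_i^2)(\max_j\lambda_j'^2)\sum_{ij}c_{ij}^2$ with $\sum_{ij}c_{ij}^2 \leqslant 1$ constrained jointly with $\sum a_i^2,\sum b_j^2$ by positivity; checking that the jointly optimal configuration is the product Bloch vector (equivalently, a product pure state) then yields the factorized bound $\tfrac14(1+\max_i\lambda_i^2)(1+\max_j\lambda_j'^2)$. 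Since Proposition~\ref{proposition-ball} only requires an upper bound on $(\|\Phi\|_{1\to2})^2$, it in fact suffices to establish $\leqslant$ rather than equality, which makes this step a routine (if slightly tedious) Lagrange-type argument rather than a genuine difficulty.
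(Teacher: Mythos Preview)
Your strategy coincides with the paper's: invoke Proposition~\ref{proposition-ball} by showing $(\|\Upsilon\otimes\Upsilon'\|_{1\to 2})^2=\tfrac14(1+\max_j\lambda_j^2)(1+\max_j\lambda_j'^{\,2})$, and observe that \eqref{PauliAsCondition} is exactly the statement that this is $\leqslant\tfrac13$. The paper carries out the purity computation by Schmidt-parametrizing pure inputs and maximizing first over the Schmidt weight $p$ (the maximum lands at $p\in\{0,1\}$, i.e.\ product states) and then over the local angles; your Bloch-picture sketch is an alternative parametrization of the same calculation.

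Two points to tighten. First, your aside ``$\sum_{ij}c_{ij}^2\leqslant 1$'' is false as written: for pure two-qubit inputs one has $|\vec a|^2=|\vec b|^2$ and $|\vec a|^2+|\vec b|^2+\sum_{ij}c_{ij}^2=3$, so $\sum_{ij}c_{ij}^2$ ranges from $1$ (product) to $3$ (maximally entangled). With $L=\max_j\lambda_j^2$, $L'=\max_j\lambda_j'^{\,2}$ and $A=|\vec a|^2\in[0,1]$, your bound becomes $\tfrac14\bigl[1+A(L+L'-2LL')+3LL'\bigr]$; since $L+L'-2LL'\geqslant 0$ for $L,L'\in[0,1]$, the maximum is at $A=1$, giving the factorized value. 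This is the step you label ``routine'', and it is, but the constraint you quoted would short-circuit it incorrectly. Second, the paper explicitly notes right after its proof that the multiplicativity results of~\cite{king-2-2002} cannot be invoked because $\Upsilon\otimes\Upsilon'$ need not be completely positive here; you flag this too, and you are right that the direct computation is the honest route.
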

\begin{proof}
The output purity ${\rm tr}\big[ (\Upsilon \otimes
\Upsilon^{\prime}[\varrho])^2 \big]$ is a convex function of
$\varrho$ and achieves its maximum $(\| \Upsilon \otimes
\Upsilon^{\prime} \|_{1 \rightarrow 2})^2$ at pure states $\varrho
= \ket{\psi}\bra{\psi}$. The Schmidt decomposition of any pure
two-qubit state $|\psi\rangle$ is $\ket{\psi} =
\sqrt{p}|\phi\otimes\chi\rangle+\sqrt{1-p}|\phi_{\perp}\otimes\chi_{\perp}\rangle$,
where $0 \leqslant p \leqslant 1$,
$\{|\phi\rangle,|\phi_{\perp}\rangle\}$ and
$\{|\chi\rangle,|\chi_{\perp}\rangle\}$ are two orthonormal bases.
We use the following parametrization by the angles
$\theta\in[0,\pi]$ and $\phi\in[0,2\pi]$:
\begin{equation}
|\phi\rangle=\begin{pmatrix}\cos({\theta/2})\exp({-\rmi\phi/2})\\\sin({\theta/2})\exp{(\rmi\phi/2)}
\end{pmatrix}, \quad
|\phi_{\perp}\rangle=\begin{pmatrix}-\sin({\theta/2})\exp({-\rmi\phi/2})\\\cos({\theta/2})\exp{(\rmi\phi/2)}
\end{pmatrix}.
\end{equation}

\noindent The basis $\{|\chi\rangle,|\chi_{\perp}\rangle\}$ is
obtained from above formulas by replacing
$|\phi\rangle\rightarrow|\chi\rangle,|\phi_{\perp}\rangle\rightarrow|\chi_{\perp}\rangle,
\theta\rightarrow\theta^{'},\phi\rightarrow\phi^{'}$. Thus, any
pure input state $\varrho = \ket{\psi}\bra{\psi}$ of two qubits
can be parameterized by 5 parameters:
$p,\theta,\phi,\theta',\phi'$. The pair $\{p,1-p\}$ is the
spectrum of reduced single-qubit density operator.

The map $\Upsilon \otimes \Upsilon^{\prime}$ transforms
$|\psi\rangle\langle\psi|$ into the operator
\begin{eqnarray}
&& \varrho_{\rm
out}(\lambda_1,\lambda_2,\lambda_3,\lambda_1',\lambda_2',\lambda_3',p,\theta,\phi,\theta',\phi')
\nonumber\\
&& = \frac{1}{4} \Big\{ I\otimes I + (\mathbf{n} \cdot
\bm{\sigma}) \otimes (\mathbf{n}^{\prime} \cdot
\bm{\sigma}^{\prime}) + (2p-1)[(\mathbf{n} \cdot\bm{\sigma})
\otimes I + I \otimes
(\mathbf{n}^{\prime} \cdot \bm{\sigma}^{\prime})] \nonumber\\
&&  + 2 \sqrt{p(1-p)} [(\mathbf{k} \cdot \bm{\sigma}) \otimes
(\mathbf{k}^{\prime} \cdot \bm{\sigma}^{\prime}) + (\mathbf{l}
\cdot \bm{\sigma}) \otimes (\mathbf{l}^{\prime} \cdot
{\bm{\sigma}}^{\prime})] \Big\},
\end{eqnarray}

\noindent where $(\mathbf{n} \cdot \bm{\sigma}) = n_1 \sigma_1 +
n_2 \sigma_2 + n_3 \sigma_3$ and vectors
$\mathbf{n},\mathbf{k},\mathbf{l}\in \mathbb{R}^3$ are expressed
through parameters of map $\Upsilon$ by formulas
\begin{eqnarray}
&&\mathbf{n}=(\lambda_1\cos\phi\sin\theta,\lambda_2\sin\phi\sin\theta,\lambda_3\cos\theta),\\
&&\mathbf{k}=(-\lambda_1\cos\phi\cos\theta,-\lambda_2\sin\phi\cos\theta,\lambda_3\sin\theta),\\
&&\mathbf{l}=(\lambda_1\sin\phi,-\lambda_2\cos\phi,0).
\end{eqnarray}

\noindent The vectors $\mathbf{n}',\mathbf{k}',\mathbf{l}'$  are
obtained from $\mathbf{n},\mathbf{k},\mathbf{l}$, respectively, by
replacing $\mathbf{\lambda}\rightarrow\mathbf{\lambda}',
\theta\rightarrow\theta', \phi\rightarrow\phi'$. Maximizing the
output purity $\text{tr}[\varrho_{\rm out}^2]$ over $p\in[0,1]$,
we get
\begin{eqnarray}
\label{max-p-pauli}  &&\max_{p\in[0,1]} {\rm tr}\big[ \big(
\varrho_{\rm out}
(\lambda_1,\lambda_2,\lambda_3,\lambda_1',\lambda_2',\lambda_3',p,\theta,\phi,\theta',\phi')
\big)^2 \big] \nonumber \\ && =  \frac{1}{4} \left(
1+\lambda_3^2\cos^2\theta+(\lambda_1^2\cos^2\phi+\lambda_2^2\sin^2\phi)\sin^2\theta
\right) \nonumber \\ && \times \left( 1+\lambda_3^{\prime 2}\cos^2
\theta^{\prime} + (\lambda_1^{\prime 2}\cos^2 \phi^{\prime} +
\lambda_2^{\prime 2} \sin^2 \phi^{\prime}) \sin^2\theta^{\prime}
\right),
\end{eqnarray}

\noindent which is achieved at factorized states ($p=0$ or $p=1$).
Maximizing equation~\eqref{max-p-pauli} over angles
$\theta,\phi,\theta^{\prime},\phi^{\prime}$, we get
\begin{equation}\label{PauliAsCondition}
( \| \Upsilon \otimes \Upsilon^{\prime} \|_{1 \rightarrow 2} )^2
 = \frac{1}{4} \left( 1 +
\max(\lambda_1^2,\lambda_2^2,\lambda_3^2) \right) \left( 1 +
\max(\lambda_1^{\prime 2},\lambda_2^{\prime 2},\lambda_3^{\prime
2}) \right).
\end{equation}

\noindent By proposition~\ref{proposition-ball}, $\Upsilon \otimes
\Upsilon^{\prime}$ is absolutely separating with respect to
partition $2|2$ if $(\| \Upsilon \otimes \Upsilon^{\prime} \|_{1
\rightarrow 2})^2 \leqslant \frac{1}{3}$, which implies
equation~\eqref{PauliAsCondition}.
\end{proof}

As in the case of local depolarizing maps, pure factorized states
are the most resistant to absolute separability under action of
$\Upsilon \otimes \Upsilon^{\prime}$. If the map $\Upsilon \otimes
\Upsilon^{\prime}$ were completely positive, one could use the
multiplicativity condition for calculation of the maximal output
purity~\cite{king-2-2002}. However, in our case the map $\Upsilon
\otimes \Upsilon^{\prime}$ is not necessarily completely positive.

The map $\Upsilon \otimes \Upsilon$ is absolutely separating with
respect to partition $2|2$ if
$\max(\lambda_1^2,\lambda_2^2,\lambda_3^2) \leqslant
\frac{2}{\sqrt{3}} - 1$. The area of parameters
$\lambda_1,\lambda_2,\lambda_3$ satisfying this inequality is
depicted in figure~\ref{figure4}. Clearly, $\Upsilon \otimes
\Upsilon$ may be absolutely separating even if it is not
completely positive.

\begin{figure}
\centering
\includegraphics[width=8cm]{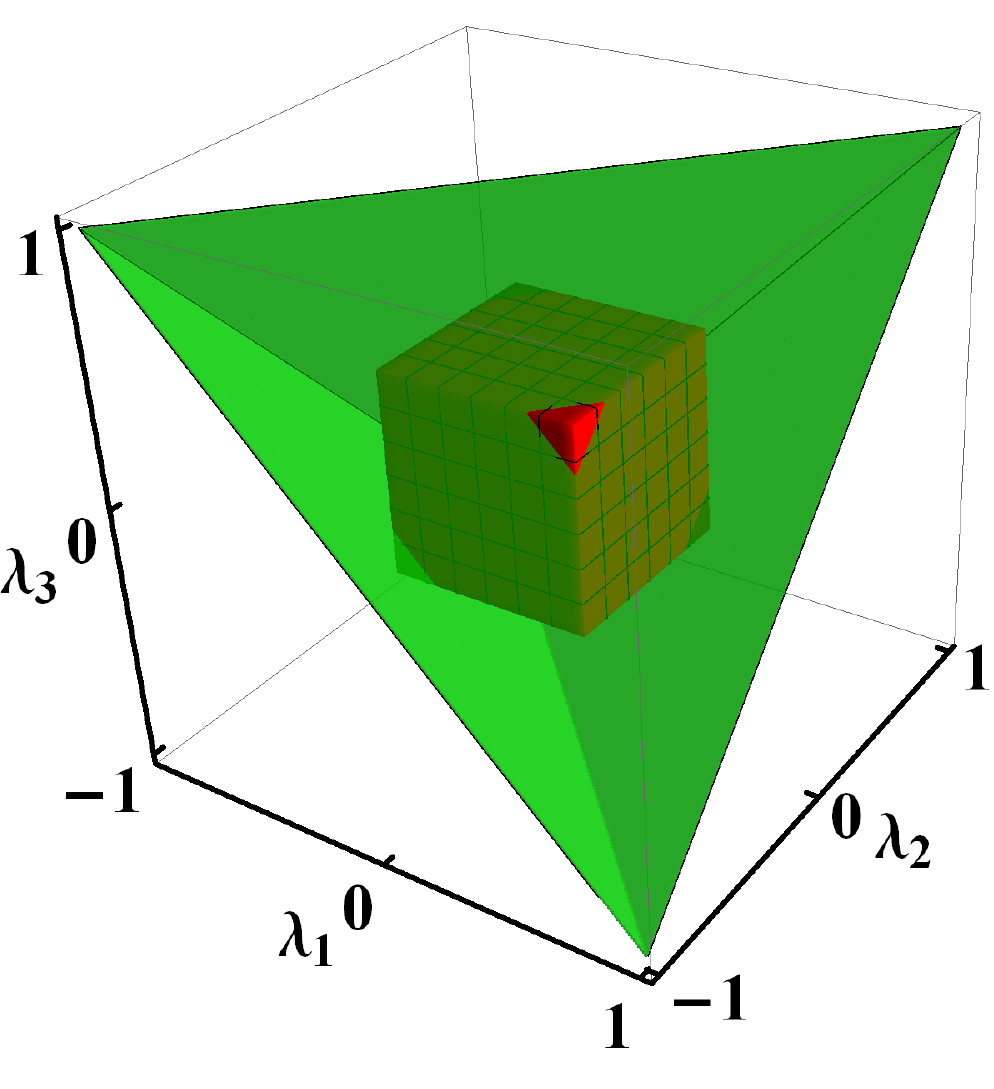}
\caption{\label{figure4} Parameters
$\lambda_1,\lambda_2,\lambda_3$ of the Pauli map $\Upsilon$, where
$\Upsilon \otimes \Upsilon$ is completely positive (green
tetrahedron) and absolutely separating with respect to partition
$2|2$ by proposition~\ref{proposition-local-unital} (red cube).}
\end{figure}

\begin{proposition}
An $N$-qubit local unital channel $\Upsilon^{(1)} \otimes \ldots
\otimes \Upsilon^{(N)}$ is absolutely separating with respect to
multipartition $\underbrace{2|\ldots|2}_{N~\text{times}}$ if
\begin{equation}
\prod_{k=1}^N \left\{ 1 +
[\max(|\lambda_1^{(k)}|,|\lambda_2^{(k)}|,|\lambda_3^{(k)}|)]^2
\right\} \leqslant 1 + \frac{54}{17}3^{-N}.
\end{equation}
\end{proposition}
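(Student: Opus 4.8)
The plan is to apply Proposition~\ref{proposition-N-qubits}: it suffices to bound the maximal output purity of $\Phi := \Upsilon^{(1)}\otimes\cdots\otimes\Upsilon^{(N)}$ by $(\|\Phi\|_{1\rightarrow 2})^2 \leqslant 2^{-N}(1+\frac{54}{17}3^{-N})$. Set $\mu_k := \max(|\lambda_1^{(k)}|,|\lambda_2^{(k)}|,|\lambda_3^{(k)}|)$; since each $\Upsilon^{(k)}$ is a positive unital trace-preserving qubit map, its Bloch ellipsoid lies inside the unit ball, so $|\lambda_i^{(k)}|\leqslant 1$ and hence $\mu_k\in[0,1]$. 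With this notation the hypothesis of the proposition reads $\prod_{k=1}^{N}(1+\mu_k^2)\leqslant 1+\frac{54}{17}3^{-N}$, so the statement will follow once I establish $(\|\Phi\|_{1\rightarrow 2})^2 \leqslant 2^{-N}\prod_{k=1}^{N}(1+\mu_k^2)$.

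Since $\varrho\mapsto{\rm tr}[(\Phi[\varrho])^2]$ is convex, its maximum over $\mathcal{S}(\mathcal{H}_{2^N})$ is attained on pure states, so I would expand an arbitrary pure input in the Pauli basis, $\ket{\psi}\bra{\psi}=2^{-N}\sum_{\mathbf{j}}T_{\mathbf{j}}\,\sigma_{j_1}\otimes\cdots\otimes\sigma_{j_N}$ with $\mathbf{j}=(j_1,\ldots,j_N)$, $j_k\in\{0,1,2,3\}$, $\sigma_0=I$, $T_{\mathbf{0}}=1$. By equation~\eqref{Upsilon} the map $\Phi$ sends $\sigma_{j_1}\otimes\cdots\otimes\sigma_{j_N}$ to $(\prod_k\lambda_{j_k}^{(k)})\,\sigma_{j_1}\otimes\cdots\otimes\sigma_{j_N}$ (with $\lambda_0^{(k)}=1$), so ${\rm tr}[(\Phi[\ket{\psi}\bra{\psi}])^2]=2^{-N}\sum_{\mathbf{j}}T_{\mathbf{j}}^2\prod_{k}(\lambda_{j_k}^{(k)})^2$. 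Because $(\lambda_{j_k}^{(k)})^2\leqslant\mu_k^2$ whenever $j_k\neq 0$ while $(\lambda_0^{(k)})^2=1$, we get the termwise bound $\prod_k(\lambda_{j_k}^{(k)})^2\leqslant\prod_{k:\,j_k\neq0}\mu_k^2$, and hence ${\rm tr}[(\Phi[\ket{\psi}\bra{\psi}])^2]\leqslant 2^{-N}\sum_{\mathbf{j}}T_{\mathbf{j}}^2\prod_{k:\,j_k\neq0}\mu_k^2$. The right-hand side is exactly ${\rm tr}[(\mathcal{D}_{\mu_1}\otimes\cdots\otimes\mathcal{D}_{\mu_N}[\ket{\psi}\bra{\psi}])^2]$, where $\mathcal{D}_{\mu_k}$ is the qubit depolarizing channel of equation~\eqref{depolarizing}. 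Taking the supremum over $\ket{\psi}$ yields $(\|\Phi\|_{1\rightarrow 2})^2\leqslant(\|\mathcal{D}_{\mu_1}\otimes\cdots\otimes\mathcal{D}_{\mu_N}\|_{1\rightarrow 2})^2$.

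It then remains to evaluate the right-hand side. Each $\mathcal{D}_{\mu_k}$ with $\mu_k\in[0,1]\subset[-\frac13,1]$ is completely positive, so, exactly as invoked for the local depolarizing channel in section~\ref{subsection-depolarizing}, multiplicativity of the maximal output purity for tensor products of depolarizing channels~\cite{amosov-2002,king-2003} gives $(\|\mathcal{D}_{\mu_1}\otimes\cdots\otimes\mathcal{D}_{\mu_N}\|_{1\rightarrow 2})^2=\prod_{k=1}^N(\|\mathcal{D}_{\mu_k}\|_{1\rightarrow 2})^2=\prod_{k=1}^N\frac12(1+\mu_k^2)=2^{-N}\prod_{k=1}^N(1+\mu_k^2)$. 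Combining this with the previous paragraph and the hypothesis gives $(\|\Phi\|_{1\rightarrow 2})^2\leqslant 2^{-N}(1+\frac{54}{17}3^{-N})$, and Proposition~\ref{proposition-N-qubits} delivers the claimed absolutely separating property with respect to $\underbrace{2|\cdots|2}_{N~\text{times}}$.

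I expect the delicate points to be bookkeeping ones: verifying that positivity and trace preservation of $\Upsilon^{(k)}$ force $\mu_k\leqslant 1$, so that the comparison channels $\mathcal{D}_{\mu_k}$ fall in the completely positive range and the cited multiplicativity result applies; and noting that one must compare with $\mathcal{D}_{\mu_1}\otimes\cdots\otimes\mathcal{D}_{\mu_N}$ rather than attempt a direct multiplicativity argument for $\Phi$ itself, since $\Phi$ need not be completely positive. As in Proposition~\ref{proposition-local-unital}, the chain of inequalities is saturated on product pure states aligned with the dominant axes of each $\Upsilon^{(k)}$, so the estimate is tight.
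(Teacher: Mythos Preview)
Your argument is correct and follows the same overall framework as the paper: bound the maximal output purity of $\Phi$ by $2^{-N}\prod_k(1+\mu_k^2)$ and then invoke Proposition~\ref{proposition-N-qubits}. The only difference is in how that purity bound is obtained. The paper appeals directly to King's multiplicativity theorem for unital qubit channels~\cite{king-2-2002}, which yields $(\|\Phi\|_{1\rightarrow 2})^2=\prod_k(\|\Upsilon^{(k)}\|_{1\rightarrow 2})^2=2^{-N}\prod_k(1+\mu_k^2)$ as an \emph{equality}. You instead insert an intermediate comparison: a termwise Pauli-basis estimate gives $(\|\Phi\|_{1\rightarrow 2})^2\leqslant(\|\mathcal{D}_{\mu_1}\otimes\cdots\otimes\mathcal{D}_{\mu_N}\|_{1\rightarrow 2})^2$, and you then invoke multiplicativity only for depolarizing channels~\cite{amosov-2002,king-2003}. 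Both routes suffice for the proposition. Your detour buys a small bonus: it uses only $\mu_k\leqslant 1$, i.e.\ positivity of $\Upsilon^{(k)}$, so your argument actually covers positive unital qubit maps, not just CPT ones. Correspondingly, your remark that ``$\Phi$ need not be completely positive'' is unnecessary here---the proposition says ``channel'', so each $\Upsilon^{(k)}$ is CP and King's result applies directly---but it does explain why your variant is marginally more general.
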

\begin{proof}
The proof follows from the multiplicativity of the maximum output
purity~\cite{king-2-2002} and
proposition~\ref{proposition-N-qubits}.
\end{proof}

\subsection{Generalized Pauli channels}
\label{subsection-generalized-Pauli}

The maps considered in previous subsections were local. Let us
consider a particular family of non-local maps called generalized
Pauli channels or Pauli diagonal channels constant on
axes~\cite{NatRus}. Suppose an $mn$-dimensional Hilbert space
$\mathcal{H}_{mn}$ and a collection
$\mathcal{B}_J=\{|\psi_k^J\rangle\}_{k=1}^{mn}$ of orthonormal
bases in $\mathcal{H}_{mn}$. For simplicity denote $d=mn$ and
define the operators
\begin{equation}
W_J = \sum_{k=1}^{d} \omega^k |\psi_k^J\rangle\langle\psi_k^J|,
\qquad J=1,2,\dots,d+1,
\end{equation}

\noindent where $\omega=\rme^{\rmi 2\pi / d}$. If $d$ is a power
of a prime number, then there exist $d+1$ mutually unbiased
bases~\cite{wootters-1989}. The corresponding $d^2-1$ unitary
operators $\{W_J^m\}_{m=1,\dots,d-1,J=1,\dots,d+1}$ satisfy the
orthogonality condition $\text{tr}[ (W^j_J)^{\dag} W^k_K ] = d
\delta_{JK} \delta_{jk}$ and, hence, form an orthonormal basis for
the subspace of traceless matrices.

A generalized Pauli channel $\Phi$ acts on $\varrho\in
\mathcal{S}(\mathcal{H}_d)$ as follows:
\begin{equation}\label{Phi}
\Phi[\varrho]=\frac{(d-1)s+1}{d} \varrho +
\frac{1}{d}\sum_{J=1}^{d+1}\sum_{j=1}^{d-1}t_JW_J^j \varrho
(W_J^j)^{\dag}.
\end{equation}

\noindent Conditions
\begin{equation}
s+\sum_{J=1}^{d+1} t_J = 1, \quad t_J \geqslant 0, \quad s
\geqslant - \frac{1}{d-1}
\end{equation}

\noindent on parameters $s,t_1,\ldots,t_{d+1}$ ensure that $\Phi$
is trace preserving and completely positive ($\Phi$ is a quantum
channel). To analyse absolutely separating properties we use
Theorem 27 in~\cite{NatRus}: the maximal output purity of $\Phi$
is achieved with an axis state, i.e. there exist $n$ and $J$ such
that $(\|\Phi\|_{1\rightarrow 2})^2 = {\rm
tr}\left[(\Phi[|\psi_n^{J}\rangle\langle\psi_n^J|])^2\right]$. On
the other side, action of the generalized Pauli channel on an axis
state $|\psi_n^{J}\rangle\langle\psi_n^J|$ reads
\begin{equation}
\label{axis-on-axis} \Phi[|\psi_n^{J}\rangle\langle\psi_n^J|] =
(1-s-t_J) \frac{1}{d}I + (s+t_J) |\psi_n^J\rangle\langle\psi_n^J|,
\end{equation}

\noindent whose purity equals $[1+(d-1)(s+t_J)^2]/d$. Thus, if the
obtained purity is less or equal to $(d-1)^{-1}$ for all $J$, then
by proposition~\ref{proposition-ball} $\Phi$ is absolutely
separating with respect to $m|n$. To conclude, a generalized Pauli
channel $\Phi:\mathcal{S}(\mathcal{H}_{mn})\mapsto
\mathcal{S}(\mathcal{H}_{mn})$ is absolutely separating with
respect to partition $m|n$ if $|s+t_J| \leqslant (mn-1)^{-1}$ for
all $J=1,\ldots,mn+1$.

\subsection{Combination of tracing, identity, and transposition maps}
\label{subsection-ctit}

Let us consider a two-parametric family of positive maps
$\Phi_{\alpha\beta}:\mathcal{S}(\mathcal{H}_d) \mapsto
\mathcal{S}(\mathcal{H}_d)$ representing linear combinations of
tracing map, identity transformation, and transposition $\top$ in
a fixed orthonormal basis:
\begin{equation}
\label{ctit} \Phi_{\alpha\beta} [X] = \frac{1}{d + \alpha + \beta}
\left( {\rm tr}[X] \, I + \alpha X + \beta X^{\top} \right)
\end{equation}

\noindent with real parameters $\alpha$ and $\beta$ satisfying
inequalities $1+\alpha \geqslant 0$, $1+\beta \geqslant 0$, and
$1+\alpha+\beta \geqslant 0$ (guaranteeing $\Phi_{\alpha\beta}$ is
positive). Note that $\Phi_{\alpha\beta}$ is trace preserving.
Equation~\eqref{ctit} reduces to the depolarizing map if $\beta =
0$ and to the Werner-Holevo channel~\cite{werner-holevo-2002} if
$\alpha = 0$ and $\beta = -1$. A direct calculation of the
Choi-Jamio{\l}kowski operator~\cite{choi-1975,jamiolkowski-1972}
shows that $\Phi_{\alpha\beta}$ is completely positive if $\alpha
\geqslant - \frac{1}{d}$ and $-(1+d \alpha) \leqslant \beta
\leqslant 1$.

Suppose $\mathcal{H}_d = \mathcal{H}_m \otimes \mathcal{H}_n$,
then we can explore the absolute separability of the output
$\Phi_{\alpha\beta} [\varrho]$ with respect to partition $m|n$.

\begin{proposition}
\label{proposition-ctit}
$\Phi_{\alpha\beta}:\mathcal{S}(\mathcal{H}_{mn}) \mapsto
\mathcal{S}(\mathcal{H}_{mn})$ is absolutely separating with
respect to partition $m|n$ if
\begin{equation}
\label{stripe} -1 \leqslant \alpha + \beta \leqslant
\frac{mn}{mn-2}
\end{equation}
\noindent and
\begin{equation}
\label{ellipse}
(\alpha-\beta)^2 + \frac{mn-3}{mn-1}\left(
\alpha+\beta-\frac{2}{mn-3} \right)^2 \leqslant
\frac{2(mn-2)}{mn-3}.
\end{equation}
\end{proposition}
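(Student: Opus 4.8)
The plan is to apply Proposition~\ref{proposition-ball}, so the task reduces to computing (or bounding) the maximal output purity $(\|\Phi_{\alpha\beta}\|_{1\to 2})^2$ and checking when it is $\leqslant 1/(mn-1)$. Since the output purity $\mathrm{tr}[(\Phi_{\alpha\beta}[\varrho])^2]$ is a convex function of $\varrho$, the maximum over states is attained at a pure state $\varrho=\ket{\psi}\bra{\psi}$. First I would compute $\Phi_{\alpha\beta}[\ket{\psi}\bra{\psi}]$ explicitly: writing $d=mn$ and $P=\ket{\psi}\bra{\psi}$, one has $\Phi_{\alpha\beta}[P]=\frac{1}{d+\alpha+\beta}(I+\alpha P+\beta P^{\top})$. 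The purity is then a quadratic form in the three ``overlap'' quantities $\mathrm{tr}[P^2]=1$, $\mathrm{tr}[(P^{\top})^2]=1$, $\mathrm{tr}[P\,I]=1$, $\mathrm{tr}[P^{\top} I]=1$, and crucially $\mathrm{tr}[P P^{\top}]=\mathrm{tr}[P\bar P]=\sum_{i,j}|\psi_i|^2|\psi_j|^2$ — wait, more precisely $\mathrm{tr}[P P^{\top}]=|\langle\psi|\bar\psi\rangle|^2=|\sum_i \psi_i^2|^2$, which ranges over $[0,1]$ as $\ket{\psi}$ varies. So the only free parameter in the optimization is $x:=|\langle\psi|\bar\psi\rangle|^2\in[0,1]$.

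Carrying this out, $\mathrm{tr}[(\Phi_{\alpha\beta}[P])^2]=\frac{1}{(d+\alpha+\beta)^2}\big(d+2\alpha+2\beta+\alpha^2+\beta^2+2\alpha\beta\,x\big)$, a linear (hence monotone) function of $x$, so its maximum over $\ket{\psi}$ is at $x=1$ if $\alpha\beta\geqslant 0$ and at $x=0$ if $\alpha\beta\leqslant 0$. This gives
\[
(\|\Phi_{\alpha\beta}\|_{1\to 2})^2=\frac{d+2(\alpha+\beta)+(\alpha+\beta)^2-2(\alpha\beta)_-}{(d+\alpha+\beta)^2},
\]
where $(\alpha\beta)_-=\min(\alpha\beta,0)$; equivalently the numerator is $d+2(\alpha+\beta)+\alpha^2+\beta^2$ when $\alpha\beta\leqslant 0$ and $d+2(\alpha+\beta)+(\alpha+\beta)^2$ when $\alpha\beta\geqslant 0$. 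Then I would impose $(\|\Phi_{\alpha\beta}\|_{1\to 2})^2\leqslant \frac{1}{d-1}$ in each of the two regimes and show the resulting region is exactly (or contained in) the set described by \eqref{stripe}--\eqref{ellipse}. In the $\alpha\beta\geqslant 0$ regime one gets a condition purely in $\sigma:=\alpha+\beta$, namely $(d-1)(d+2\sigma+\sigma^2)\leqslant (d+\sigma)^2$, which rearranges to a quadratic inequality in $\sigma$ whose solution interval should match \eqref{stripe}; in the $\alpha\beta\leqslant 0$ regime the condition $(d-1)(d+2\sigma+\alpha^2+\beta^2)\leqslant(d+\sigma)^2$ becomes, after substituting $\alpha^2+\beta^2=\tfrac12(\alpha-\beta)^2+\tfrac12\sigma^2$ and collecting terms, precisely the ellipse \eqref{ellipse}. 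I would verify that the stripe \eqref{stripe} and the ellipse \eqref{ellipse} together cover both regimes — the ellipse handles the $\alpha\beta\leqslant 0$ part and, where it bulges into $\alpha\beta\geqslant 0$, is implied by / combined with the stripe.

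The main obstacle I anticipate is the bookkeeping of the two regimes $\alpha\beta\geqslant 0$ versus $\alpha\beta\leqslant 0$ and showing that the stated pair of conditions \eqref{stripe}--\eqref{ellipse} is the clean common description — in particular that \eqref{ellipse} really is the boundary one gets from the $x=0$ case and that no part of the $\alpha\beta\geqslant 0$ region is missed. A secondary point to keep in mind is the positivity constraints $1+\alpha\geqslant 0$, $1+\beta\geqslant 0$, $1+\alpha+\beta\geqslant 0$ on the domain of $\Phi_{\alpha\beta}$; one should check these are consistent with (not violated by) the region cut out by \eqref{stripe}--\eqref{ellipse}, so that Proposition~\ref{proposition-ball} applies to genuine positive maps. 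The algebra itself — expanding the quadratic form in $x$ and rearranging the two inequalities — is routine once the parametrization by $x=|\langle\psi|\bar\psi\rangle|^2$ is in place.
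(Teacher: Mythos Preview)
Your proposal is correct and follows essentially the same route as the paper: compute the output purity of $\Phi_{\alpha\beta}$ on pure states, observe it depends only on $x=\mathrm{tr}[\varrho\varrho^{\top}]=|\langle\psi|\bar\psi\rangle|^2\in[0,1]$, split into the cases $\alpha\beta\geqslant 0$ (maximum at $x=1$, yielding \eqref{stripe}) and $\alpha\beta<0$ (maximum at $x=0$, yielding \eqref{ellipse}), and then invoke Proposition~\ref{proposition-ball}. The paper's proof is slightly terser---it keeps $\mathrm{tr}[\varrho^2]$ and $\mathrm{tr}[\varrho\varrho^{\top}]$ as separate variables and sets both to their extremal values directly---but the substance is identical.
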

\begin{proof}
Since ${\rm tr}[\varrho] = {\rm tr}[\varrho^{\top}]=1$ for a
density matrix $\varrho$ and ${\rm tr}[\varrho^2] = {\rm
tr}[(\varrho^{\top})^2]$, the output purity of the map
$\Phi_{\alpha\beta}$ reads
$(d+\alpha+\beta)^{-2}\{d+2(\alpha+\beta)+2\alpha\beta {\rm
tr}[\varrho \varrho^{\top}] + (\alpha^2+\beta^2){\rm
tr}[\varrho^2]\}$. If $\alpha\beta \geqslant 0$, then the output
purity is maximal when ${\rm tr}[\varrho \varrho^{\top}]=1$ and
${\rm tr}[\varrho^2]=1$. Substituting the obtained value of the
maximum output purity in equation~\eqref{ball}, we get
equation~\eqref{stripe}. If $\alpha\beta < 0$, then the output purity
is maximal when ${\rm tr}[\varrho \varrho^{\top}]=0$ and ${\rm
tr}[\varrho^2]=1$. If this is the case, equation~\eqref{ball} results
in equation~\eqref{ellipse}. Combining two criteria, we see that if
both conditions \eqref{stripe} and \eqref{ellipse} are fulfilled,
then $\Phi_{\alpha\beta}$ is absolutely separating with respect to
partition $m|n$ by proposition~\ref{proposition-ball}.
\end{proof}

The region of parameters $\alpha,\beta$ satisfying
equations~\eqref{stripe}--\eqref{ellipse} is the intersection of a
stripe and an ellipse depicted in figure~\ref{figure4}.

According to proposition~\ref{proposition-ctit} the Werner-Holevo
channel $\Phi_{0,-1}:\mathcal{S}(\mathcal{H}_{mn}) \mapsto
\mathcal{S}(\mathcal{H}_{mn})$ is absolutely separating with
respect to partition $m|n$ for all $m,n = 2,3,\ldots$.

If $\beta=0$, then $\Phi_{\alpha,0}:\mathcal{S}(\mathcal{H}_{mn})
\mapsto \mathcal{S}(\mathcal{H}_{mn})$ is absolutely separating
with respect to partition $m|n$ when $-1 \leqslant \alpha
\leqslant \frac{mn}{mn-2}$, which corresponds to the global
depolarizing map $\mathcal{D}_{q}$ with $|q|\leqslant
\frac{1}{mn-1}$.

If $m=2$, one can use a necessary and sufficient condition
\eqref{2-n} of absolute separability with respect to partition
$2|n$ and apply it to the map \eqref{ctit}.

\begin{proposition}
\label{proposition-ctit-2-n}
$\Phi_{\alpha\beta}:\mathcal{S}(\mathcal{H}_{2n}) \mapsto
\mathcal{S}(\mathcal{H}_{2n})$ is absolutely separating with
respect to partition $2|n$ if and only if (i) $\alpha,\beta
\geqslant 0$ and $\alpha + \beta \leqslant 2$; (ii) $\alpha
\geqslant 0$ and $\alpha^2-4 \leqslant 4 \beta < 0$; (iii) $\beta
\geqslant 0$ and $\beta^2-4 \leqslant 4 \alpha < 0$; (iv)
$\alpha,\beta < 0$ and $\alpha + \beta \geqslant -1$.
\end{proposition}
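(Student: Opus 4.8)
The route I would take is to feed the output spectrum into criterion~\eqref{2-n}, which is necessary and sufficient for membership in $\mathcal{A}(2|n)$. Thus $\Phi_{\alpha\beta}$ is $\text{PAS}(2|n)$ exactly when, for every state $\varrho$, the decreasingly ordered eigenvalues $\mu_1\geqslant\cdots\geqslant\mu_{2n}$ of $\Phi_{\alpha\beta}[\varrho]$ obey $\mu_1\leqslant\mu_{2n-1}+2\sqrt{\mu_{2n}\mu_{2n-2}}$. Since ${\rm tr}\,\varrho=1$ we may write $\Phi_{\alpha\beta}[\varrho]=(2n+\alpha+\beta)^{-1}(I+M_\varrho)$ with $M_\varrho:=\alpha\varrho+\beta\varrho^{\top}$, and $2n+\alpha+\beta\geqslant 2n-1>0$ because positivity forces $1+\alpha+\beta\geqslant 0$; hence $\mu_i=(2n+\alpha+\beta)^{-1}(1+\nu_i)$, where $\nu_1\geqslant\cdots\geqslant\nu_{2n}$ is the spectrum of $M_\varrho$, and~\eqref{2-n} for the output becomes
\begin{equation*}
\nu_1-\nu_{2n-1}\leqslant 2\sqrt{(1+\nu_{2n})(1+\nu_{2n-2})},
\end{equation*}
a condition to be enforced over all states $\varrho$.

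The crucial reduction is that the worst $\varrho$ is pure. I would justify this by bounding the spectrum of $M_\varrho=\alpha\varrho+\beta\varrho^{\top}$ in terms of that of $\varrho$ via Weyl-type inequalities (the positive and negative pieces among $\alpha\varrho$, $\beta\varrho^{\top}$ are semidefinite with controlled extreme eigenvalues, and the relevant extremal configurations force $\varrho$ to be rank one), supplemented, in the range $\alpha,\beta\geqslant 0$, by bounding the quantity $\mu_1-\mu_{2n-1}-2\sqrt{\mu_{2n}\mu_{2n-2}}$ from above by the genuinely convex functional $\mu_1-3\mu_{2n}$ and maximising the latter at an extreme point of the state space. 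Once restricted to $\varrho=|\psi\rangle\langle\psi|$, the operator $M_\varrho=\alpha|\psi\rangle\langle\psi|+\beta|\bar\psi\rangle\langle\bar\psi|$ ($\bar\psi$ the entrywise complex conjugate in the fixed basis) has rank at most two, with the nonzero eigenvalues
\begin{equation*}
\nu_{\pm}=\frac{\alpha+\beta}{2}\pm\frac12\sqrt{\alpha^{2}+\beta^{2}+2\alpha\beta(2c^{2}-1)},\qquad c:=|\langle\psi|\bar\psi\rangle|\in[0,1],
\end{equation*}
and the remaining $2n-2$ eigenvalues equal to zero; moreover any $c\in[0,1]$ is realised, e.g.\ by $|\psi\rangle=\cos\gamma\,|e_1\rangle+\rmi\sin\gamma\,|e_2\rangle$ with $c=|\cos 2\gamma|$.

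It then remains to sort $\{1+\nu_+,\,1+\nu_-,\,\underbrace{1,\ldots,1}_{2n-2}\}$ and optimise over $c$. From $\nu_+\nu_-=\alpha\beta(1-c^{2})$ and $\nu_++\nu_-=\alpha+\beta$ the sign pattern of $\nu_\pm$ is pinned down by the signs of $\alpha,\beta$ (both $\geqslant 0$; opposite signs; both $<0$), so in each regime~\eqref{2-n} collapses to a one-line inequality — for instance $\nu_+\leqslant 2\sqrt{1+\nu_-}$ whenever $\nu_-\leqslant 0\leqslant\nu_+$. As the discriminant $\alpha^{2}+\beta^{2}+2\alpha\beta(2c^{2}-1)$ is monotone in $c$ (increasing if $\alpha\beta>0$, decreasing if $\alpha\beta<0$), the extremal value of $c$ is $1$ when $\alpha,\beta\geqslant 0$ — yielding $\nu_+=\alpha+\beta$ and the bound $\alpha+\beta\leqslant 2$, i.e.\ case (i) — and $0$ when $\alpha\beta<0$ — yielding $\nu_+=\alpha$, $\nu_-=\beta$ and the bounds $\alpha^{2}-4\leqslant 4\beta$ of case (ii) (respectively $\beta^{2}-4\leqslant 4\alpha$ of case (iii)); when $\alpha,\beta<0$ one has $\nu_+\leqslant 0\leqslant 2\sqrt{1+\nu_-}$ for every $c$, which is case (iv) (the restriction $\alpha+\beta\geqslant -1$ there being automatic from positivity). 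For the ``only if'' part it suffices to observe that when the stated inequality is violated the extremal state found above — the real vector $|e_1\rangle$ for case (i), the maximally complex vector $\tfrac1{\sqrt 2}(|e_1\rangle+\rmi|e_2\rangle)$ for cases (ii)--(iii) — outputs a spectrum failing~\eqref{2-n}, hence not in $\mathcal{A}(2|n)$; and case (iii) is obtained from case (ii) through the identity $\Phi_{\beta\alpha}[X]=\Phi_{\alpha\beta}[X^{\top}]$ together with invariance of the spectrum and of absolute separability under transposition.

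The main difficulty is precisely the reduction to pure inputs: the functional in~\eqref{2-n} involves $\lambda_{2n-2}$, the third-smallest output eigenvalue, which is neither convex nor concave as a spectral function, so the standard ``a convex function on a convex set is maximised at an extreme point'' argument does not apply directly; one has to combine a convex majorant that stays tight at the relevant pure states with dedicated eigenvalue estimates for $\alpha\varrho+\beta\varrho^{\top}$ in the mixed-sign regimes, and to treat separately the low-dimensional partition $2|2$, where several of the bottom output eigenvalues coincide.
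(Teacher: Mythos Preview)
Your approach is essentially the paper's: reduce to pure inputs, compute the two nontrivial eigenvalues $\nu_\pm$ of $\alpha\ket{\psi}\bra{\psi}+\beta\ket{\bar\psi}\bra{\bar\psi}$ as functions of $c=|\ip{\psi}{\bar\psi}|$, sort, and feed into~\eqref{2-n}. Two remarks, one structural and one technical.

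\textbf{Reduction to pure inputs.} Your ``main difficulty'' is not a difficulty at all. The set $\mathcal{A}(2|n)$ is convex: it equals $\bigcap_U U\,\mathcal{S}(\mathcal{H}_2|\mathcal{H}_n)\,U^{\dag}$, an intersection of convex sets. Hence $\Phi_{\alpha\beta}$ is $\text{PAS}(2|n)$ iff $\Phi_{\alpha\beta}[\ket{\psi}\bra{\psi}]\in\mathcal{A}(2|n)$ for all pure $\ket{\psi}$, because a linear map sends a convex combination of inputs to the convex combination of outputs. This is exactly the one-line justification the paper gives (``Since the state space is convex and the map is linear\ldots''). There is no need for Weyl inequalities or convex majorants of the spectral functional in~\eqref{2-n}; you are checking membership in a convex target set, not maximising a non-convex functional.

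\textbf{Case (iv).} Your sorting slips here. When $\alpha,\beta<0$ both $\nu_\pm\leqslant 0$, so the decreasingly ordered spectrum of $M_\varrho$ is $0,\ldots,0,\nu_+,\nu_-$; thus $\nu_1=0$, $\nu_{2n-1}=\nu_+$, $\nu_{2n-2}=0$, $\nu_{2n}=\nu_-$, and your own displayed inequality becomes
\[
-\nu_+ \;\leqslant\; 2\sqrt{1+\nu_-}\,,
\]
not $\nu_+\leqslant 2\sqrt{1+\nu_-}$. So the observation ``$\nu_+\leqslant 0\leqslant 2\sqrt{1+\nu_-}$'' does not settle the matter. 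One still needs a short check: from $\nu_++\nu_-=\alpha+\beta\geqslant -1$ and $\nu_-\leqslant\nu_+\leqslant 0$ one gets $\nu_+\in[-\tfrac12,0]$, hence $(\nu_++2)^2\leqslant 4\leqslant 8+4(\alpha+\beta)$, which rearranges to $\nu_+^2\leqslant 4(1+\nu_-)$. So the conclusion of case~(iv) is correct, but the line you wrote does not prove it. The paper glosses over this as well (``Substituting such a spectrum in~\eqref{2-n}, we get conditions (i)--(iv)''), so you are in good company, but the gap is real.
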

\begin{proof}
Since the state space is convex and the map $\Phi_{\alpha\beta}$
is linear, it is enough to check absolute separability of the
output $\Phi_{\alpha\beta}[\ket{\psi}\bra{\psi}]$ for pure states
$\ket{\psi}\in\mathcal{H}_{2n}$ only. Transposition
$(\ket{\psi}\bra{\psi})^{\top} =
\ket{\overline{\psi}}\bra{\overline{\psi}}$ is equivalent to
complex conjugation $\ket{\psi} \mapsto \ket{\overline{\psi}}$ in
a fixed basis. Eigenvalues of the operator
$I+\alpha\ket{\psi}\bra{\psi} + \beta
\ket{\overline{\psi}}\bra{\overline{\psi}}$ are $1$ with
degeneracy $2n-2$ and $1 + \frac{1}{2}(\alpha + \beta) \pm
\frac{1}{2} \sqrt{(\alpha - \beta)^2 + 4 \alpha \beta
|\ip{\psi}{\overline{\psi}}|^2}$. Since $0 \leqslant
|\ip{\psi}{\overline{\psi}}|^2 \leqslant 1$, the largest possible
eigenvalue is $\lambda_1 = 1+\frac{1}{2}(\alpha +
\beta)+\frac{1}{2} \max(|\alpha+\beta|,|\alpha-\beta|)$, while the
smallest eigenvalue is $\lambda_{2n} = 1+\frac{1}{2}(\alpha +
\beta) - \frac{1}{2} \max(|\alpha+\beta|,|\alpha-\beta|)$, with
other eigenvalues being equal to 1. Substituting such a spectrum
in equation~\eqref{2-n}, we get conditions (i)--(iv).
\end{proof}

We depict the region of parameters $\alpha,\beta$ corresponding to
$\Phi_{\alpha,\beta} \in \text{PAS}(2|4)$ in figure~\ref{figure5}.

\begin{figure}
\centering
\includegraphics[width=8cm]{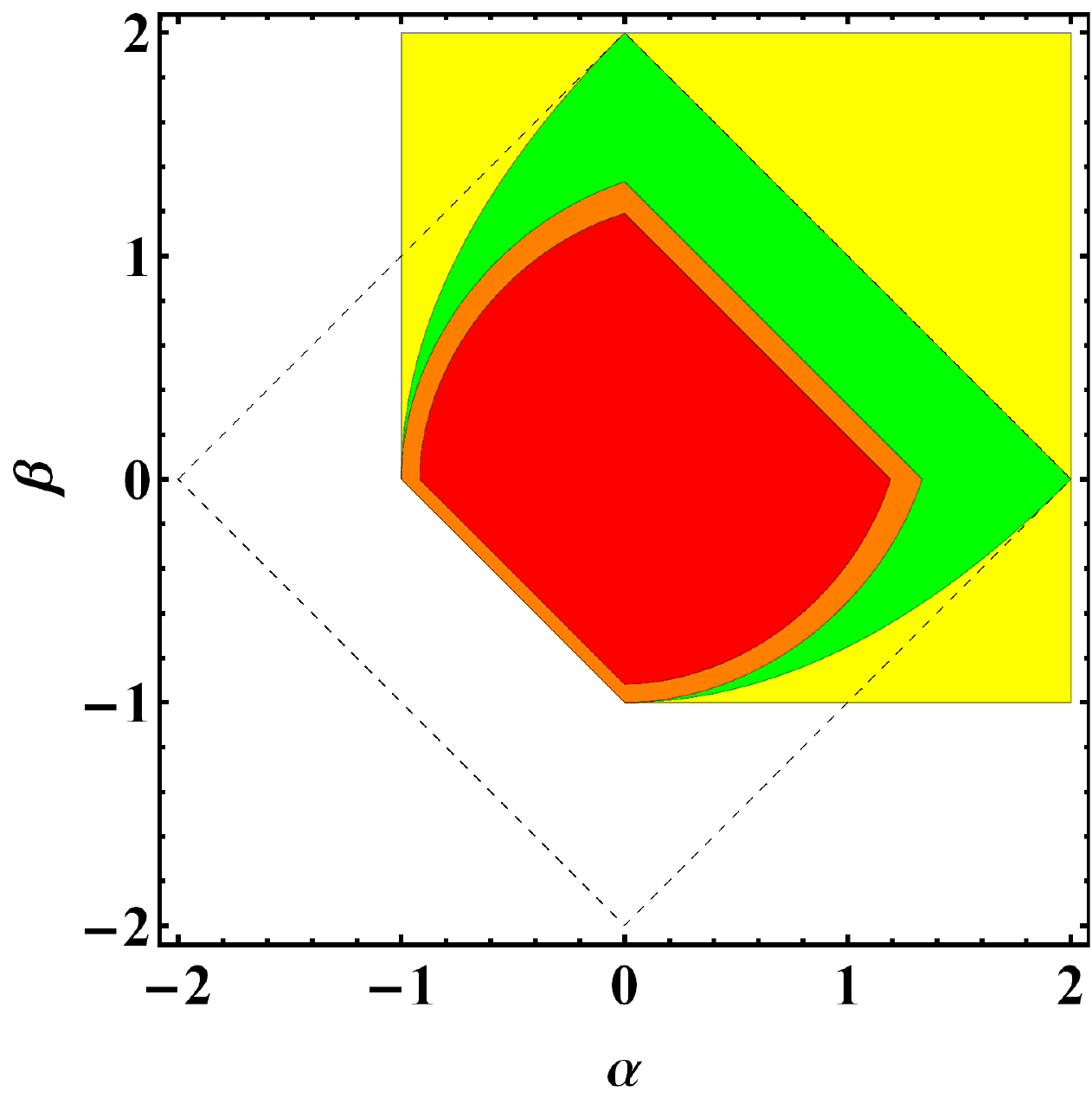}
\caption{\label{figure5} Nested structure of two-parametric maps
$\Phi_{\alpha\beta}: \mathcal{S}(\mathcal{H}_8) \mapsto
\mathcal{S}(\mathcal{H}_8)$. Shaded areas from outer to inner
ones: $\Phi_{\alpha\beta}$ is positive, $\Phi_{\alpha\beta}$ is
absolutely separating with respect to partition $2|4$ by necessary
and sufficient criterion of
proposition~\ref{proposition-ctit-2-n}, sufficient condition of
absolutely separating property with respect to partition $2|4$ by
proposition~\ref{proposition-ctit}, $\Phi_{\alpha\beta}$ is
absolutely separating with respect to multipartition $2|2|2$ by
equation~\eqref{ctit-2-2-2}. The dashed square represents a necessary
condition of absolute separating property with respect to any
bipartition $m|n$.}
\end{figure}

Let us consider a necessary condition of the absolutely separating
property of $\Phi_{\alpha\beta}$.

\begin{proposition}
\label{proposition-ctit-necessary} Suppose
$\Phi_{\alpha\beta}:\mathcal{S}(\mathcal{H}_{mn}) \mapsto
\mathcal{S}(\mathcal{H}_{mn})$ is absolutely separating with
respect to partition $m|n$, then
$\max(|\alpha+\beta|,|\alpha-\beta|) \leqslant 2$.
\end{proposition}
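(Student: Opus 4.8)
The plan is to argue by contraposition through the necessary condition~\eqref{necessary}: since every output state $\Phi_{\alpha\beta}[\varrho]$ of an absolutely separating map must obey $\lambda_1\le\lambda_{mn-2}+\lambda_{mn-1}+\lambda_{mn}$, it suffices to exhibit, whenever $\max(|\alpha+\beta|,|\alpha-\beta|)>2$, one pure input state whose image violates this inequality. The key structural fact (already used in the proof of proposition~\ref{proposition-ctit-2-n}) is that $\Phi_{\alpha\beta}[\ket\psi\bra\psi]=(mn+\alpha+\beta)^{-1}(I+\alpha\ket\psi\bra\psi+\beta\ket{\overline\psi}\bra{\overline\psi})$ acts nontrivially only on $\mathrm{span}\{\ket\psi,\ket{\overline\psi}\}$, so its spectrum is completely explicit and controlled by $|\ip{\psi}{\overline\psi}|\in[0,1]$. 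I will use the standing positivity constraints $1+\alpha\ge0$, $1+\beta\ge0$, $1+\alpha+\beta\ge0$ together with $mn\ge4$, and the observation that $\Phi_{\beta\alpha}[\varrho]=\Phi_{\alpha\beta}[\varrho^{\top}]$ for every $\varrho$, so the attainable output spectra are invariant under $\alpha\leftrightarrow\beta$ and I may assume $\alpha\ge\beta$.

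To control $\alpha+\beta$ I would feed in a vector $\ket\psi$ real in the fixed basis, so $\ket{\overline\psi}=\ket\psi$ and the output has eigenvalues $\frac{1+\alpha+\beta}{mn+\alpha+\beta}$ (nondegenerate) and $\frac{1}{mn+\alpha+\beta}$ with multiplicity $mn-1\ge3$. If $\alpha+\beta>2$, the largest eigenvalue strictly exceeds the sum of the three smallest, contradicting~\eqref{necessary}; hence $\alpha+\beta\le2$, and with $\alpha+\beta\ge-1$ this gives $|\alpha+\beta|\le2$. For $|\alpha-\beta|$, note that when $\alpha\beta\ge0$ the bound is automatic (if $\alpha,\beta\ge0$ then $|\alpha-\beta|\le\alpha+\beta\le2$; if $\alpha,\beta\le0$ then $|\alpha-\beta|\le-(\alpha+\beta)\le1$), so the remaining case is $\alpha>0>\beta$. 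There I would take a unit vector with $\ip{\psi}{\overline\psi}=0$, which exists since $mn\ge2$ — for instance $\tfrac{1}{\sqrt2}(\ket{e_1}+\rmi\ket{e_2})$ for two basis vectors $\ket{e_1},\ket{e_2}$ — so that $\ket\psi\perp\ket{\overline\psi}$ and the output has eigenvalues $\frac{1+\alpha}{mn+\alpha+\beta}$, $\frac{1+\beta}{mn+\alpha+\beta}$ (each nondegenerate) and $\frac{1}{mn+\alpha+\beta}$ with multiplicity $mn-2\ge2$. Because $1+\alpha\ge1\ge1+\beta$ in this sign regime, the largest eigenvalue is $\frac{1+\alpha}{mn+\alpha+\beta}$ while the three smallest are $\frac{1}{mn+\alpha+\beta}$, $\frac{1}{mn+\alpha+\beta}$ and $\frac{1+\beta}{mn+\alpha+\beta}$; substituting into~\eqref{necessary} yields $1+\alpha\le3+\beta$, i.e. $|\alpha-\beta|\le2$. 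Collecting the two bounds finishes the proof.

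The computations are elementary, so I do not anticipate a genuine obstacle; the one point that needs care is the bookkeeping of the sign cases for $\alpha,\beta$. The "three smallest eigenvalues" must be identified correctly, and the virtue of reducing to $\alpha>0>\beta$ in the second step is precisely that then $1+\beta$ lies strictly below the degenerate block of $1$'s, so no further case split on the value of $mn$ is required. One should also record that the normalisation $mn+\alpha+\beta$ is strictly positive (it is $\ge mn-1\ge3$), so that dividing by it preserves the ordering of eigenvalues used in applying~\eqref{necessary}.
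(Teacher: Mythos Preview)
Your proof is correct and follows essentially the same route as the paper's: apply the necessary condition~\eqref{necessary} to the explicit spectrum of $\Phi_{\alpha\beta}[\ket\psi\bra\psi]$ for pure inputs with $|\ip{\psi}{\overline\psi}|\in\{0,1\}$. The only difference is organisational --- the paper bundles both cases into the single formula $\lambda_1,\lambda_{mn}=1+\tfrac12(\alpha+\beta)\pm\tfrac12\max(|\alpha+\beta|,|\alpha-\beta|)$ (attained at whichever overlap value maximises the square root) and reads off $M\le2$ in one line, whereas you bound $|\alpha+\beta|$ and $|\alpha-\beta|$ separately with an explicit case split on the signs of $\alpha,\beta$.
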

\begin{proof}
If $\Phi_{\alpha\beta} \in \text{PAS}(m|n)$, then
equation~\eqref{necessary} is to be satisfied for the spectrum of
states $\Phi_{\alpha\beta}[\varrho]$. Let us recall that the
largest and smallest eigenvalues of the operator
$(mn+\alpha+\beta)\Phi_{\alpha\beta}[\ket{\psi}\bra{\psi}]$ read
$\lambda_1 = 1+\frac{1}{2}(\alpha + \beta)+\frac{1}{2}
\max(|\alpha+\beta|,|\alpha-\beta|)$ and $\lambda_{2n} =
1+\frac{1}{2}(\alpha + \beta) - \frac{1}{2}
\max(|\alpha+\beta|,|\alpha-\beta|)$, respectively. Eigenvalues
$\lambda_{2} = \ldots = \lambda_{mn-1} = 1$. Substituting
$\lambda_1$, $\lambda_{mn-2}$, $\lambda_{mn-1}$, $\lambda_{mn}$
into equation~\eqref{necessary}, we get
$\max(|\alpha+\beta|,|\alpha-\beta|) \leqslant 2$.
\end{proof}

The obtained necessary condition does not depend on $m$ and $n$
and is universal for the maps $\Phi_{\alpha\beta}$.

Finally, by proposition~\ref{proposition-N-qubits},
$\Phi_{\alpha\beta}:\mathcal{S}(\mathcal{H}_{2^N}) \mapsto
\mathcal{S}(\mathcal{H}_{2^N})$ is absolutely separating with
respect to $N$-partition $2|\ldots|2$ if
\begin{equation}
\label{ctit-2-2-2}  2^N + 2(\alpha + \beta) + |\alpha\beta| +
\alpha\beta + \alpha^2 +
\beta^2 \leqslant \frac{(2^N+\alpha+\beta)^2}{2^N} \left( 1 +
\frac{54}{17} 3^{-N} \right).
\end{equation}

\noindent As an example we illustrate the region of parameters
$\alpha,\beta$, where $\Phi_{\alpha\beta}$ is $\text{PAS}(2|2|2)$,
see figure~\ref{figure5}.

\subsection{Bipartite depolarizing channel}
\label{subsection-bipartite-depolarizing}

Suppose a bipartite physical system whose parts are far apart from
each other, then the interaction with individual environments
leads to local noises, for instance, local depolarization
considered in section~\ref{subsection-depolarizing}. In contrast,
if the system is compact enough to interact with the common
environment as a whole, the global noise takes place. As an
example, the global depolarization is a map $\Phi_{\alpha,0}$
considered in section~\ref{subsection-ctit}. In general, two parts
of a composite system $AB$ can be separated in such a way that
both global and local noises affect it. Combination of global and
local depolarizing maps results in the map
$\Phi:\mathcal{S}(\mathcal{H}_{m}^A \otimes \mathcal{H}_{n}^B)
\mapsto \mathcal{S}(\mathcal{H}_{m}^A \otimes \mathcal{H}_{n}^B)$
\begin{equation}
\label{Phi-alpha-beta-gamma} \Phi_{\alpha\beta\gamma}[X] =
\frac{I_{mn} {\rm tr}[X] + \alpha I_{m} \otimes {\rm tr}_A[X] +
\beta {\rm tr}_B[X] \otimes I_n + \gamma X}{mn + \alpha m + \beta
n + \gamma},
\end{equation}

\noindent whose positivity and entanglement annihilating
properties were explored in the paper~\cite{lami-huber-2016}.

The output purity reads
\begin{eqnarray}
&& \fl {\rm tr}\left[ \left(\Phi_{\alpha\beta\gamma}[\varrho]
\right)^2
\right] = (mn + \alpha m + \beta n + \gamma)^{-2}\nonumber\\
&& \fl \times \!\big\{\! mn \!+ \!2(\alpha m \!+ \!\beta n \!+\!
\alpha \beta \!+\! \gamma)\! + \! \gamma^2 {\rm tr}[\varrho^2]  \!
+\! (\alpha^2 m\! +\! 2 \alpha \gamma) {\rm tr}[\varrho_B^2] \!+\!
(\beta^2 n \! + \! 2\beta\gamma) {\rm tr}[\varrho_A^2] \big\},
\end{eqnarray}

\noindent where $\varrho_A = {\rm tr}_B \varrho$ and $\varrho_B =
{\rm tr}_A \varrho$. Also, we have taken into account the fact
that ${\rm tr}[\varrho (I_m \otimes \varrho_B)] = {\rm
tr}[\varrho_B^2]$ and ${\rm tr}[\varrho (\varrho_A \otimes I_n)] =
{\rm tr}[\varrho_A^2]$.

Let us recall that $\Phi_{\alpha\beta\gamma}$ is absolutely
separating with respect to partition $m|n$ if and only if
$\Phi_{\alpha\beta\gamma}[\ket{\psi}\bra{\psi}] \in
\mathcal{A}(m|n)$ for all pure states $\ket{\psi} \in
\mathcal{S}(\mathcal{H}_{mn})$. It means that we can restrict
ourselves to the analysis of pure input states $\varrho =
\ket{\psi}\bra{\psi}$ satisfying ${\rm tr}[\varrho^2] = 1$. On the
other hand, reduced density operators $\varrho_A$ and $\varrho_B$
have the same spectra if $\varrho$ is pure, therefore ${\rm
tr}[\varrho_A^2] = {\rm tr}[\varrho_B^2] = \mu \in
[\frac{1}{\min(m,n)},1]$. Thus,
\begin{eqnarray}
&& \fl {\rm tr}\left[
\left(\Phi_{\alpha\beta\gamma}[\ket{\psi}\bra{\psi}] \right)^2
\right] = (mn + \alpha m + \beta n + \gamma)^{-2} \nonumber\\
&& \fl \times \big\{ mn + 2(\alpha m + \beta n + \alpha \beta +
\gamma) + \gamma^2  + [\alpha^2 m + \beta^2 n + 2 \gamma (\alpha +
\beta)] \mu \big\}. \label{abc-output-purity}
\end{eqnarray}

If $\alpha^2 m + \beta^2 n + 2 \gamma (\alpha + \beta) > 0$, then
expression~\eqref{abc-output-purity} achieves its maximum at a
factorized state $\ket{\psi} = \ket{\phi}_A \otimes \ket{\chi}_B$,
when $\mu = 1$. If $\alpha^2 m + \beta^2 n + 2 \gamma (\alpha +
\beta) < 0$, then expression~\eqref{abc-output-purity} achieves
its maximum at the maximally entangled state $\ket{\psi} =
\frac{1}{\sqrt{\min(m,n)}} \sum_{i=1}^{\min(m,n)} \ket{i} \otimes
\ket{i}$ with $\mu = \frac{1}{\min(m,n)}$.

Using the explicit form of the maximum output purity
\eqref{abc-output-purity} and proposition~\ref{proposition-ball},
we get the following result.

\begin{proposition}
\label{proposition-abc-sufficient}
$\Phi_{\alpha\beta\gamma}:\mathcal{S}(\mathcal{H}_{mn}) \mapsto
\mathcal{S}(\mathcal{H}_{mn})$ is $\text{PAS}(m|n)$ if $\alpha^2 m
+ \beta^2 n + 2 \gamma (\alpha + \beta) \geqslant 0$ and
\begin{equation}
(\alpha+\beta+\gamma)^2 + \alpha^2(m-1) + \beta^2 (n-1) - 1
\leqslant \frac{(\alpha m + \beta n + \gamma+1)^2}{mn-1},
\end{equation}

\noindent or $\alpha^2 m + \beta^2 n + 2 \gamma (\alpha + \beta)
\leqslant 0$ and
\begin{equation}
\label{bi-dep-max-ent} \gamma^2+2\alpha\beta - 1 + \frac{\alpha^2
m + \beta^2 n + 2\gamma(\alpha + \beta)}{\min(m,n)} \leqslant
\frac{(\alpha m + \beta n + \gamma+1)^2}{mn-1} .
\end{equation}
\end{proposition}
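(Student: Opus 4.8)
The proof will assemble ingredients that are already in place. First I would note that, since the output purity is a convex functional of the input state, $(\|\Phi_{\alpha\beta\gamma}\|_{1\rightarrow2})^2$ is attained at a pure input, for which equation~\eqref{abc-output-purity} expresses it as an affine function of $\mu={\rm tr}[\varrho_A^2]={\rm tr}[\varrho_B^2]$ with $\mu\in[1/\min(m,n),1]$. An affine function on an interval is maximized at one of the endpoints, so the sign of the slope $\alpha^2 m+\beta^2 n+2\gamma(\alpha+\beta)$ decides which: the maximum sits at $\mu=1$ (a product pure input) when the slope is nonnegative and at $\mu=1/\min(m,n)$ (a maximally entangled input) when it is nonpositive, exactly as observed in the paragraph preceding the proposition. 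The plan is then to plug the corresponding value of $\mu$ into equation~\eqref{abc-output-purity}, require the resulting maximal output purity to be at most $1/(mn-1)$, and invoke proposition~\ref{proposition-ball}, which then yields $\Phi_{\alpha\beta\gamma}\in\text{PAS}(m|n)$.

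What remains is bookkeeping. Writing $P=\alpha m+\beta n+\gamma$, the denominator in equation~\eqref{abc-output-purity} is $(mn+P)^2$. In the slope-nonnegative case (set $\mu=1$), the numerator can be rearranged, via the identity $\gamma^2+\alpha^2 m+\beta^2 n+2\alpha\beta+2\gamma\alpha+2\gamma\beta=(\alpha+\beta+\gamma)^2+\alpha^2(m-1)+\beta^2(n-1)$, into $mn+2P+Q$ with $Q=(\alpha+\beta+\gamma)^2+\alpha^2(m-1)+\beta^2(n-1)$. Clearing the denominator, the requirement $(mn-1)(mn+2P+Q)\leqslant(mn+P)^2$ telescopes: the $(mn)^2$ and $2mnP$ terms cancel and the rest collapses to $(mn-1)(Q-1)\leqslant(P+1)^2$, i.e. the first displayed inequality of the proposition. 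The slope-nonpositive case is handled the same way with $\mu=1/\min(m,n)$: the numerator becomes $mn+2P+R$ with $R=\gamma^2+2\alpha\beta+[\alpha^2 m+\beta^2 n+2\gamma(\alpha+\beta)]/\min(m,n)$, and the identical cancellation turns $(mn-1)(mn+2P+R)\leqslant(mn+P)^2$ into $(mn-1)(R-1)\leqslant(P+1)^2$, which is equation~\eqref{bi-dep-max-ent}.

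The main obstacle, such as it is, is purely computational: one must spot the quadratic identity above and then verify that, after expanding $(mn-1)\cdot(\text{numerator})$ and $(mn+P)^2$, every monomial of positive degree in $mn$ beyond those forming $(P+1)^2$ cancels. This is routine algebra but should be carried out with care. It is also worth recording that the borderline case $\alpha^2 m+\beta^2 n+2\gamma(\alpha+\beta)=0$ lies in both branches simultaneously, so the two stated conditions together cover the whole admissible region of $(\alpha,\beta,\gamma)$.
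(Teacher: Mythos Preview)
Your proposal is correct and follows exactly the approach the paper uses: compute the maximal output purity from equation~\eqref{abc-output-purity} by maximizing over $\mu$ at the appropriate endpoint, then invoke proposition~\ref{proposition-ball}. The paper's own argument is a single sentence to this effect; your write-up fills in the algebraic simplification (the quadratic identity and the telescoping of $(mn-1)(mn+2P+Q)\leqslant(mn+P)^2$ into $(mn-1)(Q-1)\leqslant(P+1)^2$) that the paper leaves implicit.
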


If $m=n$, then equation~\eqref{bi-dep-max-ent} reduces to $(n^2 -
1)|\gamma| \leqslant |\gamma + n(\alpha + \beta) + n^2|$.

\begin{proposition}
\label{proposition-abc-necessary} Suppose
$\Phi_{\alpha\beta\gamma} \in \text{PAS}(m|n)$, then the
decreasingly ordered vectors $\boldsymbol{\lambda}$ of the form
\begin{eqnarray}
&& \label{spectrum-factorized} (1+\alpha+\beta+\gamma,
\underbrace{1+\alpha, \ldots}_{\scriptsize{\begin{array}{c}
  m-1 \\
  \text{times} \\
\end{array}}},
\underbrace{1+\beta, \ldots}_{\scriptsize{\begin{array}{c}
  n-1 \\
  \text{times} \\
\end{array}}},
\underbrace{1, \ldots}_{\scriptsize{\begin{array}{c}
  (m-1) \\
  \times (n-1) \\
  \text{times} \\
\end{array}}})^{\downarrow}, \\
&& \label{spectrum-entangled}
(1+\tfrac{\alpha+\beta}{\min(m,n)}+\gamma,
\underbrace{1+\tfrac{\alpha+\beta}{\min(m,n)},
\ldots}_{\scriptsize{\begin{array}{c}
  [\min(m,n)]^2-1 \\
  \text{times} \\
\end{array}}}, \underbrace{1,
\ldots}_{\scriptsize{\begin{array}{c}
  mn \\
  -[\min(m,n)]^2 \\
  \text{times} \\
\end{array}}})^{\downarrow} \qquad
\end{eqnarray}

\noindent must satisfy $\lambda_1^{\downarrow} \leqslant
\lambda_{mn-2}^{\downarrow} + \lambda_{mn-1}^{\downarrow} +
\lambda_{mn}^{\downarrow}$ and $\lambda_{mn}^{\downarrow}
\geqslant 0$.
\end{proposition}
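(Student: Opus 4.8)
The plan is to restrict to pure input states and then impose, on two carefully chosen pure states, the two necessary conditions for absolute separability: the output of a positive trace‑preserving map is a density operator (hence positive semidefinite), and inequality~\eqref{necessary} must hold for its decreasingly ordered spectrum. Concretely, if $\Phi_{\alpha\beta\gamma}\in\text{PAS}(m|n)$ then $\Phi_{\alpha\beta\gamma}[\ket{\psi}\bra{\psi}]\in\mathcal{A}(m|n)$ for every unit vector $\ket{\psi}\in\mathcal{H}_{mn}$, so its ordered eigenvalues satisfy $\lambda_1^{\downarrow}\leqslant\lambda_{mn-2}^{\downarrow}+\lambda_{mn-1}^{\downarrow}+\lambda_{mn}^{\downarrow}$ together with $\lambda_{mn}^{\downarrow}\geqslant 0$. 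Both relations are invariant under multiplication of all eigenvalues by the fixed positive constant $mn+\alpha m+\beta n+\gamma$, so it suffices to diagonalize the un‑normalized output $\Xi[\varrho]:=(mn+\alpha m+\beta n+\gamma)\,\Phi_{\alpha\beta\gamma}[\varrho]=I_{mn}+\alpha\,I_m\otimes\varrho_B+\beta\,\varrho_A\otimes I_n+\gamma\,\varrho$ for $\varrho=\ket{\psi}\bra{\psi}$, where $\varrho_A={\rm tr}_B\varrho$ and $\varrho_B={\rm tr}_A\varrho$. The reduction to pure inputs is justified by linearity of $\Phi_{\alpha\beta\gamma}$ and convexity of $\mathcal{S}(\mathcal{H}_{mn})$, exactly as in the proof of proposition~\ref{proposition-ctit-2-n}.

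First I would take the factorized pure state $\ket{\psi}=\ket{\phi}_A\otimes\ket{\chi}_B$, so that $\varrho_A=\ket{\phi}\bra{\phi}$ and $\varrho_B=\ket{\chi}\bra{\chi}$. Splitting $\mathcal{H}_m=\mathbb{C}\ket{\phi}\oplus\ket{\phi}^{\perp}$ and $\mathcal{H}_n=\mathbb{C}\ket{\chi}\oplus\ket{\chi}^{\perp}$, all four terms of $\Xi[\varrho]$ are simultaneously block‑diagonal in the induced decomposition of $\mathcal{H}_{mn}$ into blocks of dimension $1$, $m-1$, $n-1$, $(m-1)(n-1)$, and on each block $\Xi[\varrho]$ acts as a scalar: $1+\alpha+\beta+\gamma$ on the first, $1+\alpha$ on the second, $1+\beta$ on the third, and $1$ on the fourth. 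Ordering these eigenvalues decreasingly reproduces the vector~\eqref{spectrum-factorized}, and imposing the two requirements on it is the first half of the claim.

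Next I would take the maximally entangled pure state $\ket{\psi}=\tfrac{1}{\sqrt{r}}\sum_{i=1}^{r}\ket{i}\otimes\ket{i}$ with $r=\min(m,n)$, for which $\varrho_A$ and $\varrho_B$ are $1/r$ times rank‑$r$ projectors and $\ket{\psi}$ lies in their joint support, a subspace of dimension $r^2$. Since $I_m\otimes\varrho_B$ and $\varrho_A\otimes I_n$ commute and $\varrho$ is supported inside that joint‑support block, $\Xi[\varrho]$ is block‑diagonal with respect to the supports and kernels of the reduced operators; on the joint‑support block it equals $(1+\tfrac{\alpha+\beta}{r})I+\gamma\ket{\psi}\bra{\psi}$, contributing the eigenvalue $1+\tfrac{\alpha+\beta}{r}+\gamma$ once and $1+\tfrac{\alpha+\beta}{r}$ with multiplicity $r^2-1$, while on the complementary blocks $\Xi[\varrho]$ acts as a scalar and the remaining $mn-r^2$ eigenvalues are read off at once. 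Collecting all eigenvalues in decreasing order yields the vector~\eqref{spectrum-entangled}, and imposing $\lambda_1^{\downarrow}\leqslant\lambda_{mn-2}^{\downarrow}+\lambda_{mn-1}^{\downarrow}+\lambda_{mn}^{\downarrow}$ and $\lambda_{mn}^{\downarrow}\geqslant 0$ on it finishes the proof.

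The only step needing care is this second spectrum computation: one must organize the Hilbert‑space splitting around the supports and kernels of $\varrho_A$ and $\varrho_B$ (rather than around $\ket{\psi}$ alone) in order to see that the three additive terms are simultaneously block‑diagonalized, and then keep track of which block is perturbed by $\gamma\varrho$. Everything else — the reduction to pure inputs and the rescaling invariance of~\eqref{necessary} — is routine.
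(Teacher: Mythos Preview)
Your approach is exactly that of the paper: select the pure product state and the maximally entangled state as test inputs, compute the spectrum of the un-normalized output $(mn+\alpha m+\beta n+\gamma)\,\Phi_{\alpha\beta\gamma}[\varrho]$ in each case, and then impose inequality~\eqref{necessary} together with output positivity. The paper's own proof simply asserts that \eqref{spectrum-factorized} and \eqref{spectrum-entangled} are these spectra without further detail, so your explicit block decompositions are a welcome refinement. One small remark: the reduction-to-pure-inputs argument you invoke is unnecessary here, since for a \emph{necessary} condition it suffices to exhibit particular test states, which you already do.

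There is one step that does not go through as stated. In the maximally entangled case with $r=\min(m,n)<\max(m,n)$, exactly one of the reduced states is full rank; e.g.\ if $r=n$ then $\varrho_B=\tfrac{1}{r}I_n$, so $\alpha\,I_m\otimes\varrho_B=\tfrac{\alpha}{r}I_{mn}$ contributes the scalar $\tfrac{\alpha}{r}$ to \emph{every} block, including the complementary one. Carrying out your computation then gives eigenvalue $1+\tfrac{\alpha}{r}$ (respectively $1+\tfrac{\beta}{r}$ when $r=m$) on the complementary block of dimension $mn-r^2$, not the value $1$ listed in~\eqref{spectrum-entangled}. This discrepancy is already present in the paper's formula, and the method remains valid; but your sentence ``collecting all eigenvalues \ldots\ yields the vector~\eqref{spectrum-entangled}'' is only literally correct when $m=n$, which is precisely the case treated in the paper's two worked examples.
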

\begin{proof}
Equations~\eqref{spectrum-factorized} and \eqref{spectrum-entangled}
are nothing else but the spectra of the operator $I_{mn} {\rm
tr}[\varrho] + \alpha I_{m} \otimes {\rm tr}_A[\varrho] + \beta
{\rm tr}_B[\varrho] \otimes I_n + \gamma \varrho$ for the
factorized state $\varrho = \ket{\varphi}\bra{\varphi} \otimes
\ket{\chi}\bra{\chi} \in \mathcal{S}(\mathcal{H}_m|\mathcal{H}_n)$
and the maximally entangled state $\varrho = \frac{1}{\min(m,n)}
\sum_{i,j=1}^{\min(m,n)} \ket{i}\bra{j} \otimes \ket{i}\bra{j} \in
\mathcal{S}(\mathcal{H}_{mn})$, respectively. Since
$\Phi_{\alpha\beta\gamma}[\varrho] \in \mathcal{A}(m|n)$, the
spectra of $\Phi_{\alpha\beta\gamma}[\varrho]$ must satisfy
equation~\eqref{necessary}, and so do the spectra
\eqref{spectrum-factorized}--\eqref{spectrum-entangled} in view of
the relation~\eqref{Phi-alpha-beta-gamma}. Requirement
$\lambda_{mn}^{\downarrow} \geqslant 0$ is merely the positivity
requirement for output density operators.
\end{proof}

\begin{figure}
\centering
\includegraphics[width=8cm]{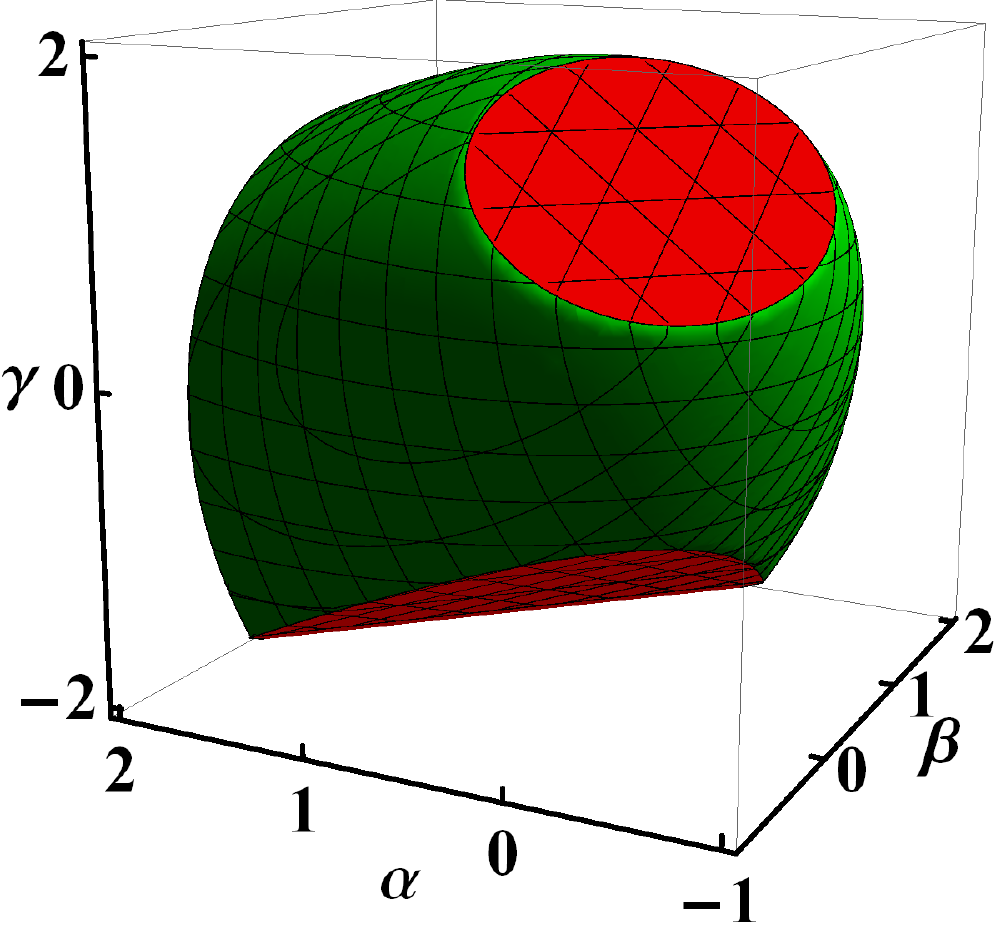}
\caption{\label{figure6} Region of parameters, where
$\Phi_{\alpha\beta\gamma}:\mathcal{S}(\mathcal{H}_4) \mapsto
\mathcal{S}(\mathcal{H}_4)$ is absolutely separating with respect
to partition $2|2$. Plane sections (red) corresponds to maximally
entangled input states, convex surface (green) corresponds to
factorized input states.}
\end{figure}

\begin{example}
Let $m=n=2$. Parameters $\alpha,\beta,\gamma$ satisfying both
propositions~\ref{proposition-abc-sufficient} and
\ref{proposition-abc-necessary} are depicted in
figure~\ref{figure6}. Note that these sufficient and (separately)
necessary conditions do coincide for parameters
$\alpha,\beta,\gamma$ in the vicinity of the upper plane section
in figure~\ref{figure6}, with the maximally entangled state being
the most resistant to absolute separability. Lower plane section
in figure~\ref{figure6} corresponds to positivity condition
$\lambda_{mn}^{\downarrow} \geqslant 0$.
\end{example}

\begin{example}
Let $m=n=3$. Parameters $\alpha,\beta,\gamma$ satisfying
proposition~\ref{proposition-abc-sufficient} are depicted by a
shaded body in figure~\ref{figure7}. Plane sections correspond to
maximally entangled states (red), and convex surface (green)
corresponds to factorized input states. A polyhedron in
figure~\ref{figure7} corresponds to
proposition~\ref{proposition-abc-necessary}. The upper and lower
faces of that polyhedron correspond to maximally entangled initial
states, and all other faces correspond to factorized input states.
\end{example}

\begin{figure}
\centering
\includegraphics[width=8cm]{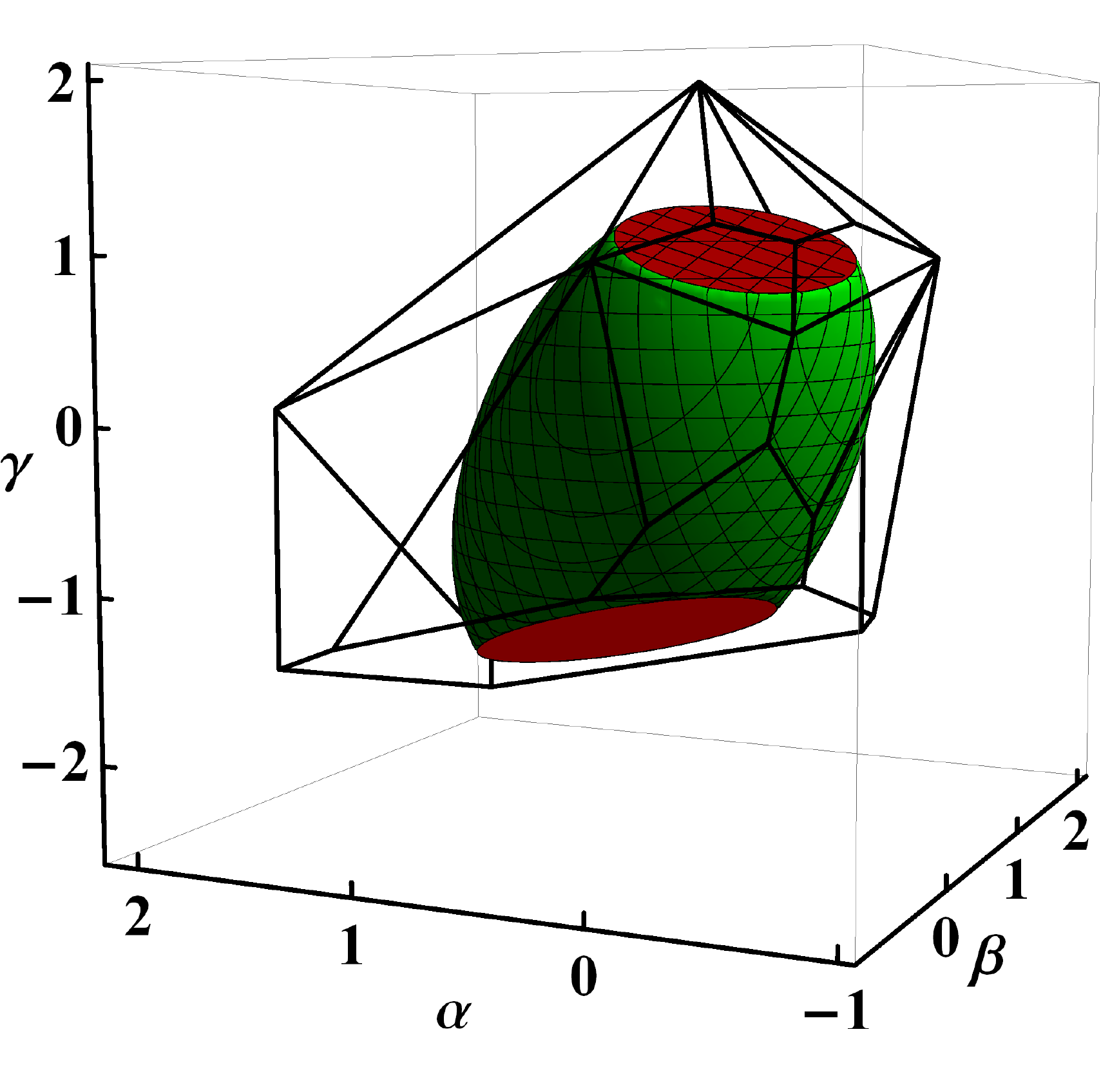}
\caption{\label{figure7} The map
$\Phi_{\alpha\beta\gamma}:\mathcal{S}(\mathcal{H}_9) \mapsto
\mathcal{S}(\mathcal{H}_9)$ is absolutely separating with respect
to partition $3|3$ by proposition~\ref{proposition-abc-sufficient}
for parameters $\alpha,\beta,\gamma$ inside the colored region
(sufficient condition). Parameters $\alpha,\beta,\gamma$ must
belong to the polyhedron for
$\Phi_{\alpha\beta\gamma}:\mathcal{S}(\mathcal{H}_9) \mapsto
\mathcal{S}(\mathcal{H}_9)$ to be absolutely separating with
respect to partition $3|3$ (necessary condition).}
\end{figure}

\section{Discussion of state robustness}
\label{section-discussion}

Let us now summarize observations of the state resistance to
absolute separability.

Suppose a dynamical process $\Phi_t$ described by a local
depolarizing or a unital $N$-qubit channel, $\bigotimes_{k=1}^N
\mathcal{D}_{q_k}$ and $\bigotimes_{k=1}^N \Upsilon^{(k)}$, with
monotonically decreasing parameters $q_k(t)$ or
$\lambda_i^{(k)}(t)$, $q_k(0)=\lambda_i^{(k)}(0) = 1$. Then the
analysis of sections~\ref{subsection-depolarizing} and
\ref{subsection-Pauli} shows that a properly chosen factorized
pure initial state $\varrho = \bigotimes_{k=1}^N
\ket{\psi_k}\bra{\psi_k}$ affected by the dynamical map $\Phi_t$
remains not absolutely separable for the longer time $t$ as
compared to initially entangled states. The matter is that
factorized states exhibit a less decrease of purity in this case
as compared to entangled states whose purity decreases faster due
to the destruction of correlations.

State robustness is irrelevant to the initial degree of state
entanglement in the case of evolution under a linear combination
of global tracing, identity, and transposition maps. The only fact
that matters is the initial state purity ($\varrho =
\ket{\psi}\bra{\psi}$) and the overlap with the transposed state
($|\ip{\psi}{\overline{\psi}}|^2$).

In the case of combined local and global noises
(section~\ref{subsection-bipartite-depolarizing}), robust states
can be either entangled (for dominating global noise) or
factorized (for dominating local noise). In fact, domination of
the global depolarizing noise corresponds to large values of
$\gamma > 0$ and small values of $\alpha,\beta \leqslant 0$, when
$\alpha^2 m +\beta^2 n + 2\gamma(\alpha+\beta) \leqslant 0$ and
the maximally entangled states exhibits higher output purity than
factorized states. On the other hand, for local depolarizing
channels $\Phi_{q_1} \otimes \Phi_{q_2}$ we have $\alpha = \frac{n
q_2}{1-q_2}$, $\beta = \frac{m q_1}{1-q_1}$, and $\gamma =
\frac{mn q_1 q_2}{(1-q_1)(1-q_2)}$; a direct calculation yields
$\alpha^2 m +\beta^2 n + 2\gamma(\alpha+\beta) > 0$ if $\Phi_{q_1}
\otimes \Phi_{q_2}$ is positive, i.e. factorized initial states
result in a larger output purity when local noises dominate.

\section{Conclusions}
\label{section-conclusions}

In this paper, we have revised the notion of absolutely separable
states with respect to bipartitions and multipartitions. In
particular, we have found an interesting example of the
three-qubit state, which is absolutely separable with respect to
partition $2|4$, and consequently is separable with respect to any
bipartition $\mathcal{H}_8 = \mathcal{H}_2 \otimes \mathcal{H}_4$,
yet is not separable with respect to tripartition $\mathcal{H}_2
\otimes \mathcal{H}_2 \otimes \mathcal{H}_2$.

We have introduced the class of absolutely separating maps and
explored their basic properties. This class is a subset of
entanglement annihilating maps. Even in the case of local maps, a
set of absolutely separating channels is not a subset of
entanglement breaking channels. In general, a map can be positive
and absolutely separating even if it is not completely positive.
We have shown that one-sided channels cannot be absolutely
separating, i.e. entanglement of the output state can always be
recovered by a proper choice of the input state and the unitary
operation applied afterwards. Even if the maps $\Phi_1$ and
$\Phi_2$ are absolutely separating with respect to partitions
$m_1|n_1$ and $m_2|n_2$, the tensor product $\Phi_1 \otimes
\Phi_2$ can still be not absolutely separating with respect to
partition $m_1 m_2 | n_1 n_2$. Global depolarizing maps are
absolutely separating if and only if they are entanglement
annihilating.

We have also analyzed $N$-tensor-stable absolutely separating maps
$\Phi$, whose tensor power $\Phi^{\otimes N}$ is absolutely
separating with respect to any valid partition. The greater $N$,
the closer an $N$-tensor-stable absolutely separating map
$\Phi:\mathcal{S}(\mathcal{H}_d) \mapsto
\mathcal{S}(\mathcal{H}_d)$ to the tracing map ${\rm Tr}[\varrho]
= {\rm tr}[\varrho] \frac{1}{d} I_d$. In fact, the tracing map is
the only map that is $N$-tensor-stable absolutely separating for
all $N$.

Particular characterization of absolutely separating property is
fulfilled for specific families of local and global maps. We have
fully determined parameters of two-qubit local depolarizing
absolutely separating maps $\text{PAS}(2|2)$ and provided
sufficient conditions for local Pauli maps. The factorized pure
states are shown to be the most robust to the loss of property
being not absolutely separable under the action of local noises.
Global noises are studied by examples of generalized Pauli
channels and combination of tracing map, transposition, and
identity transformation. Finally, the combination of local and
global noises is studied by an example of so-called bipartite
depolarizing maps. Robust states are shown to be either entangled
or factorized depending on the prevailing noise component: global
or local.

\ack The authors thank M\'{a}rio Ziman for fruitful discussions.
The authors are grateful to anonymous referees, whose valuable
comments helped to strengthen the results (in particular,
proposition~\ref{proposition-concatenation}) and stimulated the
analysis on $N$-tensor-stable absolutely separating maps
(section~\ref{section-n-tensor-stable-as}). The study is supported
by Russian Science Foundation under project No. 16-11-00084 and
performed in Moscow Institute of Physics and Technology.

\section*{References}


\begin{thebibliography}{99}

    \bibitem{horodecki-2009}
    Horodecki R, Horodecki P, Horodecki M and  Horodecki K 2009 {\it Rev. Mod. Phys.} {\bf 81} 865

    \bibitem{nielsen-2000}
    Nielsen M A and Chuang I L 2000, {\it Quantum Computation and
        Quantum Information} (Cambridge: Cambridge University Press)

    \bibitem{heinosaari-ziman-2012}
    Heinosaari T and Ziman M 2012, {\it The Mathematical Language of
        Quantum Theory} (Cambridge: Cambridge University Press)

    \bibitem{yurke-1992}
    Yurke B and Stoler D 1992 {\it Phys. Rev. A} {\bf 46} 2229

    \bibitem{zukowski-1998}
    \.{Z}ukowski M, Zeilinger A, Horne M A and  Ekert A K 1993
    {\it Phys. Rev. Lett.} {\bf 71} 4287

    \bibitem{pan-1998}
    Pan J-W, Bouwmeester D, Weinfurter H and  Zeilinger A 1998 {\it Phys. Rev. Lett.} {\bf 80} 3891

    \bibitem{reiter-2012}
    Reiter F, Kastoryano M J and S{\o}rensen A S 2012 {\it New J. Phys.} {\bf 14} 053022

    \bibitem{lin-2013}
    Lin Y, Gaebler J P, Reiter F, Tan T R, Bowler R,
    S{\o}rensen A S, Leibfried D and Wineland D J 2013 {\it Nature} {\bf 504} 415

    \bibitem{breuer-2002}
    Breuer H-P and Petruccione F 2002, {\it The Theory of Open Quantum
        Systems} (New York: Oxford University Press)

    \bibitem{holevo-2012}
    Holevo A S 2012, {\it Quantum Systems, Channels, Information} (Berlin: Walter
    de Gruyter)

    \bibitem{aolita-2015}
    Aolita L, de Melo F and Davidovich L 2015 {\it Rep. Prog. Phys.} {\bf 78} 042001

    \bibitem{filippov-2014}
    Filippov S N 2014 {\it J. Russ. Laser Res.} {\bf
        35} 484

    \bibitem{filippov-ziman-2013}
    Filippov S N and Ziman M 2013 {\it Phys. Rev. A} {\bf 88}
    032316

    \bibitem{moravcikova-ziman-2010}
    Morav\v{c}\'{i}kov\'{a} L and Ziman M 2010 {\it J. Phys. A: Math. Theor.} {\bf
        43} 275306

    \bibitem{filippov-rybar-ziman-2012}
    Filippov S N, Ryb\'{a}r T and Ziman M 2012 {\it Phys. Rev. A} {\bf 85} 012303

    \bibitem{Kus}
    Ku\'{s} M and Zyczkowski K 2001 {\it Phys.
        Rev. A} {\bf 63} 032307

    \bibitem{Verstraete}
    Verstraete F, Audenaert K and  De Moor B 2001 {\it Phys. Rev. A} {\bf 64} 012316

    \bibitem{Hildebrand}
    Hildebrand R 2007 {\it Phys. Rev.
        A} {\bf 76} 052325

    \bibitem{Johnston}
    Johnston N 2013
    {\it Phys. Rev. A} {\bf 88} 062330

    \bibitem{Jivulescu}
    Jivulescu M A, Lupa N, Nechita I and Reeb D 2015 {\it Lin. Algebra Appl.} {\bf 469} 276


    \bibitem{arunachalam-2015}
    Arunachalam S, Johnston N and Russo V 2015 {\it Quantum
        Information \& Computation} {\bf 15} 0694

    \bibitem{NatRus}
    Nathanson M and Ruskai M B 2007 {\it J. Phys A: Math. Theor.} {\bf 40} 8171

    \bibitem{werner-holevo-2002}
    Werner R F and Holevo A S 2002 {\it J. Math. Phys.}
    {\bf 43} 4353

    \bibitem{lami-huber-2016}
    Lami L and Huber M 2016 {\it J. Math.
        Phys.} {\bf 57} 092201

    \bibitem{werner-1989}
    Werner R F 1989 {\it Phys. Rev. A} {\bf
        40} 4277

    \bibitem{amosov-filippov-2017}
    Amosov G G and Filippov S N 2017
    {\it Quantum Inf. Process.} {\bf 16} 2, doi:10.1007/s11128-016-1467-9

    \bibitem{bohnet-waldraff-2017}
    Bohnet-Waldraff F, Giraud O and Braun D 2017 {\it Phys. Rev. A} {\bf 95} 012318

    \bibitem{peres-1996}
    Peres A 1996 {\it Phys. Rev.
        Lett.} {\bf 77} 1413

    \bibitem{horodecki-1996}
    Horodecki M, Horodecki P, Horodecki R 1996 {\it Phys. Lett. A} {\bf
        223} 1

    \bibitem{zyczkowski-1998}
    Zyczkowski K, Horodecki P, Sanpera A and Lewenstein M 1998 {\it Phys. Rev. A} {\bf 58} 883

    \bibitem{Gurvits}
    Gurvits L and Barnum H 2002 {\it Phys. Rev A} {\bf 66}
    062311

    \bibitem{Gurvits2}
    Gurvits L and Barnum H 2005 {\it Phys. Rev A} {\bf 72} 032322

    \bibitem{Hildebrand-2007}
    Hildebrand R 2007 {\it Phys. Rev. A} {\bf 75} 062330

    \bibitem{chen-2015}
    Chen X-Y, Jiang L-Z, Yu P and Tian M 2015 {\it Quantum
        Inf. Proc.} {\bf 14} 2463

    \bibitem{han-2016}
    Han K H and Kye S-H 2017 {\it J. Phys. A: Math. Theor.} {\bf 50} 145303

    \bibitem{bennett-1999}
    Bennett C H, DiVincenzo D P, Mor T, Shor P W, Smolin J A
    and Terhal B M 1999 {\it Phys. Rev. Lett.} {\bf 82} 5385

    \bibitem{divincenzo-2003}
    DiVincenzo D P, Mor T, Shor P W, Smolin J A and
    Terhal B M 2003 {\it Commun. Math. Phys.} {\bf 238} 379

    \bibitem{filippov-melnikov-ziman-2013}
    Filippov S N, Melnikov A A and Ziman M 2013 {\it Phys. Rev. A} {\bf 88} 062328

    \bibitem{F-M}
    Filippov S N and Magadov K Yu 2017 {\it J. Phys. A:
        Math. Theor.} {\bf 50} 055301

    \bibitem{rivas-2010}
    Rivas \'{A}, Huelga S F and Plenio M B 2010 {\it Phys. Rev. Lett.} {\bf
        105} 050403

    \bibitem{benatti-2017}
    Benatti F, Chru\'{s}ci\'{n}ski D and Filippov S 2017 {\it Phys. Rev. A} {\bf 95} 012112

    \bibitem{holevo-1998}
    Holevo A S 1998 {\it Russ. Math. Surveys} {\bf
        53} 1295

    \bibitem{king-2002}
    King C 2002 {\it J. Math. Phys.} {\bf 43} 1247

    \bibitem{shor-2002}
    Shor P W 2002 {\it J. Math. Phys.} {\bf 43}
    4334

    \bibitem{ruskai-2003}
    Ruskai M B 2003 {\it Rev. Math.
        Phys.} {\bf 15} 643

    \bibitem{horodecki-2003}
    Horodecki M, Shor P W and Ruskai M B 2003 {\it Rev. Math. Phys.} {\bf 15} 629

    \bibitem{luchnikov-filippov-2017}
    Luchnikov I A and Filippov S N 2017 {\it Phys. Rev. A} {\bf
        95} 022113

    \bibitem{Uhlmann}
    Uhlmann A 1977 {\it Commun. Math. Phys.} {\bf 54} 21

    \bibitem{KR04}
    King C and Ruskai M B 2004 {\it Quantum Inf. Comput.} {\bf 4} 500

    \bibitem{KNR05}
    King C, Nathanson M and Ruskai M B 2005 {\it Lin. Algebra Appl.} {\bf
        404} 367

    \bibitem{WAT05}
    Watrous J 2005 {\it Quantum Inf. Comput.} {\bf 5} 58

    \bibitem{muller-hermes-2016}
    M\"{u}ller-Hermes A, Reeb D and Wolf M M 2016 {\it J. Math. Phys.}
    {\bf 57} 015202

    \bibitem{fannes-1973}
    Fannes M 1973 {\it Commun. Math. Phys.} {\bf 31} 291

    \bibitem{schumacher-1995}
    Schumacher B 1995 {\it Phys. Rev. A} {\bf 51} 2738

    \bibitem{amosov-2002}
    Amosov G G and Holevo A S 2003 {\it Theor. Prob. Appl.} {\bf 47} 123

    \bibitem{king-2003}
    King C 2003 {\it IEEE
        Trans. Inf. Theory} {\bf 49} 221

    \bibitem{R-Z-W}
    Ruskai M B, Szarek S and Werner E 2002 {\it Linear Algebra
        Appl.} {\bf 347} 159

    \bibitem{king-2-2002}
    King C 2002 {\it J. Math. Phys.} {\bf
        43} 4641

    \bibitem{wootters-1989}
    Wootters W K and Fields B D 1989 {\it Annals of Physics} {\bf 191} 363

    \bibitem{choi-1975}
    Choi M-D 1975 {\it Lin. Algebra Appl.} {\bf 10} 285

    \bibitem{jamiolkowski-1972}
    Jamio{\l}kowski A 1972 {\it Rep. Math. Phys.} {\bf 3} 275

\end{thebibliography}
\end{document}